\providecommand*{\toclevel@algorithm}{0}
\newtheorem{thm}{Theorem}
\newtheorem{lem}{Lemma}
\newtheorem{prop}{Proposition}
\theoremstyle{remark}
\newtheorem{rem}{Remark}
\newcommand{\C}{\mathcal{C}}
\newcommand{\F}{\mathcal{F}}
\newcommand{\N}{\mathbb{N}}
\newcommand{\UU}{\textrm{U}}
\newcommand{\calD}{\mathcal{D}}
\newcommand{\OPT}{\mathrm{OPT}}
\newcommand{\SUB}{\mathrm{SUB}}
\title{TSP Tours in Cubic Graphs: Beyond $4/3$}
\author{Jos\'e R.~Correa}
\author{Omar Larr\'e}
\author{Jos\'e A.~Soto}
\dedicatory{Universidad de Chile}
\thanks{Partially supported by N{\'u}cleo Milenio Informaci{\'o}n y Coordinaci{\'o}n en Redes ICM/FIC P10-024F. An extended abstract of a preliminary version of this work appeared in ESA 2012.}
\keywords{TSP, Cubic Graphs, Integrality Gap}
\begin{document}

\begin{abstract}
After a sequence of improvements Boyd, Sitters, van der Ster, and
Stougie proved that any 2-connected graph whose $n$ vertices have
degree 3, i.e., a cubic 2-connected graph, has a Hamiltonian tour of
length at most $(4/3)n$, establishing in particular that the
integrality gap of the subtour LP is at most 4/3 for cubic 2-connected
graphs and matching the conjectured value of the famous 4/3
conjecture. In this paper we improve upon this result by designing an
algorithm that finds a tour of length $(4/3-1/61236)n$, implying that
cubic 2-connected graphs are among the few interesting classes of
graphs for which the integrality gap of the subtour LP is strictly
less than 4/3. With the previous result, and by considering an even
smaller $\epsilon$, we show that the integrality gap of the TSP
relaxation is at most $4/3- \epsilon$ even if the graph is not
2-connected (i.e.~for cubic connected graphs), implying that the
approximability threshold of the TSP in cubic graphs is strictly below
$4/3$. Finally, using similar techniques we show, as an additional
result, that every Barnette graph admits a tour of length at most
$(4/3 - 1/18)n$.
\end{abstract}

\maketitle

\section{Introduction}\label{section:intro}

The traveling salesman problem (TSP) in metric graphs is a landmark problem in combinatorial optimization and theoretical computer science. Given a graph in which  edge-distances form a metric the goal is to find a tour of minimum length visiting each vertex at least once. In particular, understanding the approximability of the TSP has attracted much attention since Christofides \cite{C76} designed a $3/2$-approximation algorithm for
the problem. Despite the great efforts, Christofides' algorithm continues to be
the current champion, while the best known lower bound, recently obtained by Lampis \cite{L12b} states that the problem is NP-hard to approximate within a factor 185/184, which improved upon the work of Papadimitriou
and Vempala \cite{PV06}. Very recently Karpinski and Schmied \cite{KS13} obtain explicit inapproximability bounds for the cases of cubic and sub cubic graphs. A key lower bound to study the approximability of
the problem is the so-called subtour elimination linear program which has long
been known to have an integrality gap of at most $3/2$
\cite{W80}. Although no progress has been made in decreasing the integrality gap
of the subtour elimination LP, its value is conjectured to be $4/3$ (see e.g.
Goemans \cite{G95}).

\subsection{Recent related work}
There have been several improvements for important special cases of the metric TSP in the last couple of years. Oveis Gharan, Saberi, and Singh
\cite{OSS11} design a $(3/2-\epsilon)$-approximation algorithm for the case of graph metrics, while M\"omke and Svensson \cite{MS11} improve that to
$1.461$, using a different approach. Mucha \cite{M12} then showed that the approximation guarantee of the M\"omke and Svensson algorithm is $13/9$.
Finally, still in the shortest path metric case, Seb\"o and Vygen \cite{SV12} find an algorithm with a guarantee of $7/4$. These results in particular
show that the integrality gap of the subtour LP is below $3/2$ in case the metric comes from a graph. Another notable direction of recent improvements
concern the $s-t$ path version of the TSP on arbitrary metrics, where the natural extension of Christofides' heuristic guarantees a solution within a
factor of $5/3$ of optimum. An, Shmoys, and Kleinberg \cite{ASK12}, find a $(1+\sqrt{5})/2$-approximation algorithm this version of
the TSP, while Seb\"o \cite{S12} further improved this result obtaining a $8/5$-approximation algorithm.

This renewed interest in designing algorithms for the TSP in graph metrics has also reached the case when we further restrict to graph metrics induced by special classes of graphs. Gamarnik, Lewenstein, and Sviridenko \cite{GLS05} show that in a 3-connected cubic graph on $n$ vertices there is always a TSP tour --visiting each vertex at least once-- of length at most $(3/2-5/389)n$, improving upon Christofides' algorithm for this graph class. A few years later, Aggarwal, Garg, and Gupta \cite{AGG11} improved the result obtaining a bound of $(4/3)n$ while Boyd et al. \cite{BSSS11} showed that the $(4/3)n$ bound holds even if only 2-connectivity assumed. Finally, M\"omke and Svensson \cite{MS11} approach allows to prove that the $(4/3)n$ bound holds for subcubic 2-connected graphs. Interestingly, the latter bound happens to be tight and thus it may be tempting to conjecture that there are cubic graphs on $n$ vertices for which no TSP tour shorter than $(4/3)n$ exists. In this paper we show that this is not the
case. Namely, we prove that any 2-connected cubic graph on $n$ vertices has a TSP tour of length $(4/3-\epsilon)n$, for $\epsilon=1/61236>0.000016$. We further refine this result and establish that for cubic graphs, not necessarily 2-connected, there exists a  $4/3-\epsilon'$ approximation algorithm for the TSP, where $\epsilon'=\epsilon/(3 + 3\epsilon)$.

On the other hand, Qian, Schalekamp, Williamson, and van Zuylen \cite{QSWZ12} show that the integrality gap of the subtour LP is strictly less than 4/3 for metrics where all the distances are either 1 or 2. Their result, based on the work of Schalekamp, Williamson, and van Zuylen \cite{SWZ12}, constitutes the first relevant special case of the TSP for which the integrality gap of the subtour LP is strictly less than 4/3. Our result implies
in particular that the integrality gap of the subtour LP is also strictly less than 4/3 in connected cubic graphs.

From a graph theoretic viewpoint, our result can also be viewed as a step towards resolving Barnette's \cite{B69} conjecture, stating that every bipartite, planar, 3-connected, cubic graph is Hamiltonian (a similar conjecture was first formulated by Tait \cite{T84}, then refuted by Tutte \cite{T46}, then reformulated by Tutte  and refuted by Horton, see e.g., \cite{BM76}, and finally reformulated by Barnette more than 40 years ago). It
is worth mentioning that for Barnette's graphs (i.e., those with the previous properties) on $n$ vertices it is straightforward to construct TSP tours of length at most $(4/3)n$, however no better bound was known. Of course, our result improves upon this, and furthermore, in this class of graphs our bound improves to $(4/3-1/18)n< 1.28n$.

\subsection{Our approach}
An \emph{Eulerian subgraph cover} (or simply a \emph{cover}) is a collection $\Gamma =\{\gamma_1,\dots, \gamma_j\}$ of connected multi-subgraphs of $G$, called \emph{components}, satisfying that (i) every vertex of $G$ is covered by exactly one component, (ii) each component is Eulerian and (iii) no edge appears more than twice in the same component. Every cover $\Gamma$ can be transformed into a TSP tour $T(\Gamma)$ of the entire graph by contracting each component, adding a doubled spanning tree in the contracted graph (which is connected) and then uncontracting the components. Boyd et al~\cite{BSSS11} defined the \emph{contribution} of a vertex $v$ in a cover $\Gamma$, as $z_\Gamma(v) = \frac{(\ell + 2)}{h}$, where $\ell$ and $h$ are the number of edges and vertices (respectively) of the component of $\Gamma$ in which $v$ lies. The extra 2 in the numerator is added for the cost of the double edge used to connect the component to the other in the spanning tree mentioned above, so that $\sum_{
v \in V} z_\Gamma(v)$ equals the number of edges in the final TSP tour $T(\Gamma)$ plus 2. Let $\calD= \{(\Gamma_i, \lambda_i)\}_{i=1}^k$, be a distribution over covers of a graph. This is, each $\Gamma_i$ is a cover of $G$ and each $\lambda_i$ is a positive number so that $\sum_{i=1}^k \lambda_i = 1$. The \emph{average contribution} of a vertex $v$ with respect to distribution $\calD$ is defined as $z_\calD(v) = \sum_{i=1}^k \lambda_i z_{\Gamma_i}(v)$.

As the starting point of our article, we study short TSP tours on Barnette graphs (bipartite, planar, cubic and 3-connected). This type of graphs is very special as it is possible to partition its set of faces into three cycle covers i.e., Eulerian subgraph covers in which every component is a cycle.  By a simple counting argument, one of these cycle covers is composed of less than $n/6$ cycles. By using the transformation described on the previous paragraph, we can obtain a TSP tour of length at most $4n/3$. To get shorter TSP tours we describe a simple procedure that perform local operations to reduce the number of cycles of the three initial cycle covers. In a nutshell, the local operations consists in the iterative alternation of faces. That is, we replace the current cycle cover $\C$ by the symmetric difference between $\C$ and the edges of a face, provided that the resulting graph is a cycle cover with fewer cycles. By repeating this process on the three initial cycle covers, it is possible to reach a
cover with fewer than $5n/36$ cycles, which, in turn, implies the existence of a TSP tour of length at most $(4/3-1/18)n$. This result is described in Section~\ref{section:Barnette}.

The idea of applying local operations to decrease the number of components of an Eulerian subgraph cover can still be applied on more general graph classes. Given a 2-connected cubic graph $G$, Boyd et al.~\cite{BSSS11} find a TSP tour $T$ of $G$ with at most $\frac43 |V(G)|-2$ edges. Their approach has two phases. In the first phase, they transform $G$ into a simpler cubic 2-connected graph $H$ not containing certain ill-behaved structures (called $p$-rainbows, for $p\geq 1$). In the second phase, they use a linear programming approach to find a polynomial collection of perfect matchings
for $H$ such that a convex combination of them gives every edge a weight of $1/3$. Their complements induce a distribution over cycle covers of $H$. By performing certain local operations on each cover, they get a distribution of Eulerian subgraph covers having average vertex contribution bounded above by 4/3. They use this to find a TSP tour for $H$ with at most $\frac43 |V(H)| - 2$ edges, which can be easily transformed into a TSP tour of $G$ having the desired guarantee. The local operations used by Boyd et al.~\cite{BSSS11} consists of iterative alternation of 4-cycles and a special type of alternation of 5-cycles (in which some edges get doubled).

Our main result is an improvement on Boyd et al.'s technique that allows us to show that every 2-connected cubic graph $G$ admits a TSP tour with at most $(4/3 - \epsilon)|V(G)| -2$ edges. The first difference between the approaches, described in Section \ref{section:simplification}, is that our simplification phase is more aggressive. Specifically, we set up a framework to eliminate large families of structures that we use to get rid of all chorded 6-cycles. This clean-up step can very likely be extended to larger families and may ultimately lead to improved results when combined with an appropriate second phase.
The second difference, described in Section \ref{section:main}, is that we extend the set of local operations of the second phase by allowing the alternation of 6-cycles of a Eulerian subgraph cover. Again, it is likely that one can further exploit this idea to get improved guarantees. Mixing these new ideas appropriately requires significant work but ultimately leads us to find a distribution $\calD$ of covers of the simplified graph $H$ for which $\sum_{v \in V(H)} z_\calD(v) \leq (\frac43 - \epsilon) n - 2$. From there, we obtain a TSP tour of $G$ with the improved guarantee.

Our analysis allows us to set $\epsilon$ as $1/61236>0.000016$ for cubic 2-connected graphs. It is worth noting here that by adding extra hypothesis it is possible to improve this constant. For instance, the case of cubic 2-connected bipartite graphs is interesting. This type of graphs is actually 3-edge colorable; therefore, by taking the complements of the three perfect matchings induced by the coloring, we obtain a cycle cover distribution whose support has only 3 cycle covers making the problem much easier to analyze. In fact, by slightly relaxing the simplification phase we can impose that the resulting graph is still bipartite (but allowing certain type of chorded hexagons) and thus, the operation consisting on the alternation of 5-cycles will never occur. It is possible to show (see Larr\'e's Master Thesis~\cite{L12}) that this modified algorithm yields  a tour of length at most $(4/3-1/108)n-2$ for any cubic 2-connected bipartite graph.

\section{Barnette graphs}\label{section:Barnette}

Barnette~\cite{B69} conjectured that every cubic, bipartite, 3-connected planar graph is Hamiltonian. More than forty years later and despite considerable effort, Barnette's conjecture is still not settled. This motivates the definition of a {\em Barnette graph} as a cubic, bipartite, 3-connected planar graph. Even though we are not able to prove or disprove Barnette's conjecure, in this section we show that this type of graphs admit short tours.

Recall that a tour of a graph $G$ is simply a spanning Eulerian subgraph where
every edge appears at most twice. The main idea for obtaining short tours is to
find a cycle cover $\C$ of $G$ having small number of cycles. The tour
$T(\C)$ obtained by taking the union of the edges of $\C$ and a doubled
spanning tree of the multigraph obtained by contracting each cycle of $\C$
in $G$ has length $n+2|\C|-2$, where $|\C|$ is the number of cycles in $\C$.

Let $G$ be a Barnette graph on $n$ vertices. By 3-connectedness, $G$ has a
unique embedding on the sphere up to isomorphism~\cite{W33}, therefore its set of faces is
well-determined. Furthermore, it has a unique planar dual $G^*$, which is an
Eulerian planar triangulation. As Eulerian planar triangulations are known to be
3-colorable (see, e.g., \cite{TW11}), Barnette
graphs are 3-face colorable. Furthermore, finding such coloring can be done in
polynomial time.

We can use this to easily get a tour of $G$ of length at most $4n/3$ as follows.
Denote the vertex, edge and face sets of $G$ as $V=V(G)$, $E=E(G)$ and $F=F(G)$
respectively and let $c\colon F\to \{1,2,3\}$ be a proper 3-face coloring
of~$G$. Let $F(i)$ be the set of faces of color $i$. Since
the graph is cubic, $|E|=3n/2$, and by Euler's formula, $|F|=2+|E|-|V|=(n+4)/2$.
This means that there is a color $i$ such that $|F(i)| \leq (n+4)/6$.
Since $F(i)$ is a cycle cover, the tour $T(F(i))$ obtained as before has
length $n+2|F(i)|-2 \leq (4n-2)/3$.

We will devise an algorithm that find a short cycle covers of Barnette graphs by
performing certain local operations to reduce the number of cycles of the three
cycle covers given by $F(1)$, $F(2)$ and $F(3)$. Recall that an even cycle $C_0$
is \emph{alternating} for a cycle cycle cover $\C$ of $G$ if the edges
of $C_0$ alternate between edges inside $\C$ and edges outside $\C$. If $C_0$ is
an alternating cycle of $\C$ we can define a new cycle cover $\C \triangle C_0$
whose edge set is the symmetric difference between the edges of $\C$ and those
of $C_0$.

The next procedure constructs cycle covers $\C(i)$, for each $i \in
\{1,2,3\}$. Initialize $\C(i)$ as $F(j)$ for some $j\neq i$, for example $j=i+1
\pmod 3$. As we will see later, each face of $F(i)$ is alternating for $\C(i)$
at every moment of the procedure. Iteratively, check if there is a face $f$ in
$F(i)$ such that $\C(i) \triangle f$ has fewer cycles than $\C(i)$. If so,
replace $\C(i)$ by the improved cover. Do this step until no improvement is
possible to obtain the desired cover. Let $\C$ be the cycle cover of fewer cycles among the three covers found. By returning the tour associated to $\C$ we obtain Algorithm~\ref{alg:tsp_barnette}
below.

\begin{algorithm}
\caption{to find a tour on a Barnette graph $G=(V,E)$.}
\begin{algorithmic}[1]
\STATE Find a 3-face coloring of $G$ with colors in $\{1,2,3\}$.
\FOR{each $i \in \{1,2,3\}$}
\STATE $\C(i) \gets F(j)$, for $j=(i+1)\pmod 3$.
\WHILE{there is a face $f \in F(i)$ such that $|\C(i) \triangle f|<|\C(i)|$}
\STATE{$\C(i) \gets \C(i) \triangle f$.}
\ENDWHILE
\ENDFOR
\STATE Let $\C$ be the cycle cover of smalles cardinality among all
$\C(i)$.
\STATE Return the tour associated to $\C$.
\end{algorithmic}
\label{alg:tsp_barnette}
\end{algorithm}
To analyze the algorithm it will be useful to extend the initial
face-coloring to an edge-coloring of $G$, by assigning to each edge $e$ the
color in $\{1,2,3\}$ that is not present in both incident faces of $e$. Denote
as $E(i)$ the set of edges of color~$i$. Then $E(1)$, $E(2)$ and $E(3)$ are
disjoint perfect matchings and furthermore $E(i)\cup E(j)$ are exactly the edges
of $F(k)$, for $\{i,j,k\}=\{1,2,3\}$. Lemma~\ref{lem:contain} below implies,
in particular, that the algorithm is correct.

\begin{lem} \label{lem:contain} In every iteration of
Algorithm~\ref{alg:tsp_barnette}, $\C(i)$ is a cycle cover of $G$
containing $E(i)$ and every face $f \in F(i)$ is alternating for
$\C(i)$.\end{lem}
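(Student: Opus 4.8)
The natural approach is induction on the number of iterations of the while-loop (for a fixed $i$), with the inductive hypothesis being exactly the three assertions: $\C(i)$ is a cycle cover of $G$, it contains $E(i)$, and every face $f\in F(i)$ is alternating for $\C(i)$. For the base case, recall $\C(i)$ is initialized as $F(j)$ with $j=(i+1)\bmod 3$. Since $E(i)\cup E(j)$ is exactly the edge set of $F(k)$ for $\{i,j,k\}=\{1,2,3\}$, the edges of $F(j)=\C(i)$ are precisely $E(i)\cup E(k)$, which contains $E(i)$; and $F(j)$ is a genuine cycle cover because the faces of one color class of a $3$-face-coloring of a cubic graph form a set of vertex-disjoint cycles covering $V$. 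It remains to check the base case of the alternating property: for a face $f\in F(i)$, its boundary edges have colors only in $\{j,k\}$ (the color $i$ is excluded since $f$ has color $i$), and along the boundary of a face these two colors must alternate (two consecutive boundary edges share a vertex, and at a cubic vertex the two edges of a face have distinct colors). Since $\C(i)=F(j)$ consists exactly of the edges of colors $j$ and $k$, wait—one must be careful: $f\in F(i)$ has edges of colors $j$ and $k$ only, and $\C(i)=F(j)$ contains \emph{all} edges of colors $i$ and $k$. So an edge of $f$ of color $k$ lies in $\C(i)$ and an edge of color $j$ does not; since colors $j,k$ alternate around $f$, the face $f$ is alternating for $\C(i)$.

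For the inductive step, suppose the hypothesis holds before an iteration that replaces $\C(i)$ by $\C'=\C(i)\triangle f$ for some $f\in F(i)$. The loop condition only triggers this when $|\C'|<|\C(i)|$; but more basically we need that $\C'$ is again a cycle cover. Since $f$ is alternating for $\C(i)$ by hypothesis, $\C'=\C(i)\triangle f$ is indeed a cycle cover (symmetric difference of a cycle cover with one of its alternating cycles is a cycle cover, as recalled in the text preceding the algorithm): every vertex still has even degree $2$ in $\C'$ because at a vertex $v$ on $f$, two $f$-edges are toggled, one in and one out of $\C(i)$, preserving degree. For containment of $E(i)$: the edges of $f$ all have color in $\{j,k\}$, hence toggling them never removes an edge of color $i$, so $E(i)\subseteq \C(i)\subseteq\C'$ wait—$\triangle$ can remove edges, but only edges of colors $j,k$, so $E(i)$ survives. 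The last and most delicate point is that \emph{every} face $g\in F(i)$ (not just $f$) remains alternating for the new cover $\C'$. For $g=f$ this is clear, since toggling all edges of $f$ swaps in/out status of every edge of $f$, preserving alternation. For $g\neq f$: faces $f$ and $g$ of the same color class $F(i)$ are vertex-disjoint (a $3$-face-coloring is proper, and in a cubic planar graph two faces sharing a vertex share an edge and hence—being both color $i$—cannot both exist; more simply, same-colored faces share no edge, and in a cubic graph two faces through a common vertex do share an edge). Hence $E(f)\cap E(g)=\emptyset$ and no edge of $g$ is toggled, so $g$'s alternation status with respect to $\C'$ equals that with respect to $\C(i)$, which is alternating by hypothesis.

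The main obstacle, and the part deserving the most care in the write-up, is the claim that same-colored faces in $F(i)$ are pairwise vertex-disjoint and that the boundary colors of an $F(i)$-face alternate between the other two colors—these structural facts about $3$-face-colorings of cubic planar graphs are what make the whole induction go through, and they rely on the cubic and planarity hypotheses via the correspondence between the face-coloring and the Tait-type edge-coloring described just before the lemma. I would state these as a short preliminary observation and then the induction is essentially bookkeeping. One should also note explicitly, for the ``correctness'' remark, that since $\C(i)$ always contains the perfect matching $E(i)$ it has at most $n/2$ edges in the matching plus whatever else, hence is never empty and the procedure terminates (the number of cycles strictly decreases and is bounded below by $1$), so the algorithm is well-defined and outputs a valid tour.
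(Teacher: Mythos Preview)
Your proof is correct and, like the paper, proceeds by induction on the iterations of the while-loop. The one noteworthy difference is in the inductive step: you verify directly that every $g\in F(i)$ remains alternating by arguing that same-colored faces are vertex-disjoint (so toggling $f$ does not touch any edge of $g\neq f$). The paper instead observes that the alternating property is a \emph{consequence} of the other two invariants: if $\C(i)$ is a cycle cover containing $E(i)$, then at every vertex $v$ of a face $f\in F(i)$ the color-$i$ edge is in $\C(i)$, and since $v$ has degree~$2$ in $\C(i)$, exactly one of the two $f$-edges through $v$ lies in $\C(i)$; hence $f$ is alternating. This means the paper only needs to carry ``cycle cover containing $E(i)$'' through the induction, which is immediate since $f$ has no color-$i$ edges. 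Your route via vertex-disjointness of same-colored faces is equally valid and makes the planarity/cubic structure more explicit; the paper's route is a line shorter and avoids that structural claim altogether.
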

\begin{proof}
We proceed by induction. Observe that $\C(i)$ equals $F(j)$ for $j\neq i$ in
the beginning so it contains $E(i)$. Suppose the lemma holds at the
beginning of an iteration and let $f$ be a face of $F(i)$, so $f$ has no
edges of color $i$. As $\C(i)$ contains $E(i)$, and every vertex $v$ of $f$
has degree 2 in $\C(i)$ we conclude that $f$ is alternating for $\C(i)$
and thus, $\C(i) \triangle f$ is a cover containing $E(i)$.
\end{proof}

\begin{lem}
\label{lem:cycle}
Let $\C(i)$ be the cycle cover obtained at the end of the while-loop in
Algorithm~\ref{alg:tsp_barnette} and let $f$ be a face of length $2k$ in
$F(i)$, then $f$ intersects at most $\lfloor \frac{k+1}{2}\rfloor$ cycles of
$\C(i)$.
\end{lem}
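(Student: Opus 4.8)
The plan is to recast the statement as a short planar counting inequality and then feed in the local optimality of $\C(i)$.

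First, the setup. The face $f$, having colour $i$, is a cycle $v_1v_2\cdots v_{2k}$ whose $2k$ edges alternate between the two colours of $\{1,2,3\}\setminus\{i\}$; by Lemma~\ref{lem:contain} $f$ is alternating for $\C(i)$, so the edges of $f$ lying in $\C(i)$ form exactly one colour class, say $f_1=v_1v_2,\ f_3=v_3v_4,\dots,f_{2k-1}=v_{2k-1}v_{2k}$, while none of $f_2=v_2v_3,\dots,f_{2k}=v_{2k}v_1$ lies in $\C(i)$. I would then delete the $k$ edges $f_1,f_3,\dots,f_{2k-1}$ from $\C(i)$: each cycle of $\C(i)$ meeting $f$ splits into paths, and since no other edge of $f$ belongs to $\C(i)$, each maximal stretch along $f$ was a single edge $f_{2s-1}$, so altogether one gets $k$ vertex-disjoint paths $P_1,\dots,P_k$, each joining two vertices of $f$ and otherwise avoiding $f$. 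In the planar embedding fixed above, the closure of the complement of the face $f$ is a disc $R$ with boundary $f$, and the $P_\ell$ live in $R$ with endpoints on $\partial R$; hence their endpoint pairs form a non-crossing perfect matching $B$ of $\{v_1,\dots,v_{2k}\}$. Putting $A=\{f_1,f_3,\dots,f_{2k-1}\}$ and $A'=\{f_2,f_4,\dots,f_{2k}\}$ (viewed as matchings of $\{v_1,\dots,v_{2k}\}$), reassembling the cycles shows that the number $a(f)$ of cycles of $\C(i)$ meeting $f$ equals the number of cycles of the 2-regular graph $A\cup B$, and, similarly, that the number $a'(f)$ of cycles of $\C(i)\triangle f$ meeting $f$ equals the number of cycles of $A'\cup B$.

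Next I bring in optimality. The alternation $\C(i)\triangle f$ touches only the cycles meeting $f$, so $|\C(i)\triangle f|-|\C(i)|=a'(f)-a(f)$; since the while-loop has terminated, $|\C(i)\triangle f|\ge|\C(i)|$, and therefore $a'(f)\ge a(f)$. Thus it suffices to prove the combinatorial inequality $a(f)+a'(f)\le k+1$, for then $2a(f)\le k+1$ and hence $a(f)\le\lfloor(k+1)/2\rfloor$. To get this inequality, draw $B$ inside the disc bounded by the $2k$-cycle $f$ and let $D$ be the resulting plane graph, with $2k$ vertices and $3k$ edges; Euler's formula gives $D$ exactly $k+1$ bounded faces, which I call cells. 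Since $B$ is a matching, its chords are pairwise vertex-disjoint and contain no cycle, so every cell has at least one edge of $f$ on its boundary. Deleting from $D$ the edges $f_2,f_4,\dots,f_{2k}$ turns it into $A\cup B$; each deletion merges one cell into the unbounded face, and distinct cells are otherwise separated only by chords of $B$, which are never deleted, so the faces of $A\cup B$ are the unbounded face together with precisely the cells of $D$ carrying no edge of $A'$ on their boundary. Euler's formula applied to $A\cup B$ then forces this number of cells to equal $a(f)$; symmetrically, the number of cells carrying no edge of $A$ equals $a'(f)$. As no cell can avoid both $A$ and $A'$, these two families of cells are disjoint, so $a(f)+a'(f)\le k+1$, completing the argument.

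I expect the two points needing genuine care to be: (i) the claim that $B$ is \emph{non-crossing}, which rests on $f$ being a face of the (essentially unique) embedding, so that all the $P_\ell$ lie on the single side $R$ of $f$; and (ii) the face bookkeeping for $A\cup B$ — one must verify that cells sharing only a chord of $B$ do not merge when the $f_{2s}$ are removed, and dispose of the degenerate case in which a chord of $B$ joins two consecutive vertices of $f$ (drawn in the open disc, so that $D$ remains a simple plane graph and the cell count is unaffected). Everything else is Euler's formula together with the definition of the alternation operation.
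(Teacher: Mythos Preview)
Your proof is correct and follows essentially the same approach as the paper's. Both arguments draw the face $f$ together with the pieces of the intersecting cycles as a $2k$-gon with $k$ pairwise non-crossing internal chords/paths, observe that this configuration has exactly $k+1$ internal regions, and use local optimality of $\C(i)$ to conclude that at most half of them correspond to cycles of $\C(i)$. The paper labels the regions $+$ (bounded by a cycle of $\C(i)$) and $-$ (the rest) and argues $+\le -$; your formulation via the matchings $A$, $A'$, $B$ makes the same bookkeeping explicit and yields the slightly weaker-looking but equivalent inequality $a(f)+a'(f)\le k+1$. The two points you flag (non-crossing of $B$ because $f$ is a face, and the degenerate chord joining consecutive vertices) are exactly the places the paper's terser argument leans on the figure, so your added care there is appropriate.
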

\begin{proof}
Let $\C'$ be the collection of cycles in $\C(i)$ intersecting $f$ and let $H$
be the subgraph of $G$ whose edge set is the union of the edges of $\C'$ and
those of $f$. Using the planar embedding of $G$ having $f$ as the outer
face, we can see that $H$ consists of the outer cycle $f$ and a collection of
non-crossing paths on the inside connecting non-adjacent vertices of $f$ as in
Figure~\ref{fig:alternating2k}.

\begin{figure}[htbp]
\centering
\includegraphics[width=.3\textwidth]{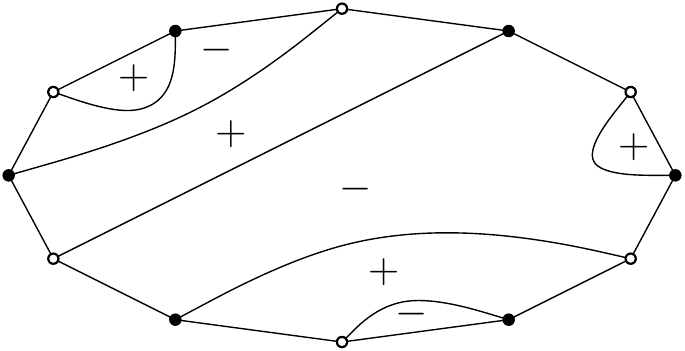}
\caption{The graph $H$ in the proof of Lemma~\ref{lem:cycle}.}
\label{fig:alternating2k}
\end{figure}

Label with a plus ($+$) sign the regions of $H$ bounded by cycles
in $\C(i)$ and by a minus ($-$) sign the rest of the regions except for the outer
face, thus $|\C'|$ equals the number of $+$ regions. By construction the number
of $+$ regions must be at most that of $-$ regions, as otherwise replacing
$\C(i)$ by $\C(i)\cap f$ would decrease the number of cycles in $\C(i)$.
But since $f$ has length $2k$, the total number of regions labeled $+$ and $-$
is $k+1$. Since $|C'|$ is an integer, we get $|C'| \leq \lfloor (k+1)/2
\rfloor$.
\end{proof}

With this lemma we can bound the size of the tour returned. For $i \in
\{1,2,3\}$ and $k\in \N$ define $F_k(i)$ as the
set of faces of color $i$ and length $k$, and $F_k$ as the total number of
faces of length $k$.

\begin{lem}
\label{lemma:firstbound}
 Let $\C(i)$ be the cycle cover obtained at the end of the while-loop in
Algorithm~\ref{alg:tsp_barnette}. Then
\begin{align*}
 |\C(i)| &\leq  1 + \sum_{k=2}^{\infty} \Big\lfloor \frac{k-1}{2} \Big\rfloor
|F_{2k}(i)| = 1 + |F_6(i)|+  |F_8(i)|+  2|F_{10}(i)|+ 2|F_{12}(i)|+ \dots
\end{align*}
\end{lem}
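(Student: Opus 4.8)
The plan is to reduce the bound to a connectivity statement about a bipartite incidence graph and then feed in Lemma~\ref{lem:cycle}. I would form the simple bipartite graph $B$ whose two sides are the cycles of $\C(i)$ and the faces in $F(i)$, joining a cycle $C$ to a face $f$ exactly when $C$ and $f$ share a vertex of $G$. Since $G$ is cubic and $3$-face-colourable, every vertex $v$ of $G$ lies on exactly one face of colour $i$ and on exactly one cycle of $\C(i)$; write $f_v$ and $C_v$ for these. Then the edges of $B$ are exactly the pairs $\{C_v,f_v\}$ with $v\in V(G)$, so $B$ has $|\C(i)|+|F(i)|$ vertices and exactly $\sum_{f\in F(i)}\#\{\text{cycles of }\C(i)\text{ meeting }f\}$ edges.

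The heart of the proof is that $B$ is connected. To see this I would check that for every edge $uv$ of $G$ the four vertices $C_u,f_u,C_v,f_v$ of $B$ lie in one component. If $uv\in\C(i)$ then $u$ and $v$ lie on the same cycle, so $C_u=C_v$, and since $f_u$ and $f_v$ each share a vertex with this cycle, $f_u-C_u-f_v$ is a path in $B$. If $uv\notin\C(i)$ then the colour of $uv$ cannot be $i$, for otherwise $uv\in E(i)\subseteq\C(i)$ by Lemma~\ref{lem:contain}; hence exactly one of the two faces bordering $uv$ has colour $i$, that face contains both $u$ and $v$, so $f_u=f_v$, and $C_u-f_u-C_v$ is a path in $B$. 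Since $G$ is connected and every vertex of $B$ equals $C_v$ or $f_v$ for some $v$, all of $B$ lies in a single component, so $B$ has at least $|\C(i)|+|F(i)|-1$ edges.

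It then remains to combine this with Lemma~\ref{lem:cycle} and count. All faces of a Barnette graph have even length at least $4$, so $|F(i)|=\sum_{k\ge2}|F_{2k}(i)|$, and by Lemma~\ref{lem:cycle} a face of length $2k$ meets at most $\lfloor(k+1)/2\rfloor$ cycles of $\C(i)$. Hence
\[
|\C(i)|+|F(i)|-1\ \le\ \sum_{f\in F(i)}\#\{\text{cycles meeting }f\}\ \le\ \sum_{k\ge2}\Big\lfloor\tfrac{k+1}{2}\Big\rfloor|F_{2k}(i)| ,
\]
and subtracting $|F(i)|=\sum_{k\ge2}|F_{2k}(i)|$ together with the identity $\lfloor(k+1)/2\rfloor-1=\lfloor(k-1)/2\rfloor$ yields $|\C(i)|\le 1+\sum_{k\ge2}\lfloor(k-1)/2\rfloor|F_{2k}(i)|$; evaluating the coefficients (which are $0,1,1,2,2,\dots$ for $k=2,3,4,5,6,\dots$) gives the stated expansion.

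I expect the connectivity of $B$ to be the only substantive point, and within it the delicate case is $uv\notin\C(i)$: there one must invoke $E(i)\subseteq\C(i)$ (Lemma~\ref{lem:contain}) precisely to guarantee that $uv$ borders a colour-$i$ face and hence that $f_u=f_v$. Everything else is bookkeeping with floor functions.
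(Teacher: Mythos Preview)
Your argument is correct. The connectivity of $B$ is handled carefully, the edge count via Lemma~\ref{lem:cycle} is exactly right, and the final floor-function bookkeeping is clean.

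Your route differs from the paper's. The paper contracts each face of $F(i)$ to a point inside the graph $G'=(V,E(\C(i)))$, obtaining a connected multigraph $H$ on vertex set $F(i)$ with edge set $E(i)$, and then \emph{uncontracts} the faces one at a time, bounding the increase in the number of components at each step: expanding a face of length $2k$ can split its component into at most $\lfloor(k+1)/2\rfloor$ pieces (this is where Lemma~\ref{lem:cycle} enters), i.e.\ an increase of at most $\lfloor(k-1)/2\rfloor$. Summing these increases from the initial single component gives the bound. Your approach replaces this dynamic uncontraction by a single static inequality: the bipartite incidence graph $B$ is connected, hence has at least $|\C(i)|+|F(i)|-1$ edges, while the degree of each face in $B$ is bounded by Lemma~\ref{lem:cycle}. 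The two arguments are equivalent in strength and use Lemma~\ref{lem:cycle} in the same way; yours is arguably more transparent, since it avoids tracking an intermediate sequence of graphs and the implicit monotonicity claim that partial uncontraction cannot produce more pieces than full uncontraction. Conversely, the paper's picture of ``expand one face at a time'' makes the geometric source of the $\lfloor(k-1)/2\rfloor$ increment very visible.
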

\begin{proof} Let $G'$ be the graph $G$ restricted to the edges of $\C(i)$. Let $H$ be the
connected Eulerian multigraph obtained by contracting in $G'$ all the faces in
$F(i)$ to vertices. Observe that the edge set of $H$ is exactly $E(i)$.
One by one, uncontract each face $f$ in $H$. In each step we obtain an Eulerian
graph which may have more connected components than before. We can estimate the
number of cycles in $\C(i)$ as 1 (the original component in $H$) plus the the
increase on the number of connected components on each step of this
procedure.

Consider the graph $H$ immediately before expanding a face $f$. Let $H_f$
be the connected component of $H$ containing the vertex associated to $f$. If
$f$ has length $2k$, then, after expanding $f$, $H_f$ splits into at most
$\lfloor \frac{k+1}{2}\rfloor$ connected components. This follows since after
expanding all the cycles of $F(i)$, $H_f$ is split into at most that many
components by Lemma~\ref{lem:cycle}. But then, expanding $f$ increases the
number of connected components of $H$ by at most $\lfloor \frac{k+1}{2}\rfloor-1
= \lfloor \frac{k-1}{2}\rfloor.$
\end{proof}

The previous bound is not enough to conclude the analysis. Fortunately we can
find another bound that will be useful.

\begin{lem}
\label{lemma:secondbound}
 Let $\C(i)$ be the cycle cover obtained at the end of the while-loop in
Algorithm~\ref{alg:tsp_barnette}. If $|C(i)|\neq 1$, then
\begin{align*}
 3|\C(i)| &\leq  |\C(i)\cap F_4| + \sum_{k=2}^{\infty} \Big\lfloor
\frac{k+1}{2} \Big\rfloor |F_{2k}(i)|\\
&= |\C(i)\cap F_4| + |F_4(i)| + 2|F_6(i)|+  2|F_8(i)|+  3|F_{10}(i)|+
\dots
\end{align*}
\end{lem}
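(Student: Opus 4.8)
The plan is to read the claimed inequality as a token (charging) argument. Put $3$ tokens on every cycle of $\C(i)$, for a total of $3|\C(i)|$, and set up a payment scheme in which all of these tokens are covered by a budget of $\lfloor\tfrac{k+1}{2}\rfloor$ tokens attached to each color-$i$ face of length $2k$ (a total of $\sum_{k}\lfloor\tfrac{k+1}{2}\rfloor|F_{2k}(i)|$) plus one extra token supplied by each cycle of $\C(i)$ that is itself a $4$-face of $G$ (a total of $|\C(i)\cap F_4|$). Since these two budgets add up to the right-hand side, the existence of such a scheme proves the lemma.

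The first thing to establish are the structural facts that make the scheme work. Because $\C(i)\supseteq E(i)$, at every vertex exactly one of the two $\C(i)$-edges has color $i$, so each cycle $C\in\C(i)$ has even length $2\ell$ and alternates between $E(i)$ and $E\setminus E(i)$; its $\ell$ non-$i$ edges lie on boundaries of color-$i$ faces, so traversing $C$ one meets a cyclic sequence $g_1,\dots,g_\ell$ of color-$i$ faces. Since the face-coloring is proper and $G$ is bipartite, the two endpoints of a color-$i$ edge always lie in \emph{distinct} color-$i$ faces (otherwise that edge would be a chord of a face, creating an odd cycle), so cyclically consecutive entries of $g_1,\dots,g_\ell$ differ; consequently $C$ meets at least three color-$i$ faces when $|C|\not\equiv 0\pmod{4}$, it meets at least two always, and a $4$-cycle of $\C(i)$ meets exactly two. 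In the other direction, Lemma~\ref{lem:cycle} gives that each color-$i$ face of length $2k$ meets at most $\lfloor\tfrac{k+1}{2}\rfloor$ cycles of $\C(i)$, and — as $\C(i)$ is a cover — at least one. The payment rule is then the obvious one: every color-$i$ face hands one token to each cycle it meets. This respects the per-face budget, and it already pays in full every cycle that meets at least three color-$i$ faces; a $4$-cycle of $\C(i)$ that is a $4$-face of $G$ also gets paid, its third token coming from the $|\C(i)\cap F_4|$ reserve.

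The crux of the proof is to dispose of the remaining ``deficient'' cycles, i.e.\ cycles $C\in\C(i)$ of length divisible by $4$ that meet exactly two color-$i$ faces $A$ and $B$ and are not $4$-faces of $G$. For such a $C$ one wants to extract its missing third token from leftover budget of $A$ or $B$ — this works whenever, say, $A$ is not a $4$-face and meets no cycle other than $C$, since then its budget $\lfloor\tfrac{k_A+1}{2}\rfloor\ge 2$ exceeds the one token it owes — or else to show that the bad configuration (where $A$ and $B$ are both token-saturated) simply does not arise. Ruling it out is a local-optimality argument of the same flavour as Lemma~\ref{lem:cycle}: $A$ is alternating for $\C(i)$ by Lemma~\ref{lem:contain}, and since the while-loop has terminated $\C(i)\triangle A$ cannot have fewer cycles; by examining how $C$, $A$, $B$ and the remaining cycles of $\C(i)$ are nested in the plane and how alternating $A$ (or $B$, or a further color-$i$ face) recombines the arcs of $C$, one finds that a deficient $C$ would force some such alternation to strictly decrease the number of cycles, a contradiction. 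This is precisely where the hypothesis $|\C(i)|\neq 1$ is used: with only one cycle the number of cycles cannot drop below one, the alternation argument has no room, and the statement is in fact false (for the cube $|\C(i)|=1$, $|\C(i)\cap F_4|=0$ and the right-hand side is $2<3$). I expect this planar nesting analysis — tracking how a deficient cycle interacts with the rest of $\C(i)$ — to be the main obstacle; everything else is routine bookkeeping.
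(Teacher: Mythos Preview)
Your token framework is exactly the paper's double count of incidences $J(i)=\{(C,f)\in\C(i)\times F(i): C \text{ meets } f\}$: Lemma~\ref{lem:cycle} gives $|J(i)|\le\sum_k\lfloor(k+1)/2\rfloor|F_{2k}(i)|$, and the target lower bound $|J(i)|\ge 3|\C(i)|-|\C(i)\cap F_4|$ amounts to showing that every cycle of $\C(i)$ that is not a $4$-face meets at least three color-$i$ faces. So the setup is identical; the divergence is entirely in how you handle your ``deficient'' cycles.

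That is where the gap is. You do not actually prove that a deficient cycle $C$ (length $4\ell\ge 8$, meeting exactly two color-$i$ faces $A$ and $B$) can be paid for, nor that it cannot occur; you only sketch two possible attacks --- leftover budget from $A$ or $B$, or a local-optimality contradiction via alternating $A$ --- and explicitly flag this as the unresolved obstacle. Neither sketch is convincing as stated: there is no a priori reason $A$ should meet only $C$, so its budget may already be spent, and it is not clear why alternating $A$ must strictly decrease $|\C(i)|$ merely because $C$ is deficient. (A smaller side issue: your justification that consecutive $g_m$ differ because a chord would ``create an odd cycle'' is not correct; the right reason is that faces of a $3$-connected plane graph are induced cycles.)

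The paper's argument is both simpler and different in kind: it uses neither budget transfers nor the while-loop's termination condition, but instead exploits \emph{$3$-connectivity}, which you never invoke. From the alternation pattern you already identified, if some vertex of $A$ were absent from $C$, then two suitable $A$-vertices of $C$ would form a $2$-vertex cut of $G$, contradicting $3$-connectedness; hence $V(A)\cup V(B)\subseteq V(C)$. But then the third edge at every vertex of $C$ lies on $A$ or on $B$ and therefore leads back into $V(C)$, so by connectedness $C$ spans $G$ and $|\C(i)|=1$. Thus, under the hypothesis $|\C(i)|\neq 1$, deficient cycles simply do not exist, and the double count finishes immediately. The missing ingredient in your plan is precisely this $3$-connectivity step.
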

\begin{proof}
Since $\C(i)$ contain $E(i)$, every cycle $C \in
\C(i)$ must intersect at least two faces in $\F(i)$. In particular,
every 4-cycle in $\C(i)$ intersects exactly two faces in $\F(i)$. Consider now a
cycle $C\in C(i)$ of length at least 6.

We claim that $C$ must intersect three of more faces of $\F(i)$. If this was not
the case then $C$ intersects exactly two faces $f_1$ and $f_2$. Then the edges
of $C$ must alternate as one edge of $f_1$, then one edge in $E(i)$ crossing
between the faces, then one edge in $f_2$ and one edge in $E(i)$ crossing
between the faces. In particular, the length of $C$ must be divisible by 4. Let
$v_0$, $v_1$, \ldots, $v_{4k-1}$ be the vertices of $C$. By the previous
observation we can assume that for every $1\leq \ell\leq k$, $v_{4\ell}v_{4\ell
+1}$ is an edge of $f_1$, $v_{4\ell+2}v_{4\ell+3}$ is an edge of $f_2$ and the
rest of the edges of $C$ are crossing between faces, as depicted in
Figure~\ref{fig:2faces}.

\begin{figure}[t]
\centering
\includegraphics{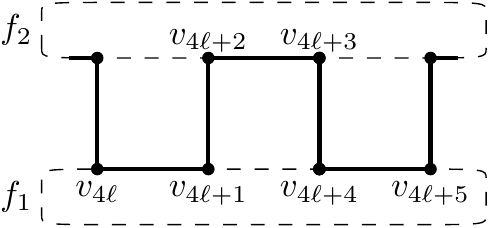}
\caption{A cycle $C$ intersecting two faces $f_1$ and $f_2$.}
\label{fig:2faces}
\end{figure}

Suppose that there exist a vertex in $f_1$ not contained in $C$, then
there must be an index $\ell$ such that $v$ is in the path $P$ from
$v_{4\ell +1}$ to $v_{4\ell +4}$ in $f_1$, that is internally disjoint with
$C$. Since $f_1$ is a face of $G$, we deduce that removing
$v_{4\ell+1}$ and $v_{4\ell+4}$ from $G$ disconnects the graph. But this
contradicts the 3-connectedness of $G$. Therefore, $C$ contains all the
vertices in $f_1$ and, by an analogous argument, all the vertices of $f_2$.
Since the graph is cubic and connected, the only possibility is that $C$
contains all the vertices of $G$. But then $|\C(i)|=1$ which
contradicts the hypothesis of the lemma. Therefore, we have proved the every
cycle in $\C(i)$ of length at least six intersects at least 3 faces.

Define the set $$J(i)=\{(C,H) \in \C(i)\times \F(i)\colon \text{ cycle $C$
intersects face $H$ } \}.$$
By the previous discussion we have,
\begin{align*}
|J(i)| &\geq 2 |\C(i)\cap F_4| + 3 |\C(i)\setminus F_4| = 3|\C(i)| -
|\C(i)\cap F_4|.
\end{align*}
By Lemma~\ref{lem:cycle} we have
\begin{align*}
|J(i)| &\leq \sum_{k=2}^{\infty} \Big\lfloor \frac{k+1}{2} \Big\rfloor
|F_{2k}(i)|.
\end{align*}
Combining the last two inequalities we get the desired result.
\end{proof}

Now we have all the ingredients to bound the size of the cycle cover returned by
the algorithm.

\begin{lem}\label{lema:boundcycle}
Let $\C$ be the cycle cover computed
by Algorithm~\ref{alg:tsp_barnette}. Then
$$|\C| \leq \min\left\{\frac{n+4}{6} - \frac{|F_4|}{6},
\frac{(n+1)}{9}+\frac{|F_4|}{6}\right\}.$$
\end{lem}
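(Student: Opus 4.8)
The plan is to obtain the two quantities appearing in the minimum separately, by summing the per-colour estimates of Lemmas~\ref{lemma:firstbound} and~\ref{lemma:secondbound} over $i\in\{1,2,3\}$ and feeding in the elementary counting identities of a cubic planar graph. Since $\C$ has smallest cardinality among $\C(1),\C(2),\C(3)$, we have $3|\C|\le|\C(1)|+|\C(2)|+|\C(3)|$, and after summing, the colour classes of faces collapse via $|F_{2k}(1)|+|F_{2k}(2)|+|F_{2k}(3)|=F_{2k}$. I will also use three facts that hold because $G$ is cubic, $3$-connected, planar and bipartite, so that every face has even length at least $4$ (hence all sums below may start at $k=2$): $|E|=\tfrac32 n$; Euler's formula $\sum_{k\ge2}F_{2k}=|F|=\tfrac{n+4}2$; and edge--face incidence counting $\sum_{k\ge2}2k\,F_{2k}=2|E|=3n$. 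Subtracting the last two identities yields the key relation $\sum_{k\ge2}(3-k)F_{2k}=6$, equivalently $F_4=6+\sum_{k\ge4}(k-3)F_{2k}$.

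For the first bound, summing Lemma~\ref{lemma:firstbound} over $i$ gives $3|\C|\le 3+\sum_{k\ge2}\lfloor\frac{k-1}2\rfloor F_{2k}$. Using $\sum_{k\ge2}F_{2k}=\tfrac{n+4}2$ and the identity $1-\lfloor\frac{k-1}2\rfloor=-\lfloor\frac{k-3}2\rfloor$, the target inequality $3|\C|\le\tfrac{n+4}2-\tfrac{F_4}2$ is seen to be equivalent to $6+2\sum_{k\ge4}\lfloor\frac{k-3}2\rfloor F_{2k}\le F_4$, and this follows from the key relation above together with the trivial bound $2\lfloor\frac{k-3}2\rfloor\le k-3$. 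Dividing by $3$ gives $|\C|\le\tfrac{n+4}6-\tfrac{F_4}6$.

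For the second bound I first dispose of the degenerate case $|\C|=1$ directly (the claimed inequality holds because every Barnette graph has $n\ge8$); otherwise $|\C(i)|\ge2\ne1$ for all $i$, so Lemma~\ref{lemma:secondbound} applies to each cover, and summing gives $9|\C|\le\sum_i|\C(i)\cap F_4|+\sum_{k\ge2}\lfloor\frac{k+1}2\rfloor F_{2k}$. I bound the last sum by splitting $\lfloor\frac{k+1}2\rfloor=\tfrac{k}2+\tfrac12\mathbf 1[k\text{ odd}]$, so it equals $\tfrac{3n}4+\tfrac12\sum_{k\text{ odd}}F_{2k}\le\tfrac{3n}4+\tfrac12(\tfrac{n+4}2-F_4)=n+1-\tfrac{F_4}2$. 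For the first term I use the colouring: a $4$-face $f$ is alternating for $\C(c(f))$ by Lemma~\ref{lem:contain}, hence is not a cycle of that cover, so it can be a cycle of $\C(i)$ only for the two indices $i\ne c(f)$; therefore $\sum_i|\C(i)\cap F_4|\le 2F_4$. Combining, $9|\C|\le 2F_4+n+1-\tfrac{F_4}2=n+1+\tfrac32F_4$, that is $|\C|\le\tfrac{n+1}9+\tfrac{F_4}6$. Taking the smaller of the two bounds finishes the proof.

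I expect the main obstacle to be the arithmetic bookkeeping with the two different weight sequences $\lfloor\frac{k-1}2\rfloor$ and $\lfloor\frac{k+1}2\rfloor$ coming from Lemmas~\ref{lemma:firstbound} and~\ref{lemma:secondbound}: one has to notice that the rounding loss in the first bound is precisely absorbed by the additive slack $6$ in the identity $\sum(3-k)F_{2k}=6$, and that in the second bound the rounding separates cleanly into the edge--face count and the count of faces of length $\equiv2\pmod 4$. The only genuinely graph-theoretic input is the observation that a $4$-face participates as a cycle in at most two of the three covers, which is immediate from the colour-containment invariant of Lemma~\ref{lem:contain}.
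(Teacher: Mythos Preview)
Your proof is correct and follows essentially the same approach as the paper: average the three per-colour bounds, collapse $\sum_i|F_{2k}(i)|=F_{2k}$, and feed in the cubic/Euler identities together with the colouring argument $\sum_i|\C(i)\cap F_4|\le 2|F_4|$. Your arithmetic is a mild repackaging of the paper's---you bound $\sum_{k}\lfloor\tfrac{k+1}{2}\rfloor F_{2k}$ directly via the split $\tfrac{k}{2}+\tfrac12\mathbf 1[k\text{ odd}]$, whereas the paper writes it as $\alpha+|F|$ with $\alpha=\sum_k\lfloor\tfrac{k-1}{2}\rfloor F_{2k}$ and reuses the bound $\alpha\le\tfrac12(n-2-|F_4|)$ from the first part---but the two computations give the identical inequality $\sum_k\lfloor\tfrac{k+1}{2}\rfloor F_{2k}\le n+1-\tfrac{|F_4|}{2}$. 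One point where you are actually more careful than the paper: you explicitly treat the degenerate case $|\C|=1$ (so that the hypothesis $|\C(i)|\neq 1$ of Lemma~\ref{lemma:secondbound} is available for all $i$), which the paper silently skips.
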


\begin{proof}
 First we need a bound on a quantity related to previous lemmas.
Let $\alpha$ be $\sum_{k=2}^{\infty} \Big\lfloor \frac{k-1}{2}
\Big\rfloor |F_{2k}| $. We claim that  $\alpha \leq
\frac{1}{2}\left(n-2-|F_4|\right).$

Since $|F|=(n+4)/3$, the claim above is equivalent to proving that
$2|F_4|+2|F|+4\alpha \leq 3n$. Note that $2|F_4|+2|F| +
4\alpha$ equals
\begin{alignat*}{7}
&~~&  2|F_4|    &    &         &  &          &  &           &  & &\\
&+~&  2|F_4|    &+~  & 2|F_6|  &+~& 2|F_8|   &+~& 2|F_{10}| &+~& 2|F_{12}| &+~&
2|F_{14}| &+~\cdots \\
&+~&            &+~  & 4|F_6|  &+~& 4|F_8|   &+~& 8|F_{10}| &+~& 8|F_{12}| &+~&
12|F_{14}|&+~\cdots \\
&=~& 4|F_4|     &+~  & 6|F_6|  &+~& 6|F_8|   &+~& 8|F_{10}| &+~& 8|F_{12}|
&+~&
10|F_{14}|&+~\cdots,
\end{alignat*}
which is upper bounded by $\sum_{k=2}^\infty 2k|F_{2k}|$. This quantity is
the sum of the length of all the faces in $G$. As each vertex is in
three faces, this quantity is at most $3n$, which proves the claim.

By Lemma~\ref{lemma:firstbound} we have,
\[
|\C| \leq \frac{1}{3}\sum_{i=1}^3 |\C(i)| \leq 1+
\frac{\alpha}{3} \leq 1+
\frac{1}{6}(n-2-|F_4|) = \frac{n+4}{6}-\frac{|F_4|}{6}.
\]

On the other hand, using that $|F|=(n+4)/2$ and Lemma~\ref{lemma:secondbound},
we get
\begin{align*}
9|\C| &\leq 3\sum_{i=1}^3|\C(i)| \leq \sum_{i=1}^3|\C(i)\cap F_4| + \alpha +
|F| \leq \sum_{i=1}^3|\C(i)\cap F_4| + \frac{1}{2}(2n+2-|F_4|).
\end{align*}
Note that each cycle in $F_4$ avoids one color, therefore it can only appear in
two cycle covers $\C(i)$, i.e. $\sum_{i=1}^3|\C(i)\cap F_4| \leq 2|F_4|$. From
here we get that
\begin{align*}
|\C| &\leq \frac{1}{9}\left( 2|F_4| + n +1 - |F_4|/2\right) = \frac{n+1}{9} +
\frac{|F_4|}{6}.\; \qedhere \end{align*}
\end{proof}

The previous lemma implies the main result of this section.

\begin{thm}
Let $\C$ be the cycle cover computed by Algorithm~\ref{alg:tsp_barnette} and
$T$ be the tour returned. Then $|\C| \leq \frac{5n+14}{36}$, and the length of
$T$ is at most $|T|=\frac{23n-22}{18} =
\left(\frac{4}{3}-\frac{1}{18}\right)n-\frac{11}{9}$. In particular every
Barnette graph admits a tour of length $|T|$.
\end{thm}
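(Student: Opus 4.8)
The plan is to combine the two bounds from Lemma~\ref{lema:boundcycle} by balancing them against each other, and then to convert the resulting bound on $|\C|$ into a bound on $|T|$ using the identity $|T| = n + 2|\C| - 2$ established at the start of the section.

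\medskip

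First I would observe that the two expressions inside the minimum in Lemma~\ref{lema:boundcycle} behave oppositely in the parameter $|F_4|$: the first, $\frac{n+4}{6} - \frac{|F_4|}{6}$, decreases as $|F_4|$ grows, while the second, $\frac{n+1}{9} + \frac{|F_4|}{6}$, increases. Hence the worst case for our bound is when the two are equal; the minimum is therefore maximized at the crossover point. Setting $\frac{n+4}{6} - \frac{|F_4|}{6} = \frac{n+1}{9} + \frac{|F_4|}{6}$ and solving for $|F_4|$ gives the critical value, and plugging it back into either expression yields the uniform bound $|\C| \leq \frac{5n+14}{36}$, valid regardless of the actual value of $|F_4|$. (One should double-check the degenerate case $|\C| = 1$ separately, but there the tour has length $n$, which comfortably satisfies the claimed bound for all $n\geq 2$, so this case is harmless.)

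\medskip

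Next I would substitute this bound into $|T| = n + 2|\C| - 2$:
\begin{align*}
|T| \;\leq\; n + 2\cdot\frac{5n+14}{36} - 2 \;=\; n + \frac{5n+14}{18} - 2 \;=\; \frac{18n + 5n + 14 - 36}{18} \;=\; \frac{23n - 22}{18}.
\end{align*}
Finally I would rewrite this in the advertised form by splitting off the $4n/3$ term: since $\frac{23}{18} = \frac{24}{18} - \frac{1}{18} = \frac{4}{3} - \frac{1}{18}$, we get $|T| \leq \left(\frac{4}{3} - \frac{1}{18}\right)n - \frac{11}{9}$, which is what we wanted. Because Algorithm~\ref{alg:tsp_barnette} runs in polynomial time (the $3$-face coloring is polynomial by the earlier discussion, and each while-loop strictly decreases $|\C(i)|$, which is bounded by $n$), every Barnette graph on $n$ vertices admits a tour of this length.

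\medskip

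I do not anticipate a genuine obstacle here: the lemma has already done the structural work, and what remains is an optimization over a single scalar parameter followed by bookkeeping. The only point requiring mild care is making sure the two regimes in Lemma~\ref{lema:boundcycle} really do meet at an integer-feasible configuration — but since we only need the inequality $|\C| \leq \frac{5n+14}{36}$ as an upper bound, and the minimum of the two linear functions is at most their value at the real crossover point, no integrality argument is actually needed: the bound holds pointwise in $|F_4|$ simply because one of the two terms is always at most $\frac{5n+14}{36}$.
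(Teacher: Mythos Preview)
Your proposal is correct and follows essentially the same approach as the paper: both maximize the minimum in Lemma~\ref{lema:boundcycle} over $|F_4|$ (the paper states the maximizer $|F_4|=(n+10)/6$ explicitly, while you obtain it by equating the two linear functions), and then substitute into $|T|=n+2|\C|-2$. Your extra remarks on the degenerate case $|\C|=1$ and on integrality are sound but not needed for the argument.
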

\begin{proof}
The expression on the right hand side of Lemma~\ref{lema:boundcycle} is
maximized when $|F_4|=\frac{n+10}{6}$, for which it attains a value of
$\frac{5n+14}{36}$. Therefore, for every value of $|F_4|$, this quantity
is an upper bound of $|\C|$. To conclude, we just use that the length of $T$ is
$n+2(|\C|-1)$.
\end{proof}

\section{2-connected cubic graphs: Simplification phase}\label{section:simplification}

We now go back to general 2-connected cubic graphs. Our algorithm starts by reducing the input graph $G$ to a simpler 2-connected cubic graph $H$
which does not contain a cycle of length six with
one or more chords as subgraph. In addition our reduction satisfies that if $H$ has a TSP tour of length at most $(4/3-\epsilon)|V(H)|-2$ then $G$ has
a TSP tour of length at most $(4/3-\epsilon)|V(G)|-2$, where $V(H)$ and $V(G)$ denote the vertex sets of $H$ and $G$ respectively.  We will use the
notation $\chi^F \in \{0,1\}^E$ of $F\subset E$, to denote the \emph{incidence vector} of $F$ ($\chi^F_e=1$ if $e\in F$, and $0$ otherwise).

\medskip
\noindent \emph{Reduction 1:}  Let $\gamma$  be a 6-cycle having two chords and let $G[V(\gamma)]$ be the subgraph induced
by the set of vertices contained  in $\gamma$, and let $v_1$ and $v_2$ be the two vertices connecting $\gamma$ to the rest of $G$.
Our reduction replaces $G[V(\gamma)]$  by a 4-cycle with a chord (shown in Figure  \ref{fig:diamante}), identifying $v_1$
and $v_2$ with the vertices of degree 2 in the chorded 4-cycle. This procedure in particular removes the {\em p-rainbow} structure in Boyd et al.
\cite{BSSS11}.
\begin{figure}[t]
\centering
\includegraphics[width=.2\textwidth]{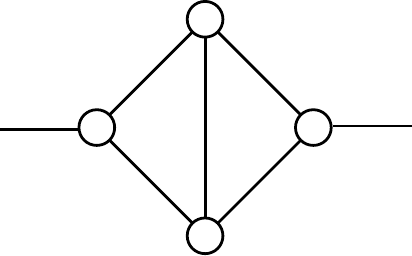}
\caption{A 4-cycle with a chord.}
\label{fig:diamante}
\end{figure}

The second step is to consider 6-cycles having only one chord.
Let $\gamma$ be such a cycle and let $G[V(\gamma)]$ be the subgraph induced
by the set of vertices contained  in $\gamma$.
Consider the four edges $e_1$, $e_2$, $e_3$ and $e_4$ connecting $\gamma$ to the rest of $G$.
Letting $w_i$ be the endpoint of $e_i$ outside $\gamma$, we distinguish the following three reductions according
to three different cases.

\medskip
\noindent \emph{Reduction 2:}  If only two of the $w_i$'s are
distinct, we proceed as in the previous case (Reduction 1),
that is, replacing $G[V(\gamma)]$ by a chorded 4-cycle.

\medskip
\noindent \emph{Reduction 3:}  If three of the $w_i$'s are distinct we replace the 7
vertex structure formed by $\gamma$ and the $w_i$ adjacent to two vertices in $\gamma$ by a triangle (3-cycle),
identifying the degree two vertices in the structure with those in the triangle.
Figure \ref{fig:reduccion_hex_1cuerd_central} shows an example of this reduction
in the specific case that $\gamma$ has a chord connecting symmetrically opposite vertices.
\begin{figure}[htbp]
\centering
\includegraphics[width=.4\textwidth]{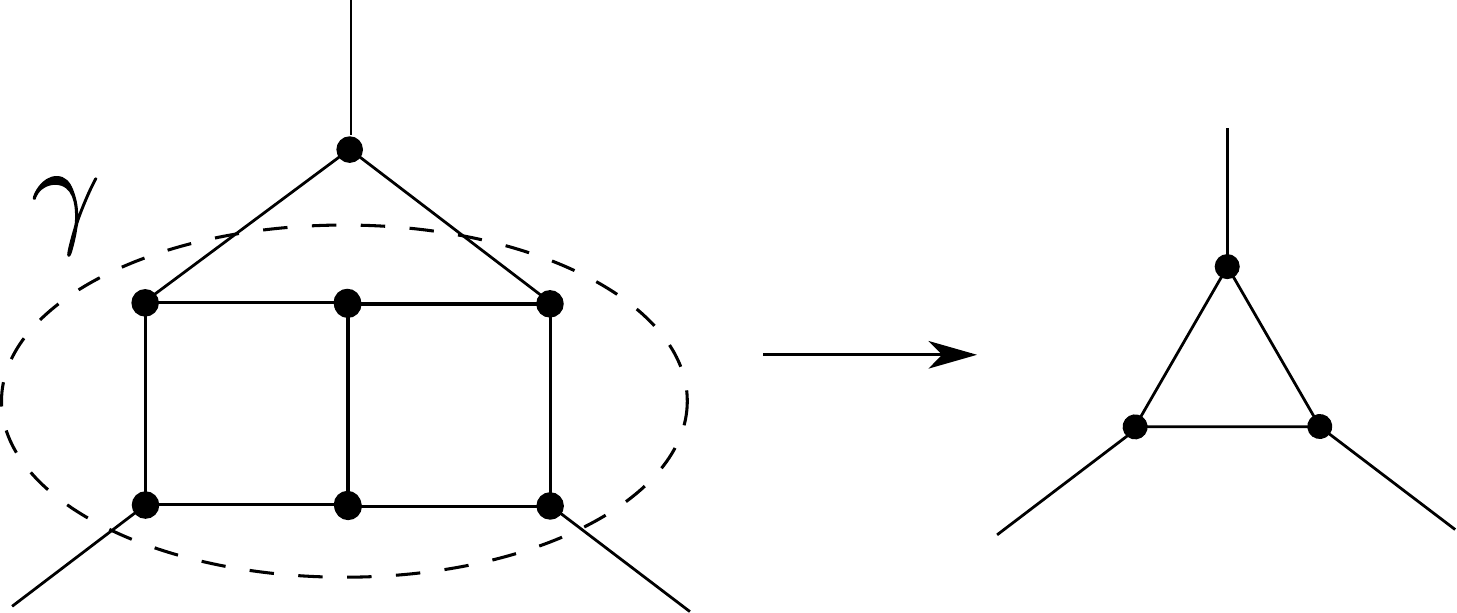}
\caption{Reduction 3  in the case that $\gamma$ has a chord connecting symmetrically opposite vertices.}
\label{fig:reduccion_hex_1cuerd_central}
\end{figure}

\noindent \emph{Reduction 4:}  The final case is when all four $w_i$'s are distinct. Assume, without loss of generality that the $w_i's$ are indexed
in the
cyclic order induced by $\gamma$. In this case we replace
$\gamma$ by an edge $e$ and  we either connect $w_1$, $w_2$ to one
endpoint of $e$ and $w_3$, $w_4$ to the other, or we connect $w_1$, $w_4$ to one endpoint and $w_2$, $w_3$ to the other. The previous choice can
always be made so that in the reduced graph, $e$ is not a bridge, as the following lemma shows.

\begin{lem}
\label{lemma:red4}
 Let $\gamma$ be a chorded 6-cycle considered in Reduction 4.
 Then, the edges in $G[V(\gamma)]$ do not simultaneously contain a cut of $G$
 separating the vertex sets $\{w_1, w_2\}$ and $\{w_3, w_4\}$,
 and a second cut separating the vertex sets $\{w_2, w_3\}$ and $\{w_1, w_4\}$.
\end{lem}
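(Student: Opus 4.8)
The plan is a short proof by contradiction built on the submodularity of edge cuts. For $S\subseteq V(G)$, write $\delta(S)$ for the set of edges of $G$ with exactly one endpoint in $S$. Suppose, against the statement, that $E(G[V(\gamma)])$ contains both a cut $\delta(S_1)$ separating $\{w_1,w_2\}$ from $\{w_3,w_4\}$ and a cut $\delta(S_2)$ separating $\{w_2,w_3\}$ from $\{w_1,w_4\}$. Replacing $S_1$ or $S_2$ by its complement if necessary, we may assume $w_1,w_2\in S_1$, $w_3,w_4\notin S_1$, $w_2,w_3\in S_2$, and $w_1,w_4\notin S_2$. The crucial observation is that the four attachment vertices then land in the four ``quadrants'' of the pair $(S_1,S_2)$: $w_1\in S_1\setminus S_2$, $w_2\in S_1\cap S_2$, $w_3\in S_2\setminus S_1$, and $w_4\notin S_1\cup S_2$.

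Now take $T=S_1\cap S_2$. By the elementary inclusion $\delta(T)\subseteq\delta(S_1)\cup\delta(S_2)$, the cut $\delta(T)$ again uses only the seven edges of the chorded $6$-cycle $\gamma$. Consequently $\delta(T)$ contains none of the edges $e_1,\dots,e_4$ joining $\gamma$ to the rest of $G$, nor any edge of $G-V(\gamma)$, since none of these edges lies in $E(G[V(\gamma)])$. Writing $p_i$ for the endpoint of $e_i$ inside $\gamma$, the first fact gives $w_i\in T \iff p_i\in T$ for every $i$, and the second fact says that $T\setminus V(\gamma)$ is a union of connected components of $G-V(\gamma)$.

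Let $Q$ be the connected component of $G-V(\gamma)$ that contains $w_2$. Since $w_2\in T$ while $w_1,w_3,w_4\notin T$, the previous paragraph yields $Q\subseteq T$ and $w_1,w_3,w_4\notin Q$. The only edges of $G$ with exactly one endpoint in $Q$ are among $e_1,\dots,e_4$ (every other edge incident to $Q$ has both endpoints in $G-V(\gamma)$, hence inside $Q$), and of these only $e_2$ has its outside endpoint $w_2$ in $Q$. Hence deleting the single edge $e_2$ separates $Q$ from $V(\gamma)$, so $e_2$ is a bridge of $G$, contradicting the fact that $G$, being $2$-connected, has no bridge. (Symmetrically one could instead isolate $w_1$ via $S_1\setminus S_2$, $w_3$ via $S_2\setminus S_1$, or $w_4$ via $V(G)\setminus(S_1\cup S_2)$.)

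I expect the only genuine idea here to be the quadrant/intersection trick; once $T=S_1\cap S_2$ is chosen, the rest is bookkeeping about which edges can belong to $E(G[V(\gamma)])$. A more hands-on alternative — enumerating the bipartitions of the six vertices of $\gamma$ that yield a cut lying inside $\gamma$ — also works but is messier, and it obscures the point that neither the location of the chord nor the precise structure of $\gamma$ plays any role: all that is used is that $\gamma$ meets the rest of $G$ in exactly the four edges $e_1,\dots,e_4$, together with $2$-edge-connectivity of $G$.
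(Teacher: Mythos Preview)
Your argument is correct and is essentially the same as the paper's, just packaged a bit more carefully. The paper assumes both cuts exist, observes that removing $V(\gamma)$ then leaves the four $w_i$ in four different components of $G-V(\gamma)$, and concludes that $e_1$ is the unique edge linking $w_1$'s component to the rest, contradicting $2$-connectivity. Your intersection $T=S_1\cap S_2$ is simply a clean way to certify one of those four separations (isolating $w_2$), after which your bridge argument for $e_2$ is identical in spirit to the paper's for $e_1$; your parenthetical about the other three quadrants recovers the paper's full four-component picture.
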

\begin{proof}
 By contradiction: suppose that the edge set of $G[V(\gamma)]$ contains two cuts of $G$ such that the first cut separates the vertex sets $\{w_1,
w_2\}$ and $\{w_3, w_4\}$, and the second cut separates the vertex sets $\{w_2, w_3\}$ and $\{w_1, w_4\}$.
By deleting the vertices and edges of $\gamma$, we split $G$ into four connected components, one containing each $w_i$; thus all paths connecting
$w_1$ and $w_3$ in $G$ must use the edge $e_1$, contradicting the 2-connectivity of $G$.
\end{proof}

Note that all the above reduction steps strictly decrease the number of vertices in the graph while keep the graph cubic and 2-connected,
and that each step requires only polynomial time.
Thus  only a linear number of polynomial time steps needs to be done to obtain a cubic and 2-connected reduced graph
$H$ which does not contain  6-cycles with one or more chords.
The following result shows that any TSP tour in the reduced graph $H$ of length at most $(4/3-\epsilon)|V(H)|-2$
can be turned into a TSP tour in the original graph $G$ of length at most $(4/3-\epsilon)|V(G)|-2$.

\begin{prop}
\label{proposition:conserva_aprox}
 Let $T'$ a TSP tour in the reduced graph $H$ of length at most $\alpha |V(H)|-2$,
 with $5/4 \leq \alpha \leq 4/3$. Then, a TSP tour $T$ can be
 constructed  in the original graph $G$  in polynomial time, such that the length of $T$
 is at most  $\alpha |V(G)|-2$.
\end{prop}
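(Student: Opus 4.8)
The plan is to argue by induction on the number of reduction steps used to pass from $G$ to $H$, so that it suffices to treat a single step. Thus suppose $G$ reduces to a cubic $2$-connected graph $G_1$ by one application of Reduction $r\in\{1,2,3,4\}$ and that $G_1$ carries a TSP tour $T_1$ with $|T_1|\le \alpha|V(G_1)|-2$; the goal is to lift $T_1$ to a TSP tour $T$ of $G$ with $|T|\le \alpha|V(G)|-2$. Set $\Delta=|V(G)|-|V(G_1)|$; inspecting the reductions, $\Delta=2$ for Reductions 1 and 2 (a chorded $6$-cycle becomes a chorded $4$-cycle) and $\Delta=4$ for Reductions 3 and 4 (a $7$-vertex structure becomes a triangle; a chorded $6$-cycle becomes a single edge). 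Since $\alpha|V(G)|-2=(\alpha|V(G_1)|-2)+\alpha\Delta\ge |T_1|+\alpha\Delta$ and $|T|,|T_1|\in\N$, it is enough to build $T$ from $T_1$ with $|T|-|T_1|\le\lfloor\alpha\Delta\rfloor$, and using $\alpha\ge 5/4$ this budget equals $2$ when $\Delta=2$ and $5$ when $\Delta=4$.

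The construction is purely local surgery. Let $\Sigma\subseteq G$ be the contracted structure and $\Sigma_1\subseteq G_1$ the gadget that replaces it; they share their \emph{interface} vertices, and the edges $e_1,\dots$ leaving $\Sigma$ (resp.\ $\Sigma_1$) towards the rest of the graph correspond. The data about $T_1$ that matters --- its \emph{boundary pattern} --- is how many copies ($0$, $1$, or $2$) of each $e_i$ belong to $T_1$, together with the partition of the interface vertices induced by the part of $T_1$ lying outside $\Sigma_1$. To produce $T$, delete from $T_1$ all edges inside $\Sigma_1$ and all copies of the $e_i$, and insert a multi-subgraph $R$ of $G$ supported on $E(\Sigma)$ and the $e_i$ such that: (i) every vertex of $\Sigma$ has even degree in $R$, $R$ contains the same number of copies of each $e_i$ as $T_1$ did, and all multiplicities are at most $2$ --- so that $T$ is again a spanning Eulerian sub-multigraph of $G$; (ii) $R$ spans $V(\Sigma)$; and (iii) $R$ connects the interface vertices at least as much as the part of $T_1$ inside $\Sigma_1$ did, with every component of $R$ meeting an interface vertex --- so that $T$ stays connected. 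If $R$ can be chosen with $|R|\le |T_1\cap E(\Sigma_1)|+(\text{copies of the }e_i\text{ in }T_1)+\lfloor\alpha\Delta\rfloor$ we are done; as every step is a constant-size modification and there are linearly many, the procedure is polynomial.

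What remains is a finite case analysis: for each reduction, enumerate the possible boundary patterns and exhibit a valid $R$ within budget. For Reductions 1 and 2 --- including the sub-case of Reduction 2 in which only two of the $w_i$ coincide --- the gadget $\Sigma_1$ has only two interface vertices $v_1,v_2$, so the part of $T_1$ inside it is a single $v_1$--$v_2$ path, a $v_1$--$v_2$ path together with a cycle, or two edge-disjoint $v_1$--$v_2$ walks; in each case one checks directly that a suitable Eulerian subgraph of the chorded $6$-cycle $\Sigma$ spans its six vertices and realizes the same linkage between $v_1$ and $v_2$ using at most $2$ extra edges. Reduction 3 is the same exercise with the triangle against the $7$-vertex structure and budget $5$, where the $2$-connectivity of $G$ restricts which patterns on the three interface vertices arise.

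The main obstacle is Reduction 4, in which $\gamma$ collapses all the way to an edge $e=xy$ with two of $w_1,\dots,w_4$ attached at $x$ and two at $y$. Here the boundary pattern records the multiplicities of $e$ and of $e_1,\dots,e_4$ in $T_1$ and the way $T_1$ groups $w_1,\dots,w_4$, and for each of the resulting handful of cases I must route a closed walk through all six vertices of $\gamma$, reattaching $w_1,w_2,w_3,w_4$ via $e_1,\dots,e_4$ in their correct cyclic positions, spending at most $5$ extra edges and, crucially, without disconnecting $T$. This last point is exactly where Lemma~\ref{lemma:red4} enters: it guarantees that one of the two admissible ways of splitting $\{w_1,w_2,w_3,w_4\}$ between $x$ and $y$ leaves $e$ non-bridge in $G_1$, which means the reused outside part of $T_1$ already joins the ``$x$-side'' and the ``$y$-side'' of $\gamma$, so the interior routing need only be spanning and Eulerian rather than also connecting. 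Checking the edge budget case by case is the routine but somewhat lengthy part; the genuine ingredients are the surgery framework together with the non-bridge choice supplied by Lemma~\ref{lemma:red4}.
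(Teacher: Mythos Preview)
Your overall plan --- induct on reduction steps and, for each step, perform local surgery governed by a case analysis on the boundary pattern --- is exactly the paper's; the paper simply packages Reductions~1--2 and Reduction~3 into two standalone lemmas and leaves Reduction~4 as a one-line claim.

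There is, however, a genuine gap in your Reduction~4 paragraph. You assert that because Lemma~\ref{lemma:red4} lets us choose the split so that $e$ is not a bridge of $G_1$, ``the reused outside part of $T_1$ already joins the $x$-side and the $y$-side'', and hence the interior $R$ ``need only be spanning and Eulerian rather than also connecting''. This does not follow: non-bridgeness of $e$ is a property of $G_1$, not of the particular tour $T_1$, and $T_1$ may route all cross-traffic through $e$. Concretely, let $G_1$ have vertices $x,y,w_1,w_2,w_3,w_4,a,b$, edges $xy,\,xw_1,\,xw_2,\,yw_3,\,yw_4$, the two triangles $w_1w_2a$ and $w_3w_4b$, and the edge $ab$; this is cubic and $2$-connected with $e=xy$ non-bridge, yet the tour that doubles $xw_1$, $xy$, $yw_3$ and uses the two triangles has its outside part equal to two disjoint triangles --- one containing $\{w_1,w_2\}$, the other $\{w_3,w_4\}$. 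If you now place only a spanning Eulerian subgraph inside $\gamma$ and use no $e_i$, the lifted $T$ has three components. The fix is simply to keep your own condition~(iii): $R$ must supply at least the linkage among interface vertices that the inside of $T_1$ did, which in practice means taking $R$ connected on $V(\gamma)$ together with whichever $e_i$'s parity forces you to use. With that requirement, your budget of $5$ for $\Delta=4$ suffices in every case. Lemma~\ref{lemma:red4} plays no role in the lifting step; its sole purpose is to guarantee that $G_1$ is $2$-connected so the inductive hypothesis applies to it.

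One smaller slip: in Reduction~2 the two coinciding external vertices must be absorbed into the block being replaced (otherwise the reduced graph would not be cubic), so eight vertices shrink to four and $\Delta=4$, not $2$. This only loosens your budget.
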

\begin{proof} We distinguish certain cases, depending on what reduction was performed over the graph.
\begin{itemize}[leftmargin=*]
  \item Case of Reduction 1 or 2: in this case the result is a consequence of the following lemma.
\end{itemize}
\begin{lem}
	\label{lemma:2corte_diamante}
	Let $G=(V,E)$ be a graph and $U \subset V$ such that the cut $\delta(U)$ has only two elements, say $\delta(U)=\{e_1, e_2\}$.
	Let $v,w \in U$ be the two end vertices of $e_1$ and  $e_2$ in $U$, respectively.
	Let us suppose that the subgraph $G[U]$ is Hamiltonian and contains a Hamiltonian path connecting $v$ and $w$.
	Let $H$ be the graph resulted from replacing the subgraph $G[U]$ by a chorded 4-cycle $D$
	and let $T'$ be a TSP tour in $H$. Then, there exists a TSP tour $T$ in $G$
	with length $|T|\leq|T'|+|U|-4$.
\end{lem}
\begin{proof}[Proof of Lemma \ref{lemma:2corte_diamante}]
	Let  $\chi^P$ be the incidence vector of some Hamiltonian path $P$ connecting $v$ and $w$, and
	let $\chi^C$ be the incidence vector of some Hamiltonian cycle $C$ in $G[U]$.
	Let us  denote by $x'$ the vector $\chi^{T'}$ (the incidence vector of $T'$).
	We are going to extend the TSP tour $T'$ to the original graph $G$ depending on
	the value of $x'$ on edges $e_1$ and $e_2$.
	We know that $x'(\{e_1, e_2\}):=x'(e_1)+x'(e_2)$ must be an even number,
	since $T'$ is a TSP tour. Considering that $x'$ takes values over $\{0,1,2\}$, we have that all
	the possible cases for the values of $x'(e_1)$ and $x'(e_2)$ are as follows.

\begin{itemize}
\item Case $x'(e_1)=1$ and $x'(e_2)=1$: in this case there is a path in the chorded 4-cycle $D$
and in $T'$ of length 3 connecting $v$ and $w$. Then, we define the vector $x$ as
$$x(e)=
\begin{cases}
 \chi^P(e) & \text{, if } e \in E(G[U]), \\
 x'(e)    & \text{, if } e \in E(G)\setminus E(G[U]]).
\end{cases}
$$

\item Case $x'(e_1)=2$ and $x'(e_2)=0$ (or the symmetrical case): in this case there is a
4-cycle in $D$ and in $T'$. Then, we define the vector $x$ as
 $$x(e)=
\begin{cases}
 \chi^C(e) & \text{, if } e \in E(G[U]), \\
 x'(e)    & \text{, if } e \in E(G)\setminus E(G[U]).
\end{cases}
$$

\item Case $x'(e_1)=2$ and $x'(e_2)=2$: in this case we redefine $x'(e_2)=0$ and then we define
the vector $x$ as the previous case.
\end{itemize}

In any case $x$ is the incident vector of a TSP tour $T$ in $G$, of length $|T|\leq|T'|+|U|-4$.
\end{proof}

It is straightforward to verify that both Reduction 1 and 2 satisfy the hypothesis of Lemma \ref{lemma:2corte_diamante}.
In the case of Reduction 1, the vertex set of the replaced structure has size $|U|=6$,
and in the case of Reduction 2, the vertex set of the replaced structure has size $|U|=8$.
Whatever the case, we have that $|V(G)|=|V(H)|+|U|-4$, and then
\begin{eqnarray*}
 |T| &\leq & |T'|+|U|-4 \\
     &\leq \alpha|V(H)|-2 +|U|-4 \\
     &= & \alpha|V(G)|-2-(\alpha-1)(|V(G)|-|V(H)|) \\
     &\leq & \alpha|V(G)|-2,
\end{eqnarray*}
where the first inequality holds by Lemma \ref{lemma:2corte_diamante}.

\begin{itemize}[leftmargin=*]
\item Case of Reduction 3:  in this case the result is a  consequence of the following lemma.
\end{itemize}
\begin{lem}
\label{lemma:3corte_triangulo}
	Let $G=(V,E)$ a graph and $U \subset V$ such that $|U|=7$
	and the cut $\delta(U)$ has only three elements, say $\delta(U)=\{e_1, e_2, e_3\}$.
	Let $v_1, v_2, v_3 \in U$ be the three end vertices of  $e_1$, $e_2$ and $e_3$ in $U$, respectively.
	Let us suppose that the subgraph $G[U]$ contains a cycle $C$ of length at most 8
	which visits every vertex of $U$, and for every pair of vertices $v,w \in \{v_1, v_2, v_3\}$
	there exists a path $P(v,w)$ of length at most 7 which visits every vertex of $U$. 	
	Let $H$ be the graph resulted from replacing the subgraph $G[U]$ by a triangle
	and let $T'$ be a TSP tour in $H$. Then, there exists a TSP tour $T$ in $G$
	with length $|T|\leq|T'|+5$.
\end{lem}
\begin{proof}[Proof of Lemma \ref{lemma:3corte_triangulo}. ]
	Let $\chi^P(v,w)$ be the incidence vector of path $P(v,W)$ and  $\chi^C$ be the incidence vector of $C$.
	Let us denote by $x'$ the vector $\chi^{T'}$ (the incidence vector of $T'$).
	We are going to extend the TSP tour $T'$ to the original graph $G$ depending on
	the value of $x'$ on edges $e_1$, $e_2$ and $e_3$.
	We know that $x'(\{e_1, e_2, e_3\}):=x'(e_1)+x'(e_2)+x'(e_3)$ must be an even number,
	since $T'$ is a TSP tour. Considering that $x'$ takes values over $\{0,1,2\}$, we have that all
	the possible cases for the values of $x'(e_1)$,  $x'(e_2)$ and $x'(e_3)$ are as follows.
\begin{itemize}
\item Case $x'(e_1)=2$ and $x'(e_2)=x'(e_3)=0$ (or another possible permutation):
in this case we define the vector $x$ as
$$x(e)=
\begin{cases}
 \chi^C(e) & \text{, if } e \in E(G[U]), \\
 x'(e)    & \text{, if } e \in E(G)\setminus E(G[U]]).
\end{cases}
$$

\item Case $x'(e_1)=2$, $x'(e_2) \in \{0,2\}$ and $x'(e_3) \in \{0,2\}$ (or another possible permutation):
first we redefine $x'(e_2)=x'(e_3)=0$ and then we define the vector $x$ as the previous case.

\item Case $x'(e_1)=x'(e_2)=1$ and $x'(e_3)=0$ (or another possible permutation):
in this case we define the vector $x$ as
$$x(e)=
\begin{cases}
 \chi^{P(v_1,v_2)}(e) & \text{, if } e \in E(G[U]), \\
 x'(e)    & \text{, if } e \in E(G)\setminus E(G[U]]).
\end{cases}
$$

\item Case $x'(e_1)=x'(e_2)=1$ and $x'(e_3)=2$ (or another possible permutation): first we redefine $x'(e_3)=0$
and then we define the vector $x$ as the previous case.
\end{itemize}

Clearly, in any case $x$ is the incident vector of a TSP tour $T$ in $G$.
Since at most 5 edges were necessary to construct $x$ from $x'$, we have that $|T|\leq|T'|+5$.
\end{proof}

To see that all possible structures that are considered in Reduction 3
satisfy the hypothesis of Lemma \ref{lemma:3corte_triangulo}, we only need to check by inspection
that there exist a cycle $C$ of length at most 8 and paths of length at most 7 connecting every pair of vertices,  which
visit every vertex of the structures of Figure \ref{fig:posibles_reduccion3}.
\begin{figure}[htbp]
\centering
\includegraphics[width=.7\textwidth]{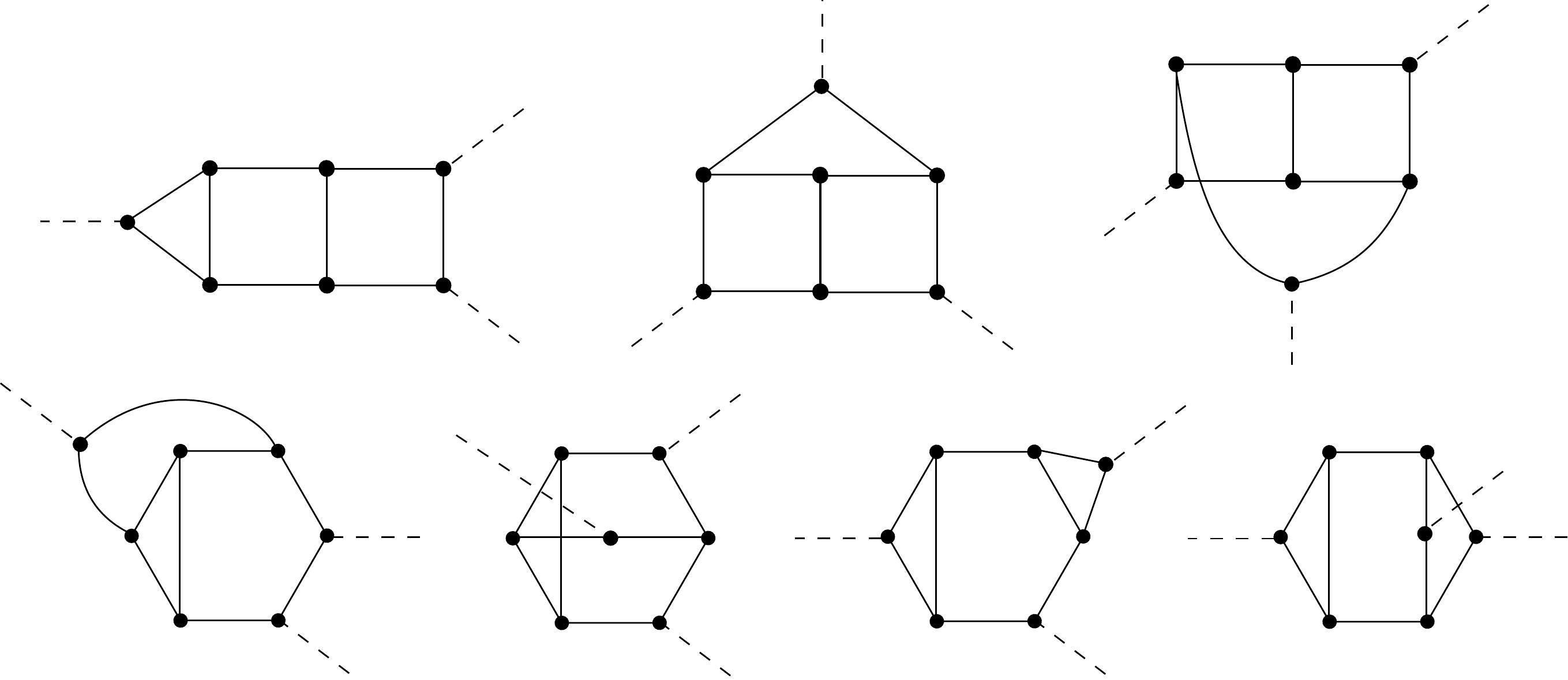}
\caption{All possible structures that are considered in Reduction 3.}
\label{fig:posibles_reduccion3}
\end{figure}

Then, applying the Reduction 3, we can define a TSP tour $T$ in the original graph with length
\begin{eqnarray*}
 |T| &\leq & |T'|+5 \\
     &\leq & \alpha|V(H)|-2 +5 \\
     &= & \alpha|V(G)|-2+5 + \alpha(|V(H)|-|V(G)|) \\
     &= & \alpha|V(G)|-2+(5-  \alpha 4) \\
     &\leq & \alpha|V(G)|-2,
\end{eqnarray*}
where the first inequality holds by Lemma \ref{lemma:3corte_triangulo} and the last one since $5/4 \leq \alpha$.

\begin{itemize}[leftmargin=*]
\item Case of Reduction 4: in this case it is easy to construct a TSP tour $T$ in the original graph $G$ with length $|T|\leq|T'|+4$.
Then
\begin{eqnarray*}
 |T| &\leq & |T'|+4 \\
     &\leq & \alpha|V(H)|-2 +4 \\
     &= & \alpha|V(G)|-2+4 + \alpha(|V(H)|-|V(G)|) \\
     &= & \alpha|V(G)|-2+(4-  \alpha 4) \\
     &\leq & \alpha|V(G)|-2,
\end{eqnarray*}
where the last inequality holds since $\alpha \geq 1$.
\end{itemize}

Considering this latter case, we finish the proof of Proposition \ref{proposition:conserva_aprox}.
\end{proof}

Finally, note that --as mentioned above--  only a linear number of reduction steps need,  and each step requires only polynomial time, not only to
find the desired structure, but also to recover the TSP tour in the original graph. Thus this graph simplification phase runs in polynomial time.

\section{2-connected cubic graphs: Eulerian subgraph cover phase}\label{section:main}
%For every edge set $F\subseteq E$, the vector $\chi^F \in \{0,1\}^E$ given by $\chi^f_e = 1$ if $e\in F$ and $\chi^f_e=0$ otherwise, is known as the
%\emph{incident vector} of $F$.
We say that a matching $M$ is \emph{3-cut perfect} if $M$ is a perfect matching intersecting every 3-cut in exactly one edge.
Boyd et al.~\cite{BSSS11} have shown the following lemma.

\begin{lem}[\cite{BSSS11}] \label{lemma:3perfMatchings}
Let $G=(V,E)$ be a $2$-connected cubic graph.
Then, the vector $\frac{1}{3}\chi^E$ can be expressed as a convex combination of incident vectors of $3$-cut perfect matchings of $G$.
This is, there are 3-cut perfect matchings $\{M_i\}_{i=1}^k$ and positive real numbers $\lambda_1,\lambda_2,\dots,\lambda_k$ such that

\begin{tabular}{lcl}
\begin{minipage}[b]{.3\textwidth}
\begin{equation}
\sum_{i=1}^k \lambda_i =1 \label{eq:sum1}
\end{equation}
\end{minipage} &\qquad\qquad and \qquad\qquad&
\begin{minipage}[b]{.3\textwidth}
\begin{equation}
\frac{1}{3}\chi^E=\sum_{i=1}^k \lambda_i \chi^{M_i}. \label{eq:convexMat}
\end{equation}
\end{minipage}
\end{tabular}

Furthermore, Barahona~\cite{B04} provides an algorithm to find a convex combination of $\frac13 \chi^E$ having $k\leq 7n/2-1$ in
$O(n^6)$ time.
\end{lem}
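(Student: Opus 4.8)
The plan is to derive this statement from Edmonds' description of the perfect matching polytope together with a short parity argument that makes the ``$3$-cut perfect'' property automatic. First I would show that the point $x=\tfrac13\chi^E$ lies in the perfect matching polytope $\mathrm{PM}(G)$, i.e.\ the convex hull of incidence vectors of perfect matchings of $G$. By Edmonds' theorem, $\mathrm{PM}(G)$ equals the set of $x\in\mathbb{R}^{E}$ with $x\geq 0$, $x(\delta(v))=1$ for all $v\in V$, and $x(\delta(S))\geq 1$ for every odd $S\subseteq V$. Nonnegativity is clear, and $x(\delta(v))=\tfrac13\cdot 3=1$ since $G$ is cubic. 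For the odd-set inequalities, write $e(S)$ for the number of edges of $G$ with both ends in $S$; then $3|S|=2e(S)+|\delta(S)|$, so $|\delta(S)|\equiv|S|\pmod 2$. Hence for an odd set $S$ the cut $\delta(S)$ has an odd number of edges, and since $G$ is $2$-connected (in particular it is connected and has no bridge) this number is at least $3$ (for $|S|=1$ it is exactly $3$); therefore $x(\delta(S))=\tfrac13|\delta(S)|\geq 1$. Thus $\tfrac13\chi^E\in\mathrm{PM}(G)$, and by Edmonds' theorem it is a convex combination $\tfrac13\chi^E=\sum_{i=1}^k\lambda_i\chi^{M_i}$ with $\lambda_i>0$, $\sum_i\lambda_i=1$, and each $M_i$ a perfect matching; Carath\'eodory lets one take $k\leq|E|+1$. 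This already yields \eqref{eq:sum1} and \eqref{eq:convexMat}.

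Next I would check that \emph{any} such decomposition is automatically $3$-cut perfect. Let $\delta(S)$ be a $3$-cut; by the parity identity above $|S|$ is odd. For any perfect matching $M$, writing $b=|M\cap\delta(S)|$ and $a$ for the number of edges of $M$ with both ends in $S$, one has $|S|=2a+b$, so $b\equiv|S|\equiv 1\pmod 2$; in particular $|M_i\cap\delta(S)|\geq 1$ for every $i$. Summing \eqref{eq:convexMat} over the three edges of $\delta(S)$,
\[
1=\tfrac13|\delta(S)|=\sum_{e\in\delta(S)}\sum_{i=1}^{k}\lambda_i\chi^{M_i}_e=\sum_{i=1}^{k}\lambda_i\,|M_i\cap\delta(S)|\geq\sum_{i=1}^{k}\lambda_i=1,
\]
so equality holds throughout and $|M_i\cap\delta(S)|=1$ for every $i$. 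Hence every $M_i$ meets every $3$-cut in exactly one edge, which proves the first part of the lemma.

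For the ``furthermore'' part, Carath\'eodory is not constructive, so there I would simply invoke Barahona's algorithm~\cite{B04} for writing a point of the perfect matching polytope as a convex combination of polynomially many vertices: applied to $\tfrac13\chi^E$ — which is trivially computable and which we have just shown lies in $\mathrm{PM}(G)$ — it returns in $O(n^6)$ time a decomposition with $k\leq 7n/2-1$ terms, and by the previous paragraph every matching it outputs is $3$-cut perfect. There is no real obstacle beyond quoting Edmonds' polytope theorem and Barahona's decomposition procedure as black boxes; the one point worth stating carefully is that $3$-cut-perfectness need not be built into the construction but follows from the fact that every $3$-cut in a cubic graph separates an odd set.
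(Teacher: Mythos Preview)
The paper does not prove this lemma; it is quoted from Boyd et al.~\cite{BSSS11} (with the algorithmic claim attributed to Barahona~\cite{B04}) and used as a black box. Your proof is correct and is essentially the standard argument one would expect in the cited source: verify $\tfrac13\chi^E$ lies in Edmonds' perfect matching polytope (using cubicity, the parity identity $3|S|=2e(S)+|\delta(S)|$, and bridgelessness), then observe that the equality $\sum_i\lambda_i|M_i\cap\delta(S)|=1$ together with the odd parity of $|M\cap\delta(S)|$ forces every matching in the support to hit every $3$-cut exactly once. Invoking Barahona's decomposition procedure for the constructive bound is exactly what the paper's ``furthermore'' sentence does as well.
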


\medskip
Consider a graph $G$ that is cubic, 2-connected and \emph{reduced}.
That is, no 6-cycle in $G$ has chords.
We also assume that $n\geq 10$ as every cubic 2-connected graph on less than 10 vertices is Hamiltonian.

Let $\{M_i\}_{i=1}^k$ and $\{\lambda_i\}_{i=1}^k$ be the 3-cut matchings and coefficients guaranteed by Lemma~\ref{lemma:3perfMatchings}.
Let $\{\C_i\}_{i=1}^k$ be the family of \emph{cycle covers} associated to the matchings $\{M_i\}_{i=1}^k$.
This is, $\C_i$ is the collection of cycles induced by $E\setminus M_i$.
Since each matching $M_i$ is 3-cut perfect, the corresponding cycle cover $\C_i$ does not contain 3-cycles.
Furthermore every 5-cycle in $\C_i$ is induced (i.e., it has no chord in $G$).

In what follows we define three local operations, (U1), (U2) and (U3) that will be applied iteratively to the current family of covers. Each operation
is aimed to reduce the contribution of each component of the family. We stress here that operations (U2) and (U3) are exactly those used by Boyd et
al., but for reader's convenient we explain them here. We start with operation (U1).

\begin{itemize}
\item[(U1)] Consider a cycle cover $\C$ of the current family. If $C_1$, $C_2$ and $C_3$ are three disjoint cycles of $\C$, that intersect a fixed
6-cycle $C$ of $G$, then we merge them into the simple cycle obtained by taking their symmetric difference with $C$. This is, the new cycle in $V(C_1)
\cup V(C_2) \cup V(C_3)$ having edge set $(E(C_1) \cup E(C_2) \cup E(C_3)) \Delta E(C)$.
\end{itemize}

An example of (U1) is depicted in Figure~\ref{fig:operacion_i}. We apply (U1) as many times as possible to get a new cycle
cover $\{\C^{\UU1}_i\}_{i=1}^k$. Then we apply the next operation.
\begin{figure}[t]
\centering
\includegraphics[width=.35\textwidth]{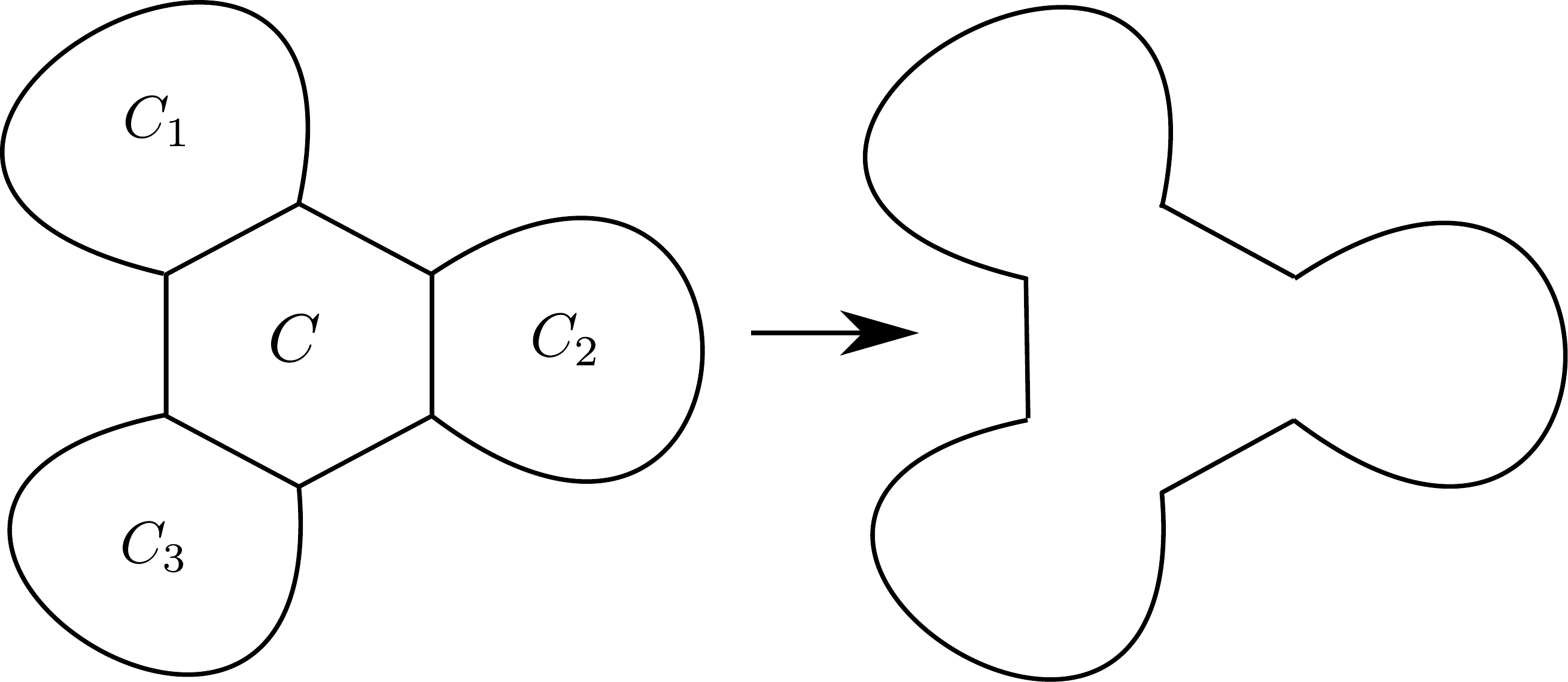}
\caption{Operation (U1).}
\label{fig:operacion_i}
\end{figure}

\begin{itemize}
 \item[(U2)]  Consider a cycle cover $\C$ of the current family. If $C_1$ and $C_2$ are two disjoint cycles of $\C$, that intersect a fixed 4-cycle
$C$ of $G$, then we merge them into a simple cycle obtained by taking their symmetric difference with $C$. This is, the new cycle in $V(C_1) \cup
V(C_2)$ having edge set $(E(C_1) \cup E(C_2)) \Delta E(C)$.
\end{itemize}

We apply operation (U2) as many times as possible to obtain a new cycle cover $\{\C^{\UU2}_i\}_{i=1}^k$ of $G$. The next operation we define
may transform a cycle cover $\C$ of the current family into a Eulerian subgraph cover $\Gamma$, having components that are
not necessarily cycles.

\begin{itemize}
 \item[(U3)] Let $\Gamma$ be an Eulerian subgraph cover of the current family. If $\gamma_1$ and $\gamma_2$ are two components of $\Gamma$, each one
having at least 5 vertices, whose vertex set intersect a fixed 5-cycle $C$ of $G$, then combine them into a single component, by adding at most 1
extra edge.
\end{itemize}

To explain how we combine the components in operation (U3) we need the following two lemmas.

\begin{lem}[\cite{BSSS11}]\label{lemma:merging}
 Let $H_1$ and $H_2$ be two connected Eulerian multi-subgraphs of a cubic graph $G$ having at least two vertices in common and let $H_3$ be the sum of
$H_1$ and $H_2$, i.e., the union of their vertices and the sum of their edges (allowing multiple parallel edges). Then we can remove (at least) two
edges from $H_3$ such that it stays connected and Eulerian.
\end{lem}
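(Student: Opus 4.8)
The plan is to turn ``remove two edges'' into a minimality statement and then exploit the hypotheses through the block structure of $H_3$.

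First I would record the routine facts. $H_3$ is connected, since $H_1$ and $H_2$ are connected and share a vertex, and Eulerian, since a vertex that is even in $H_1$ and in $H_2$ is even in their sum; in particular $H_3$ has no bridge. Among all connected Eulerian subgraphs of $H_3$ spanning $V(H_3)$, pick one, $M$, with the fewest edges. Two Eulerian subgraphs on the same ground set differ by an even subgraph, and a nonempty even subgraph has at least two edges; so if $M\neq H_3$ then $E(H_3)\setminus E(M)$ is a set of at least two edges whose deletion leaves a connected Eulerian spanning graph, which is exactly the claim. Hence it suffices to derive a contradiction from the assumption that $H_3$ is itself edge-minimal.

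The core is a structural claim: an edge-minimal connected Eulerian multigraph that embeds in a cubic graph is a \emph{cactus}, i.e.\ every block is a cycle (possibly a pair of parallel edges), these cycles meeting only at cut vertices. The mechanism is repeated elimination. If some edge had multiplicity $\geq 3$ we could delete two of its copies and stay connected and Eulerian. If a block were $2$-connected but not a cycle it would contain a vertex of even degree $\geq 4$ (using cubicity, together with the fact that, $H_3$ being Eulerian, an interior vertex of a block has even degree), hence a doubled edge incident to it; by minimality that doubled edge has multiplicity exactly $2$ and its two copies form a $2$-edge-cut, and a contraction/minimality argument on the small side of that cut produces another removable pair. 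Only a cactus survives. I expect making this reduction airtight --- carefully handling cut vertices of odd degree inside a block, and checking that minimality is inherited under the contractions --- to be the main obstacle.

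Granting the structural claim, the finish is clean. The cycle space of the multigraph $H_3$ over $\mathrm{GF}(2)$ has the block-cycles as a basis, with pairwise disjoint supports; since $H_1$ is Eulerian, $\chi^{H_1}$ lies in this cycle space, so $\chi^{H_1}$ is the indicator of a union of blocks. Thus every block of $H_3$ lies entirely within $H_1$ or entirely within $H_2$, partitioning the blocks into two families, both nonempty because a connected graph on $\geq 2$ vertices has an edge. Now connectedness of $H_1$ forces its family of blocks to be connected in $H_3$, which on the block--cut tree means that one never has to route through a block of $H_2$ to pass between two blocks of $H_1$, and symmetrically for $H_2$; a short argument on this tree (whose cut-nodes are pairwise non-adjacent) then shows that the two families are separated by exactly one cut vertex. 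That cut vertex is the only vertex common to $H_1$ and $H_2$, contradicting the hypothesis that $H_1$ and $H_2$ share at least two vertices. Hence $H_3$ is not edge-minimal, and two edges can be removed as claimed.
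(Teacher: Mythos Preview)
Your approach is correct in outline and, once the structural lemma is completed, it does go through; but it is a substantial detour compared to the paper. The paper's proof is four lines: fix two common vertices $u,v$, split each of $H_1,H_2$ (connected, Eulerian, containing $u$ and $v$) into two edge-disjoint $(u,v)$-walks, obtaining four walks $P_1,\dots,P_4$ that partition $E(H_3)$; since $u$ has only three neighbours in $G$ but at least four incident edges in $H_3$, two of these walks leave $u$ along parallel copies of the same edge, and deleting that pair keeps every degree even while the two untouched walks still join $u$ to $v$ and the two truncated walks keep all their vertices attached to one of $u,v$.

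Your route---reduce to showing $H_3$ is not edge-minimal, prove that a minimal connected Eulerian multigraph embedded in a cubic graph must be a cactus, and then use the $\mathrm{GF}(2)$ cycle space and the block--cut tree to see that any decomposition of a cactus into two connected Eulerian pieces meets in a single cut vertex---is valid, and the cactus claim is true. The cleanest proof of that claim, however, is not the block-by-block reduction you sketch (the ``interior vertex of degree $\geq 4$'' step really does run into the cut-vertex issue you flag) but a direct induction on $|E|$: a vertex of degree $\geq 4$ forces a doubled edge by cubicity, minimality forces that doubled pair to be a $2$-edge cut, and the two sides are smaller minimal instances, reassembled into a cactus by the digon block. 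What your approach buys is an independent structural characterisation; what it costs is brevity and constructiveness---the paper names the two removable edges explicitly, whereas your argument only certifies their existence.
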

\begin{proof}
Let $u$ and $v$ be in both $H_1$ and $H_2$. The edge set of $H_3$ can be partitioned into edge-disjoint $(u,v)$-walks
$P_1$, $P_2$, $P_3$ and $P_4$. Since $u$ has degree 3 in $G$, there must be two parallel edges incident to $u$ that are on different paths, say $e_1
\in P_1$ and $e_2 \in P_2$. If we remove $e_1$ and $e_2$ then the graph stays Eulerian. Moreover, it stays connected since $u$ and $v$ are still
connected by $P_3$ and $P_4$ and every vertex of $P_1$ and $P_2$ is still connected to one of $u$ and $v$.
\end{proof}

\begin{lem}[Similar to an observation in \cite{BSSS11}] \label{lemma:property-P}
If $v$ belongs to a component $\gamma$ of any of the covers $\Gamma$ considered by the algorithm, then at least two of its 3 neighbors are in the same
component.
\end{lem}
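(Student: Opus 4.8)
The plan is to prove the lemma by induction along the sequence of covers the algorithm produces — the initial family $\{\C_i\}$, then each cover obtained after one application of (U1), (U2) or (U3) — using the single structural fact that matters here: these operations only \emph{merge} components, and never split one or eject a vertex from its component.

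For the base case, take an initial cover $\C_i=E\setminus M_i$. Since $G$ is simple and cubic and $M_i$ is a perfect matching, $\C_i$ is a simple $2$-regular graph, so every component is a cycle of length at least three; the two edges of that cycle incident to $v$ thus lead to two distinct neighbors of $v$ lying in its component, exactly as claimed.

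For the inductive step, assume the statement holds for the current cover $\Gamma$ and let $\Gamma'$ result from one application of (U1), (U2) or (U3). I would first record the key structural property: each such operation selects a fixed collection $\gamma_1,\dots,\gamma_r$ of components of $\Gamma$ (with $r\in\{2,3\}$) together with a short cycle $C$ of $G$ all of whose vertices already lie in $\gamma_1\cup\cdots\cup\gamma_r$, replaces $\gamma_1,\dots,\gamma_r$ by a \emph{single} connected component $\gamma'$ with $V(\gamma')\supseteq V(\gamma_1)\cup\cdots\cup V(\gamma_r)$, and leaves every other component of $\Gamma$ unchanged in $\Gamma'$. For (U1) and (U2) this is immediate, since $\gamma'$ is by definition a simple cycle on the union of the vertex sets; for (U3) it is the content of the merging construction, which glues $\gamma_1$ and then $\gamma_2$ onto $C$ via Lemma~\ref{lemma:merging} and yields a connected Eulerian subgraph on $V(\gamma_1)\cup V(\gamma_2)\cup V(C)$. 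Granting this, for any vertex $v$ its component $\gamma$ in $\Gamma$ satisfies $V(\gamma)\subseteq V(\gamma')$, where $\gamma'$ denotes the component of $v$ in $\Gamma'$: indeed $\gamma$ is either one of the merged pieces or equals $\gamma'$. Since $\gamma$ contains at least two neighbors of $v$ by the inductive hypothesis, so does $\gamma'$, completing the induction.

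The step I expect to be the main obstacle is the structural property for (U3) — specifically, verifying that the $5$-cycle $C$ it uses does not reach any component besides $\gamma_1$ and $\gamma_2$, so that (U3) is indeed a pure merge. I would establish this with a short degree/parity argument on the five vertices of $C$: each has degree $2$ along $C$ and even degree at least $2$ inside its own component, and since it has only three incident edges in $G$ it must share an edge of $C$ with that component; tracking how the vertices of $C$ split among the components of $\Gamma$ then forces them all into $V(\gamma_1)\cup V(\gamma_2)$. Once this is in hand the rest is routine — in particular the argument is completely insensitive to the doubled edges that (U3) may introduce, precisely because it only ever uses that the component of a vertex can gain vertices but never lose them.
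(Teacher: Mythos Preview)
Your proof is correct and follows essentially the same idea as the paper: the property holds for cycles, and since the operations only merge components (so vertex sets of components only grow), it persists. The paper's version is a touch more economical --- it observes that (U1) and (U2) keep the cover a \emph{cycle} cover, so the property is trivially true right up to the first application of (U3), and only the (U3) merge needs to be argued inductively.

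Your ``main obstacle'' paragraph is unnecessary for this lemma. All the lemma needs is that the vertex set of the new component contains those of $\gamma_1$ and $\gamma_2$, which is immediate from the description of (U3); whether the $5$-cycle $C$ could touch a third component is a question about the well-definedness of (U3) as a cover operation, not about the two-neighbor property. The paper in fact settles that question \emph{after} proving the lemma, by applying the lemma (already known to hold before (U3), since components are then cycles) to conclude that any cycle meeting a component shares at least two vertices with it --- so a $5$-cycle cannot meet three components.
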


\begin{proof}
The lemma holds trivially when $\gamma$ is a cycle. In particular, the lemma holds before the application of operation (U3). As the vertex
set of a component created by operation (U3) is the union of the vertex set of 2 previous components, the lemma also holds after  operation (U3).
\end{proof}

Observe that if $\gamma$ is a component of a cover in the current family, and $C$ is an arbitrary cycle of $G$ containing a vertex of $\gamma$ then,
by the cubicity of $G$ and Lemma~\ref{lemma:property-P}, $C$ and $\gamma$ must share at least two vertices. In particular, if $\gamma_1$ and
$\gamma_2$ are the two components intersecting a 5-cycle $C$ considered by operation (U3), then one of them, say $\gamma_1$, must contain exactly 2
vertices of $C$ and the other one must contain the other 3 vertices (note that they cannot each share 2 vertices, since then a vertex of $C$ would not
be included in the cover). To perform  (U3) we first merge $\gamma_1$ and $C$ using Lemma~\ref{lemma:merging} removing 2 edges, and then we
merge the resulting component with $\gamma_2$, again removing 2 edges. Altogether, we added the 5 edges of $C$ and removed 4 edges.
Finally, we remove 2 edges from each group of triple or quadruple edges that may remain, so that each edge appears at most twice in each component.
Figure~\ref{fig:opiiPent} shows an example of  (U3).
\begin{figure}[t]
\centering
\includegraphics[width=.45\textwidth]{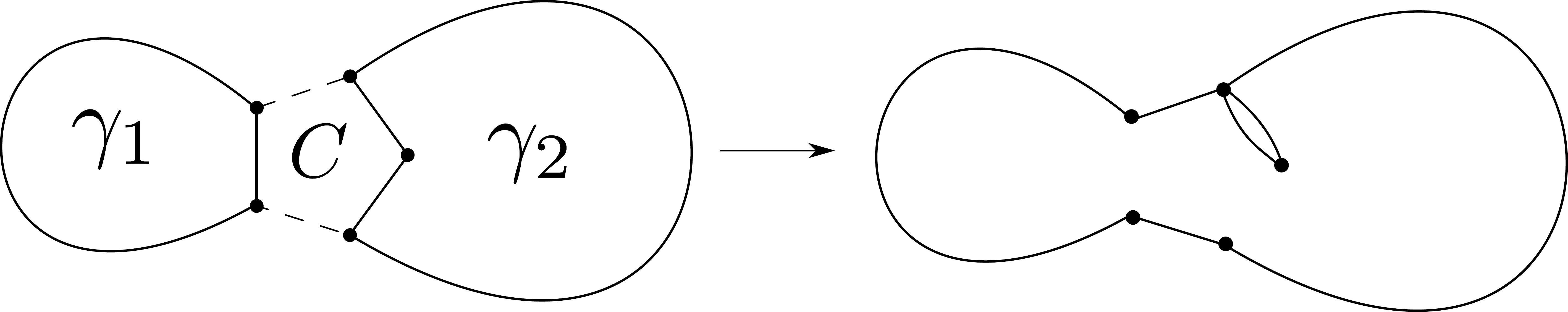}
\caption{Sketch of operation (U3).}
\label{fig:opiiPent}
\end{figure}

\begin{rem}\label{remark:ciclos}
Operation (U3) generates components having at least 10 vertices. Therefore, any component having 9 or fewer vertices must be a cycle. Furthermore, all
the cycles generated by (U1) or (U2) contain at least 10 vertices (this follows from the fact that $G$ is reduced, and so operation (U2)
always involve combine 2 cycles of length at least 5). From here we observe that any component having 9 or fewer vertices must be in the original
cycle cover $\{C_i\}_{i=1}^k$.
\end{rem}

We say that a 4-cycle $C$ with a chord is isolated if the two edges incident to it are not incident to another chorded 4-cycle.
The following is the main result of this section. Before proving it we show it implies the main result of the paper.
\medskip
\begin{prop}[Main Proposition]
Let $\{\Gamma_i\}_{i=1}^k$ be the family of Eulerian subgraph covers at the end of the algorithm (that is, after applying all operations), and let
$z(v)=z_\calD(v)$ be the average contribution of vertex $v$ for the distribution $\calD=\{(\Gamma_i, \lambda_i)\}_{i=1}^k$. Furthermore, let
$\gamma_i$ be the component containing $v$ in $\Gamma_i$ and $\Gamma(v) = \{\gamma_i\}_{i=1}^k$. We have the following.
\begin{enumerate}
\item[(P1)] If $v$ is in an isolated chorded 4-cycle then $z(v)\leq 4/3$.
\item[(P2)] If $v$ is in a non-isolated chorded 4-cycle of $G$ then $z(v)\leq 13/10$.
\item[(P3)] Else, if there is an induced 4-cycle $\gamma \in \Gamma(v)$, then $z(v)\leq 4/3 - 1/60$.
\item[(P4)] Else, if there is an induced 5-cycle $\gamma \in \Gamma(v)$, then $z(v)\leq 4/3 - 1/60$.
\item[(P5)] Else, if there is an induced 6-cycle $\gamma \in \Gamma(v)$, then we have both $z(v) \leq 4/3$ and $\sum_{w \in V(\gamma)} z(w) \leq
6\cdot (4/3 - 1/729)$.
\item[(P6)] In any other case $z(v) \leq 13/10$.
\end{enumerate}
\end{prop}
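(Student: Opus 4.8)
The plan is to bound, for each cover $\Gamma_i$, the contribution $z_{\Gamma_i}(v)=(\ell_i+2)/h_i$ of $v$ (its component $\gamma_i$ having $h_i$ vertices and $\ell_i$ edges) according to the size and type of $\gamma_i$, and then average over $\calD$. First I would record a per-cover hierarchy. If $\gamma_i$ is a cycle on $h$ vertices then $z_{\Gamma_i}(v)=1+2/h$, which equals $3/2,7/5,4/3$ for $h=4,5,6$ and is at most $9/7<13/10$ for $h\ge7$. By Remark~\ref{remark:ciclos} any component on at most nine vertices is an original cover-cycle, hence not a triangle; cycles created by (U1) or (U2) have at least ten vertices and contribute at most $6/5$; and a component produced by (U3) arises by merging $r+1\ge2$ cycles, each on at least five vertices, via $r$ applications of (U3), each adding at most one net edge, so $\ell_i\le h_i+r$ while $h_i\ge5(r+1)$, giving $z_{\Gamma_i}(v)\le1+(r+2)/(5(r+1))\le13/10$. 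Hence $z_{\Gamma_i}(v)\le13/10$ unless $\gamma_i$ is an original $4$-, $5$- or $6$-cycle, where it is exactly $3/2$, $7/5$ or $4/3$; and $z_{\Gamma_i}(v)\le3/2$ always. This already settles (P6) (a chorded $4$-cycle is never Eulerian, and would anyway put $v$ on a chorded $4$-cycle of $G$, so none of the exceptional component types can occur) and the pointwise half of (P5) (every term is at most $4/3$).

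For the rest I would use two tools. First, writing $e_1,e_2,e_3$ for the edges at $v$ and $I_j=\{i:e_j\in M_i\}$, Lemma~\ref{lemma:3perfMatchings} gives $\sum_{i\in I_j}\lambda_i=1/3$ for each $j$ and $\sum_{i:e\in M_i}\lambda_i=1/3$ for every edge $e$; moreover, if $\gamma_i$ is a small cycle of $G$ (at most nine vertices, hence an original cover-cycle by Remark~\ref{remark:ciclos}) then all of its leaving edges lie in $M_i$, so the weight of covers realizing a particular small cycle as a component is at most the $1/3$-weight of any one of its leaving edges and is in fact controlled by intersections of such classes. Second, since (U1), (U2), (U3) are applied exhaustively the final covers are rigid: every $4$-cycle of $G$ is contained in a single component of every $\Gamma_i$ (two final components meeting it would trace back to two post-(U2) cover-cycles meeting it, contradicting exhaustion of (U2)), and every $6$-cycle of $G$ meets at most two components of every $\Gamma_i$ (else (U1) would still apply). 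This rigidity is what prevents $v$ from lying in a small cycle in too many covers.

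With these in hand the remaining parts are a case analysis. For (P1)/(P2): when $v$ lies on a chorded $4$-cycle $D$, the trace of $\gamma_i$ on $D$ is, according to which leaving edges of $D$ lie in $M_i$, a Hamiltonian path or cycle of $D$ (the enumeration is the one in the proof of Proposition~\ref{proposition:conserva_aprox}); averaging over $I_1,I_2,I_3$ gives $z(v)\le4/3$ for isolated $D$, while a leaving edge shared with a second chorded $4$-cycle forces a strictly larger component of positive weight, improving this to $13/10$. For (P3)/(P4): the hypothesis gives a cover realizing $v$ in an induced $4$-cycle (resp.\ $5$-cycle), and the rigidity freezes the corresponding small cycle of $G$ inside one component in every cover; I would bound the weight $p$ of covers placing $v$ in a $4$-cycle and $q$ of those placing it in a $5$-cycle by the $1/3$-weights of the relevant leaving edges, note that the complementary weight sits in components with at least seven vertices and so contributes strictly below $4/3$, and optimize the resulting small linear program in these weights to get $z(v)\le4/3-1/60$. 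For the second half of (P5): each of the six vertices of $\gamma$ satisfies $z(\cdot)\le4/3$ by the already-proved parts (P1)--(P4), (P6) and the pointwise half of (P5); summing the six per-vertex estimates and using that, by the $6$-cycle rigidity, the six vertices occupy at most two components in every cover while the six leaving edges of $\gamma$ carry weight-$1/3$ classes, I would track how any per-cover excess ($z=3/2$ or $7/5$ on a vertex of $\gamma$) must be offset by deficits on the remaining vertices, producing a net saving of order $6/3^6$, i.e.\ $\sum_{w\in V(\gamma)}z(w)\le6(4/3-1/729)$.

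The main obstacle is the quantitative part of (P3), (P4) and especially the collective bound in (P5): one must enumerate all the ways a small $4$-, $5$- or $6$-cycle of $G$ can sit inside the possibly large components of the other covers, record exactly which leaving edges this forces into the matchings, and then set up and certify the small linear programs in the edge-class weights that deliver the constants $1/60$ and $1/729$. Once the per-cover hierarchy of the first paragraph and the weight identities and rigidity of the second are established, the remainder is careful but mechanical accounting.
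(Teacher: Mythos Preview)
Your per-cover hierarchy is exactly the paper's Lemma~\ref{lema:tecnicoZiv}, and your treatment of (P1), (P2), (P6) and the pointwise half of (P5) is correct and parallels the paper. For (P3) and (P4) the outline is right in spirit, though the paper needs finer case splits than you indicate (for instance in (P3) the sub-case where $V(\gamma)$ sits inside an $8$-cycle of some initial cover, Figure~\ref{fig:ciclo4prob}, is analysed separately, and in both parts reducedness of $G$ is used to force cycle lengths up to $7$, $8$ or $9$ in the various index classes before the linear combination is evaluated).

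The genuine gap is the collective bound in (P5). Your plan is to balance per-cover ``excess'' contributions ($z=3/2$ or $7/5$ on some vertex of $\gamma$) against deficits on the others. But in the paper's main case there is no excess at all: one first disposes of the possibility that a $4$- or $5$-cycle of some initial cover touches $\gamma$ (using (P4) directly, or noting that (U1)/(U2) would then already have destroyed $\gamma$), after which $z_i(w)\le 4/3$ for \emph{every} $w\in V(\gamma)$ and every $i$. The saving of $1/729$ must therefore be manufactured strictly below $4/3$, and this is not the ``mechanical accounting'' you describe. The paper partitions $[k]$ into eighteen classes $X_0,\{X_1^q\},\{X_2^q\},\{Y_2^q\},\{X_3^q\}$ according to exactly which edges of $\gamma$ lie in $M_i$, and splits on whether $x_3=\lambda(\{i:|E(\gamma)\cap M_i|=3\})$ exceeds $\delta=2/81$. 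When $x_3\ge\delta$ the $X_3$ indices supply the saving directly (four of the six vertices lie in a cycle of length $\ge 8$). When $x_3<\delta$, two nontrivial lemmas are required: a numerical one (Lemma~\ref{lemma:betaTec}) showing that one of seven specified triples or quadruples among the two-edge classes $X_2^q,Y_2^q$ has every member of weight at least $\beta=1/9-\delta$; and a structural one (Lemma~\ref{lemma:betaTec2}) showing that in each such configuration at least one of those classes is \emph{long}, meaning two vertices of $\gamma$ lie in a cycle of length $\ge 7$ in $\C_i^{\UU1}$. The proof of Lemma~\ref{lemma:betaTec2} is the non-mechanical heart of (P5): assuming all the relevant classes are short pins down the second neighbourhood of $\gamma$ (Figures~\ref{fig:hex_interX2}--\ref{fig:hex_interY2}), and one then exhibits three cycles of some $\C_i$ meeting a $6$-cycle \emph{adjacent} to $\gamma$, so that operation (U1) would have merged them into a cycle of length $\ge 16$---contradicting shortness. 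Your ``$6$-cycle rigidity'' on $\gamma$ itself is used but is not the source of the saving; it is (U1) acting on $6$-cycles in the neighbourhood of $\gamma$ that does the real work, and your proposal does not anticipate this mechanism.
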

\medskip

\begin{thm}\label{thm:main}
  Every 2-connected cubic graph $G=(V,E)$ admits a TSP tour of length at most $(4/3-\epsilon)|V|-2$, where $\epsilon=1/61236$. This tour can be
computed in polynomial time.
\end{thm}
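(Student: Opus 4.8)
The plan is to derive Theorem~\ref{thm:main} from the Main Proposition together with Proposition~\ref{proposition:conserva_aprox} and the simplification phase. First I would reduce to the case of a reduced graph $H$: by the reduction steps of Section~\ref{section:simplification} we obtain in polynomial time a cubic, 2-connected, reduced graph $H$, and by Proposition~\ref{proposition:conserva_aprox} (applied with $\alpha = 4/3 - \epsilon$, which indeed satisfies $5/4 \le \alpha \le 4/3$ for the chosen $\epsilon$) it suffices to exhibit a TSP tour of $H$ of length at most $(4/3-\epsilon)|V(H)|-2$. We may also assume $|V(H)| = n \ge 10$, since smaller 2-connected cubic graphs are Hamiltonian and then admit a tour of length $n < (4/3-\epsilon)n - 2$ is false for tiny $n$, so really the small-case check is that a Hamiltonian tour has length $n \le (4/3-\epsilon)n-2$ precisely when $n$ is not too small; the honest statement is to note Hamiltonicity handles those cases directly with a tour of length $n$.

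Next I would run the algorithm of Section~\ref{section:main} on $H$ to produce the distribution $\calD = \{(\Gamma_i,\lambda_i)\}_{i=1}^k$ over Eulerian subgraph covers, and invoke the Main Proposition to bound $\sum_{v \in V(H)} z_\calD(v)$. The key accounting step is to partition $V(H)$ according to which case (P1)--(P6) applies to each vertex, being careful that the cases are mutually exclusive in the order listed. Cases (P1), (P2), (P6) give $z(v) \le 4/3$ (indeed $\le 13/10$ in (P2) and (P6)); cases (P3) and (P4) give $z(v) \le 4/3 - 1/60$; and case (P5) is the delicate one, where one only gets $z(v) \le 4/3$ pointwise but $\sum_{w \in V(\gamma)} z(w) \le 6(4/3 - 1/729)$ over each induced 6-cycle component $\gamma$, i.e.\ an average saving of $1/729$ per vertex on those. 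The worst case for the global sum is when as many vertices as possible lie in category (P5) (the weakest nontrivial saving) or, worse, when vertices give no saving at all — but (P1), (P2), (P6) with no saving beyond $4/3$ must be accounted for. The main obstacle, and the place where the precise constant $\epsilon = 1/61236$ is pinned down, is showing that vertices contributing exactly $4/3$ (those in isolated chorded 4-cycles under (P1), and the residual "any other case" vertices) cannot be too numerous: one must argue that each such vertex can be charged against nearby vertices enjoying a strict saving, using structural facts about reduced graphs (e.g.\ that chorded 4-cycles in a reduced graph are somewhat sparse, that an isolated chorded 4-cycle forces its two attachment edges to lead to vertices in larger, hence cheaper, components, and that the (P6) "other" vertices sit in components of size $\ge 10$ whose contribution is at most $12/10$). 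Combining these local charging arguments yields a uniform bound $z_\calD(v) \le 4/3 - \epsilon'$ on a suitable weighted average, and hence $\sum_{v} z_\calD(v) \le (4/3 - \epsilon) n$ with $\epsilon = 1/61236$.

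Having bounded the average contribution, I would conclude as in Boyd et al.: since $\sum_{v\in V} z_\calD(v) = \sum_i \lambda_i \big(|E(T(\Gamma_i))| + 2\big)$, there exists an index $i$ with $|E(T(\Gamma_i))| + 2 \le \sum_v z_\calD(v) \le (4/3-\epsilon)n$, so the tour $T(\Gamma_i)$ of $H$ — obtained by contracting the components of $\Gamma_i$, adding a doubled spanning tree of the contracted (connected) multigraph, and uncontracting — has at most $(4/3-\epsilon)n - 2$ edges. Feeding this tour back through Proposition~\ref{proposition:conserva_aprox} produces a TSP tour of $G$ of length at most $(4/3-\epsilon)|V(G)| - 2$. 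All steps — the linear sequence of reductions, solving the LP of Lemma~\ref{lemma:3perfMatchings} via Barahona's $O(n^6)$ algorithm, the polynomially many applications of (U1), (U2), (U3), selecting the best $\Gamma_i$, and reconstructing the tour in $G$ — run in polynomial time, giving the claimed algorithmic statement. The part that "requires significant work," as the introduction warns, is verifying the Main Proposition itself (Sections~\ref{section:simplification}--\ref{section:main}) and the charging argument that converts its case analysis into the single constant $\epsilon = 1/61236$.
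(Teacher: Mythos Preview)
Your overall architecture is right --- reduce to a reduced graph via Proposition~\ref{proposition:conserva_aprox}, run the cover algorithm, invoke the Main Proposition, and pick the best $\Gamma_i$ --- and this is exactly what the paper does. But the heart of the proof of Theorem~\ref{thm:main}, namely the step converting the case-by-case bounds (P1)--(P6) into the single constant $\epsilon=1/61236$, is not a vague ``local charging'' of bad vertices against nearby good ones. Two concrete ingredients are missing from your sketch, and without them the constant cannot be derived.

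First, the only vertices with \emph{no} strict saving are those in (P1), i.e.\ in isolated chorded $4$-cycles (you lump (P2) and (P6) in with them, but both give $z(v)\le 13/10 < 4/3-1/60$, so they already save at least $1/60$; your $12/10$ for (P6) is a slip, the bound is $13/10$). The paper bounds the \emph{fraction} of (P1) vertices globally: contracting each isolated chorded $4$-cycle to an edge yields a cubic graph whose vertices are exactly the good ones, and counting edges shows that at most a $6/7$ fraction of $V$ can be bad. There is no local charge; it is a density argument.

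Second, the (P5) saving of $1/729$ is per hexagon, not per vertex, and a good vertex may lie in many hexagons of the final family. The paper handles this by bounding $f(v)$, the number of $6$-cycles in $\mathcal{H}$ through $v$, by an absolute constant $K\le 12$ (using cubicity and the absence of chorded $6$-cycles), and then a short telescoping computation shows $\frac{1}{|H|}\sum_{v\in H} z(v)\le 4/3 - 1/(729K)$. Combining this with the $6/7$ bad-vertex bound gives $\sum_v z(v)\le |V|\bigl(4/3 - 1/(7\cdot 12\cdot 729)\bigr)$, which is exactly $\epsilon=1/61236$. Your proposal correctly flags (P5) as ``the delicate one'' but does not supply either the overlap bound $K$ or the averaging inequality that turns the per-hexagon gain into a per-vertex gain; these are the steps that pin down the constant.
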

\begin{proof}[Proof of Theorem~\ref{thm:main}]

From Section~\ref{section:simplification}, we can assume that $G$ is also reduced and so the Main Proposition holds. Let $B$ be the
union of the vertex sets of all isolated chorded 4-cycles of~$G$. We say a vertex is \emph{bad} if it is in $B$, and \emph{good} otherwise. We claim
that the proportion of bad vertices in $G$ is bounded above by $6/7$. To see this, construct the auxiliary graph $G'$ from $G$ by replacing every
isolated chorded 4-cycle with an edge between its two neighboring vertices. Since $G'$ is cubic, it contains exactly $2|E(G')|/3$ vertices, which are
good in $G$.
Hence, for every bad vertex there are at least $(1/4) \cdot (2/3)  = 1/6$ good ones, proving the claim.

The Main Proposition guarantees that every bad vertex $v$ contributes a quantity $z(v)\leq 4/3$. Now we show that the average contribution of all the
good vertices is at most $(4/3 - \delta)$ for some $\delta$ to be determined. To do this, define $\mathcal{H}=\{\gamma \in \bigcup_i\Gamma_i\colon
|V(\gamma)|=6\}$ as the collection of all 6-cycles appearing in some cover of the final family, and let $H=\bigcup_{\gamma \in \mathcal{H}}V(\gamma)$
be the vertices included in some 6-cycle of $\mathcal{H}$. It is easy to check that $B$ and $H$ are disjoint. Furthermore, the Main Proposition
guarantees that if $v \in V\setminus (B\cup H)$ then $z(v)\leq (4/3 - 1/60)$. So we focus on bounding the contribution of the vertices in $H$.

For every $v \in H$, let $f(v)$ be the number of distinct 6-cycles in $\mathcal{H}$ containing~$v$. Since $G$ is cubic, there is an absolute constant
$K$, such that $f(v)\leq K$. By the main proposition, $z(v)\leq 4/3$ for $v\in H$ and for every $\gamma \in \mathcal{H}$, $\sum_{v\in V(\gamma)} z(v)
\leq 6\cdot (4/3-\epsilon')$, where $\epsilon'=1/729$. Putting all together we have:
{\small \begin{align*}
&\hphantom{=}K\cdot \sum_{v \in H} \left[z(v) - \left(\frac43 - \frac{\epsilon'}{K} \right)\right] = |H|\epsilon' + K \sum_{v\in H}
\left(z(v)-\frac43\right)\\
&\leq 6|\mathcal{H}|\epsilon' + \sum_{v\in H} f(v)\left(z(v)-\frac43\right) = 6|\mathcal{H}|\epsilon' + \sum_{\gamma \in \mathcal{H}} \sum_{v \in
V(\gamma)} \left(z(v)-\frac43\right)\\
&\leq 6|\mathcal{H}|\epsilon' - \sum_{\gamma \in \mathcal{H}} 6\epsilon' = 0.
\end{align*}}
It follows that $\frac1{|H|}\sum_{v \in H} z(v) \leq \left(4/3 - \epsilon'/K\right)$. Since $\epsilon'/K \leq 1/60$, we get
{\small \begin{align*}
\sum_{v \in V} z(v) &\leq \sum_{v \in B} z(v) + \sum_{v \in H} z(v) + \sum_{v \in V\setminus (B\cup H)} z(v)\\
&\leq \frac43 |B| +  \left(\frac43 - \frac{\epsilon'}{K}\right)(|V|-|B|) = |V|\left(\frac43 - \frac{\epsilon'}{7K}\right).
\end{align*}}
We conclude that there is an index $i$ such that $\sum_{v \in V} z_i(v) \leq |V|\left(4/3 - {\epsilon'}/(7K)\right)$. By adding a double spanning tree
of $G/E(\Gamma_i)$ we transform $\Gamma_i$ into a TSP tour $T$ of length $|V|\left(4/3 - {\epsilon'}/(7K)\right)-2$.  Noting that $K\leq 12$
 and $\epsilon'=1/729$ we obtain the desired bound\footnote{Consider
 two edges $e_1$ and $e_2$ adjacent to $v$. Since there is no chorded 6-cycle, if $e_1$ and $e_2$
 are contained in a 4-cycle, then $v$ must be contained in at most one 6-cycle. Otherwise, there are at most four 6-cycles which may contain $e_1$ and
$e_2$.
 Because there are 3 possible pairs of edges, we have $K=12$.}. Clearly, all
operations can be done in polynomial time.
\end{proof}

\subsection{Proof of the Main Proposition}

We start by a lemma, whose proof is the same as that of \cite[Observation 1]{BSSS11}.

\begin{lem}[\cite{BSSS11}]
\label{lema:tecnicoZiv}
 For each vertex $v\in V$, and each $i \in \{1,...,k\}$, the contribution $z_i(v):=z_{\Gamma_i}(v)$ is
\begin{itemize}
 \item[(a)] at most $\frac{h+2}{h}$, where $h=\min\{t,10\}$ and $v$ is on a t-cycle belonging to one of the cycle covers $\C_i$, $\C_i^{\UU1}$ and
$\C_i^{\UU2}$.
 \item[(b)] at most $\frac{13}{10}$ if operation (U3) modified the component containing $v$.
\end{itemize}
\end{lem}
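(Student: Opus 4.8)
The plan is to follow the component of $v$ through the three batches of operations and to control its number of edges by its number of vertices. Recall that $z_i(v)=(\ell+2)/h$, where $\ell$ and $h$ are the numbers of edges and vertices of the component $\gamma$ of $v$ in $\Gamma_i$; writing $\mathrm{exc}(\gamma):=\ell-h$ for the \emph{excess} of $\gamma$, this reads $z_i(v)=1+(\mathrm{exc}(\gamma)+2)/h$. An observation sets up the reduction: operations (U1) and (U2) each replace a collection of pairwise vertex-disjoint cycles by a single cycle on the union of their vertex sets, and they are all performed before any application of (U3); hence the cycle containing $v$ never shrinks along $\C_i\to\C_i^{\UU1}\to\C_i^{\UU2}$. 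So for part~(a), if $v$ lies on a $t$-cycle in one of these three covers, it lies on a $t'$-cycle of $\C_i^{\UU2}$ with $t'\ge t$, and since $x\mapsto(x+2)/x$ is decreasing, it suffices to prove the bound for $t'$; I therefore rename $t:=t'$ and assume from now on that $v$ lies on a $t$-cycle of $\C_i^{\UU2}$.

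Next I would analyze (U3). From its definition, a single (U3) step takes two current components $\gamma_1,\gamma_2$, each with at least $5$ vertices, and a $5$-cycle $C$ of $G$ whose vertices split $2$--$3$ between $\gamma_1$ and $\gamma_2$, and it produces a component $\gamma$ with $V(\gamma)=V(\gamma_1)\cup V(\gamma_2)$ (a disjoint union, so $|V(\gamma)|=|V(\gamma_1)|+|V(\gamma_2)|$); moreover $\gamma$ is obtained by adding the $5$ edges of $C$, deleting $2+2$ edges via two applications of Lemma~\ref{lemma:merging}, and then deleting further edges in the clean-up, so $|E(\gamma)|\le|E(\gamma_1)|+|E(\gamma_2)|+1$. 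Consequently
\[
\mathrm{exc}(\gamma)\ \le\ \mathrm{exc}(\gamma_1)+\mathrm{exc}(\gamma_2)+1.
\]
The final component of $v$ is the root of a binary merge tree built by some number $m\ge0$ of (U3) steps from $m+1$ leaf cycles, each of which is a cycle of $\C_i^{\UU2}$ with at least $5$ vertices (a $4$-cycle has too few vertices ever to be merged by (U3)); when $m=0$ the component is exactly the $t$-cycle. Since cycles have excess $0$, iterating the displayed inequality over the $m$ internal nodes gives $\mathrm{exc}(\gamma)\le m$, while $|V(\gamma)|\ge5(m+1)$; and if one leaf is the $t$-cycle, then in fact $|V(\gamma)|\ge t+5m$.

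Substituting into $z_i(v)=1+(\mathrm{exc}(\gamma)+2)/|V(\gamma)|$ gives
\[
z_i(v)\ \le\ 1+\frac{m+2}{t+5m}\quad\text{(part (a))},\qquad z_i(v)\ \le\ 1+\frac{m+2}{5(m+1)}\quad\text{(part (b))}.
\]
For part~(b), the hypothesis that (U3) modified $v$'s component forces $m\ge1$, and $\frac{m+2}{5(m+1)}$ is decreasing in $m$, so $z_i(v)\le1+\frac{3}{10}=\frac{13}{10}$. For part~(a), clearing denominators shows that $1+\frac{m+2}{t+5m}\le1+\frac2t$ holds whenever $t\le10$, and that $1+\frac{m+2}{t+5m}\le1+\frac15$ holds whenever $t\ge10$; since $1+\frac15=(\min\{t,10\}+2)/\min\{t,10\}$ in the latter case, in both cases $z_i(v)\le(\min\{t,10\}+2)/\min\{t,10\}=(h+2)/h$, as claimed.

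The only genuinely delicate step is the edge accounting $|E(\gamma)|\le|E(\gamma_1)|+|E(\gamma_2)|+1$ for a single (U3) step: it rests on the precise bookkeeping in the description of (U3) (five edges of $C$ added, two pairs of edges removed via Lemma~\ref{lemma:merging}, clean-up only removing edges) and on the scheduling fact that (U1) and (U2) are exhausted before (U3) begins, so that every leaf of every merge tree is a bona fide cycle of $\C_i^{\UU2}$ of length at least $5$. Granting this, the rest is just the two one-line inequalities above.
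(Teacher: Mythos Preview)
Your proof is correct. The paper does not actually give its own argument for this lemma; it simply states that ``the proof is the same as that of \cite[Observation~1]{BSSS11}'' and moves on. What you have written is precisely the natural reconstruction of that argument: track the excess $\ell-h$ through the binary merge tree built by (U3), use that each (U3) step adds at most one to the excess while adding at least five vertices, and note that the leaves are cycles of $\C_i^{\UU2}$ (hence of excess zero). The monotonicity of the cycle containing $v$ along $\C_i\to\C_i^{\UU1}\to\C_i^{\UU2}$ is exactly what reduces (a) to the case of $\C_i^{\UU2}$, and your arithmetic for both cases is clean. There is nothing to compare here beyond noting that you have supplied the details the paper omits.
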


We will also use the following notation in our proof. For any subset $J$ of indices in  $[k]:=\{1,\ldots,k\}$, define $\lambda(J)=\sum_{i \in
J}\lambda_i$.

The proofs of parts (P1) through (P4) are similar to the arguments used by Boyd et al.~\cite{BSSS11} to show that $z(v)\leq 4/3$ when $v$ is a 4-cycle
or 5-cycle. By using the fact that $G$ is reduced (i.e. it contains no chorded 6-cycles) we obtain a better guarantee in (P2), (P3) and (P4).
To prove part (P5) we heavily use the fact that operation (U1) is applied to the initial cycle
cover (recall that this operation was not used in~\cite{BSSS11}).

\subsubsection{Proof of part (P1) of the Main Proposition}
Let $v$ be in some isolated chorded 4-cycle $C$ with $V(C)=\{a,b,u_0,u_1\}$ as in Figure~\ref{fig:p-rainbow_0}.

\begin{figure}[ht]
\centering
\includegraphics[width=.3\textwidth]{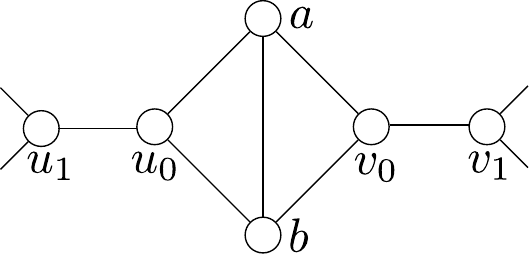}
\caption{A chorded 4-cycle.}
\label{fig:p-rainbow_0}
\end{figure}

For every index $i$, let $C_i$ be the cycle containing $v$ in the initial cycle cover $\C_i$, and let $\C(v)=\{C_i\}_{i=1}^k$.
Consider a cycle $C' \in \C(v)$, and recall that $C'$ cannot be a triangle. If $C'$ does not contain the edge $u_0u_1$, then $C'=C$. Consider now the
case in which $C'$ contains $u_0u_1$. Then we must also have $ab\in E(C')$ and $v_0v_1 \in E(C')$. Since the graph is reduced, $v_1u_1 \notin E$ as
otherwise $u_1-u_0-a-b-v_0-v_1$ would induce a chorded 6-cycle. Hence, the cycle $C'$ cannot be of length 6. It also cannot be of length 7 since then
there would be a 3-cut with 3 matching edges. Therefore, it must be of length at least 8. Using that $\sum_{\{i\colon u_1u_2 \in
M_i\}}\lambda_i=\frac{1}{3}$ and applying Lemma \ref{lema:tecnicoZiv}, we conclude
that $z(v) \leq (1/3 \cdot 6/4 + 2/3 \cdot 10/8)=4/3.$

\subsubsection{Proof of part (P2) of the Main Proposition}
Let $v$ be in some non-isolated chorded 4-cycle $C$ with $V(C)=\{a,b,u_0,u_1\}$ as in Figure~\ref{fig:p-rainbow_0} and recall that $v_1u_1 \not\in E$.
Without loss of generality we can assume that $u_1$ is in a different chorded 4-cycle $D$. Furthermore, assume that $v_1$ is not connected by an edge
to $D$, as this would imply the existence of a bridge in $G$.

Consider, as in the proof of part (P1) a cycle $C' \in \C(v)$. If $C'$ does not contain the edge $u_0u_1$ then $C'=C$. If on the other hand the edge
$v_0v_1$ is in $C'$ then $C'$ must contain all the vertices of both $C$ and $D$. It must also contain $v_1$ and one of its neighbors outside $C\cup
D$. In particular, $C'$ has at least 10 vertices. By Lemma~\ref{lema:tecnicoZiv}, we have
that $z(v) \leq (1/3 \cdot 6/4 + 2/3 \cdot 12/10)=13/10$.

\subsubsection{Proof of part (P3)  of the Main Proposition}

Let $\gamma\in \Gamma(v)$ be an induced 4-cycle containing $v$. By Remark~\ref{remark:ciclos}, $\gamma$ is in some initial cycle cover $\C_i$. Since
the cycle $\gamma$ has no chord, then the four edges incident to it (i.e. those sharing one vertex with $\gamma$) belong to matching $M_i$. This
observation holds not only for $\gamma$ but for any cycle $C^*$ in some initial cycle cover $\C_i$, so we have the following remark.

\begin{rem}
\label{remark:caminoCiclo}
Let $P$ be a path not sharing edges with a cycle $C^*$ belonging to some initial cycle cover $\C_i$. If $P$ connects any two vertices of $C^*$, then
$P$ has length at least~3.
\end{rem}

Furthermore, as the graph is reduced, $\gamma$ does not share exactly one edge with any other 4-cycle (as this would induce a 6-cycle with a chord).
In other words we have the following property.

\begin{rem}\label{remark:caminocuadrado}
Let $P$ be a path not sharing edges with $\gamma$. If $P$ connects any pair of consecutive edges of $\gamma$, then $P$ has length at least 4.
\end{rem}
Define the sets $X_p=\{i\colon |C \cap M_i|=p\}$, for $p=0, 1, 2$ and note that $X_0\cup X_1\cup X_2 =[k]$. Define also $x_p = \lambda(X_p)$, for
$p=0, 1, 2$.

By equation \eqref{eq:sum1}, we have $x_0+x_1+x_2=1$. Also, by applying equation \eqref{eq:convexMat} to the set of 4 edges incident to $\gamma$ we
obtain $4x_0+2x_1=4/3$, which implies that $x_0=1/3-x_1/2$.
Finally, by applying \eqref{eq:convexMat} to the 4 edges inside $\gamma$, we obtain $x_1+2x_2=4/3$, which implies that $x_2=2/3-x_1/2$.

For every $i\in X_0$, the cycle containing $v$ in $\C_i$ is equal to $\gamma$. By Lemma \ref{lema:tecnicoZiv} we obtain $z_i(v) \leq 6/4=3/2$.

Using Remark~\ref{remark:caminocuadrado} we deduce that for every  $i \in X_1$, the cycle containing $v$ in $\C_i$ has length at least 7; therefore,
by Lemma \ref{lema:tecnicoZiv}, we have $z_i(v) \leq 9/7$.

Consider now an index $i \in X_2$. Suppose that $\gamma$ intersects two different cycles of $\C_i$. As each of them has length at least 5 and they
both share one edge with a 4-cycle of $G$ we conclude that both cycles are modified by operation (U1) or (U2). Remark~\ref{remark:ciclos} implies that
$v$ is in a cycle of length at least 10 in $\C_i^{\UU2}$. Using Lemma~\ref{lema:tecnicoZiv} we have $z_i(v)\leq
12/10=6/5$.

The only remaining case is if $\gamma$ is intersected by a single cycle $C$ of $\C_i$. Then, by Remark~\ref{remark:caminocuadrado}, $C$ has length at
least 8. This cycle has length exactly 8 if and only if $\gamma$ belongs to the structure depicted in Figure~\ref{fig:ciclo4prob}.
\begin{figure}[t]
\centering
\includegraphics[width=.35\textwidth]{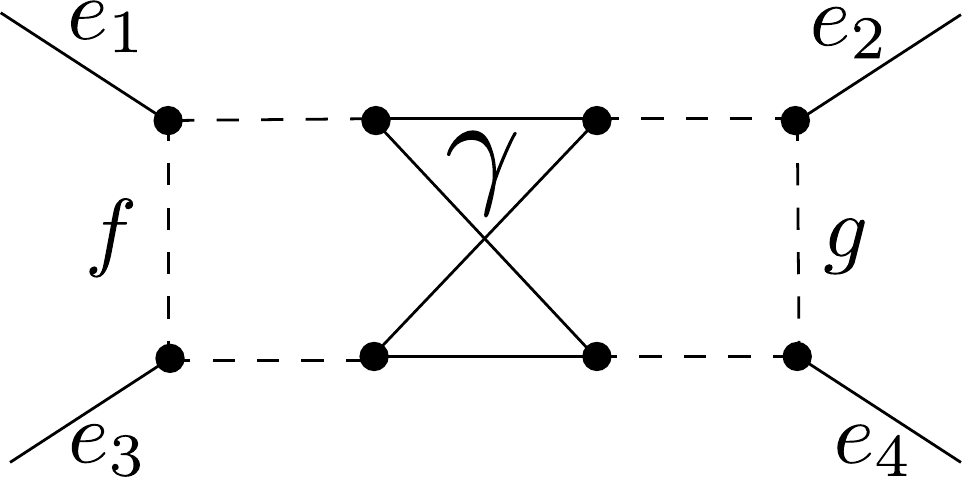}
\caption{4-cycle $\gamma$ intersecting an 8-cycle.}
\label{fig:ciclo4prob}
\end{figure}
Assume for now that no 8-cycle of an initial cover contains the four vertices of $\gamma$. Then, the cycle $C$ in our previous discussion must be of
length at least 9, and by Lemma
\ref{lema:tecnicoZiv}, $z_i(v) \leq \max\{11/9, 6/5\}=11/9$. Putting all together, we obtain
\begin{eqnarray*}
 z(v) &\leq & x_0 3/2 + x_1 9/7 + x_2 11/9 \\
      &=& (1/3- x_1/2)3/2 + x_1 9/7 + (2/3 -x_1/2) 11/9 \\
      &= & 71/54 +x_1 ( 9/7 -3/4 -11/18) \\
      &\leq & 71/54  = 4/3- 1/54.
\end{eqnarray*}

Now consider the case in which there is an 8-cycle $C_j$ of an initial cover $\C_j$ containing $V(\gamma)$. Then $v$ belongs to the structure depicted
in Figure~\ref{fig:ciclo4prob}, where $e_1 \neq e_3$, $e_2 \neq e_4$ and $e_1, e_2, e_3, e_4$ are in some matching $M_j$. As we assumed that
$|V(G)|\geq 10$, we cannot simultaneously have $e_1=e_4$ and $e_2=e_3$.
Let $f$ and $g$ be the leftmost and rightmost edge in the figure. Let also $Y = \{i\colon f \in M_i\}$ and
$Z=\{i\colon g \in M_i\}$. It is easy to check that $Y \cup Z \subseteq X_2$.

Consider an index $i \in X_2$. If $i \in Y\cup Z$ (i.e., if at least one of $f$ and $g$ are in $M_i$), then the cycle containing $v$ in $\C_i^{\UU2}$
has at least 10 vertices, and so $z_i(v)\leq 12/10=6/5$. If $i \in X_2 \setminus (Y \cup Z)$, then the cycle containing $v$ in $\C_i$ is either the
8-cycle $C_j$ of the structure, or the 8-cycle with edge set $E(C_j) \Delta E(\gamma)$. In any case $z_i(v)\leq 10/8=5/4$.

Let $y_1=\lambda(Y\cup Z)$ and $y_2=\lambda(X_2 \setminus (Y\cup Z)$, so that $y_1+y_2 = x_2$. Noting that $y_1 \geq \lambda(Y)=1/3$, we have
\begin{eqnarray*}
 z(v) &\leq & x_0 3/2 + x_1 9/7 + y_1 6/5 + (x_2-y_1) 5/4\\
      &=& (1/3- x_1/2)3/2 + x_1 9/7 + (2/3 -x_1/2) 5/4 + y_1 (6/5-5/4) \\
      &=& 4/3 - x_1(9/7-3/4-5/8) - y_1/20 \leq 4/3 - 1/60.
\end{eqnarray*}

\subsubsection{Proof of part (P4) of the Main Proposition}

Let $\gamma\in \Gamma(v)$ be an induced 5-cycle containing $v$. By Remark~\ref{remark:ciclos}, $\gamma$ is in some initial cycle cover $\C_i$. We can
assume that no 4-cycle shares exactly one edge with $\gamma$, as otherwise operation (U2), or operation (U1) before that, would have modified
$\gamma$.

The proof for this part is similar to that of part (P3). Define $X_p=\{i\colon |\gamma \cap M_i|=p\}$, for $p=0,1,2$, so that $X_0 \cup X_1 \cup
X_2=[k]$, and let $x_p=\lambda(X_p),$ for $p=0,1,2$.

By Equation~\eqref{eq:sum1} we have $x_0+x_1+x_2=1$. Applying Equation~\eqref{eq:convexMat} to the 5 edges incident to $\gamma$, we obtain
$5x_0+3x_1+x_2=5/3$. This implies that $x_0=1/2(1/3-x_1)$ and $x_2=1/2(5/3-x_1)$.

For every $i \in X_0$, we have $v\in V(\gamma)$ and $\gamma \in \C_i$. By Lemma~\ref{lema:tecnicoZiv}, $z_i(v)\leq 7/5$.
For $i \in X_1$, the fact that $\gamma$ does not share an edge with a 4-cycle implies that $v$ is in a cycle of $\C_i$ having length at least 8, and
therefore $z_i(v) \leq 10/8=5/4$.

For $i \in X_2$, we have two cases. If $\gamma$ is intersected by a single cycle $C$ of $\C_i$, then, by Remark~\ref{remark:caminoCiclo}, $C$ must be
of length at least 9, and so, $z_i(v) \leq 11/9$.

The second case is that $\gamma$ is intersected by two cycles of $\C_i$. One of them, say $C'$, shares exactly one edge with $\gamma$ (and so, $C'$
cannot be a 4-cycle), and the second one, $C''$, shares exactly two consecutive edges with $\gamma$ (by Remark~\ref{remark:caminoCiclo}, $C'$ cannot
be a 4-cycle either). Let $C \in \{C',C''\}$ be the cycle containing vertex~$v$. If $C$ is merged with another cycle during operations (U1) and (U2)
then, by Remark~\ref{remark:ciclos}, the resulting cycle containing $v$ in $\C_i^{\UU2}$ is of length at least 10, and so $z_i(v)\leq 12/10$. On the
other hand, if $C$ is not modified by operations (U1) and (U2) then, it must be modified by operation (U3) (this is because $C$ intersects the 5-cycle
$\gamma$, which in turns intersects two components of $\C_i^{\UU2}$ of length at least~5). Lemma~\ref{lema:tecnicoZiv} guarantees in this case that
$z_i(v)\leq 13/10$.

Summarizing, if $i \in X_2$, then $z_i(v)\leq \max\{ 12/10,11/9,13/10 \}=13/10$. Then,
\begin{eqnarray*}
 z(v) &\leq & x_0 7/5 + x_1 5/4 + x_2 13/10 \\
      &=&  1/2(1/3-x_1) \cdot 7/5 + x_1 5/4 + 1/2(5/3-x_1) \cdot 13/10 \\
      &=&  7/30+13/12 - x_1/10\\
      &\leq & 79/60  = 4/3-1/60.
\end{eqnarray*}

\subsubsection{Proof of part (P5)  of the Main Proposition}
Let $\gamma\in \Gamma(v)$ be an induced 6-cycle containing $v$. By Remark~\ref{remark:ciclos}, $\gamma$ is in some initial cycle cover $\C_i$. We can
assume that no 4-cycle shares exactly one edge with $\gamma$, as otherwise operations (U1) or (U2) would have modified $\gamma$ and so, by the end of
the algorithm $\gamma$ would not be a 6-cycle.

We can also assume that $\gamma$ does not intersect the 5-cycles contained in an initial cycle cover. Indeed, if this was not the case, define
$S_5=\{w\in V(\gamma)\colon$ $w$ is in some 5-cycle $C$ of an initial cycle cover$\}$.
If $w \notin S_5$ then in every initial cover, the cycle containing $w$ is of length at least 6; usingLemma~\ref{lema:tecnicoZiv}, part (P4) of the
Main Proposition, and the fact that $S_5 \neq \emptyset$ implies $|S_5|\geq 2$, we conclude that
$\sum_{w \in V(\gamma)}z(w) \leq   |S_5|\left(\frac{4}{3}-\frac{1}{60}\right) +  |V(C)\setminus S_5|\frac{4}{3} \leq
6\left(\frac{4}{3}-\frac{1}{180}\right)$,
and also that $z(w) \leq 4/3$ for all $w \in V$.

Under the assumptions above, all the components containing $v$ in the final family of covers have length at least 6. Using Lemma~\ref{lema:tecnicoZiv}
we conclude not only that $z(v) \leq \max\{13/10, 8/6\} = 4/3$ (which proves the first statement of P5) but also that $z(w) \leq 4/3$ for the 6
vertices $w \in V(\gamma)$.

Let us continue with the proof. Denote the edges of $\gamma$ as $a_1, \dots, a_6$ and the 6 edges incident to $\gamma$ as $e_1, \dots, e_6$, as in
Figure~\ref{fig:hex_C}.\label{Change Figure}
\begin{figure}[t]
\centering
\includegraphics[width=.25\textwidth]{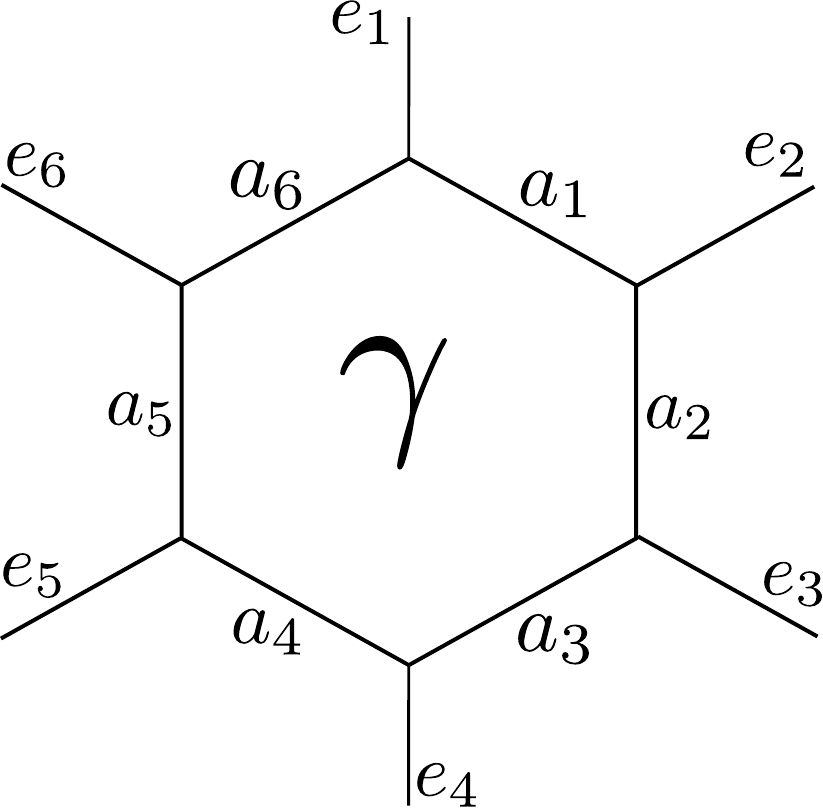}
\caption{Induced 6-cycle $\gamma$.}
\label{fig:hex_C}
\end{figure}

We now define some sets of indices according on how $\gamma$ intersects the matchings $M_1, \dots, M_k$.
For every symbol $Z \in \{X_0\} \cup \{X_1^q\}_{q=1}^6 \cup \{X_2^q\}_{q=1}^3 \cup \{Y_2^q\}_{q=1}^6 \cup \{X_3^q\}_{q=1}^2$, we define $Z$ as the set
of indices $i$ for which the matching $M_i$ contains the bold edges indicated in Figure~\ref{fig:hex_casos}.
For example, $X_0 =\{i \colon \{e_1,\ldots, e_6\} \in M_i\}$, $X_3^1 = \{i \colon \{a_1,a_3,a_5\}\in M_i\}$, and so on. Let also $x_0 = \lambda(X_0)$,
$x_i^q= \lambda(X_i^q)$ and $y_2^q = \lambda(Y_i^q)$ for every $i$ and $q$ and define
\begin{eqnarray*}
 x_1=\sum_{q=1}^6 x_1^q,~~~~ x_2=\sum_{q=1}^3 x_2^q,~~~~ y_2=\sum_{q=1}^6 y_2^q,~~~~ x_3=\sum_{q=1}^2 x_3^q,~~~~ \overline{x}_2=x_2+y_2.
\end{eqnarray*}

\begin{figure}[t]
\centering
\includegraphics[width=.1\textwidth]{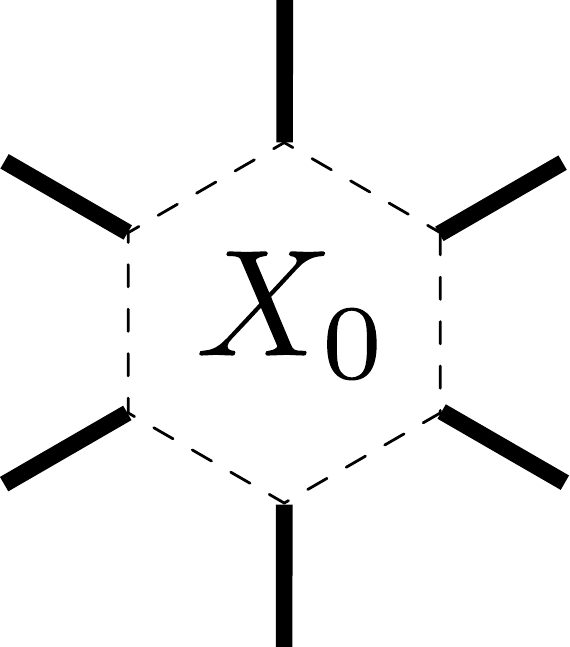}
\hspace{.3in}
\includegraphics[width=.1\textwidth]{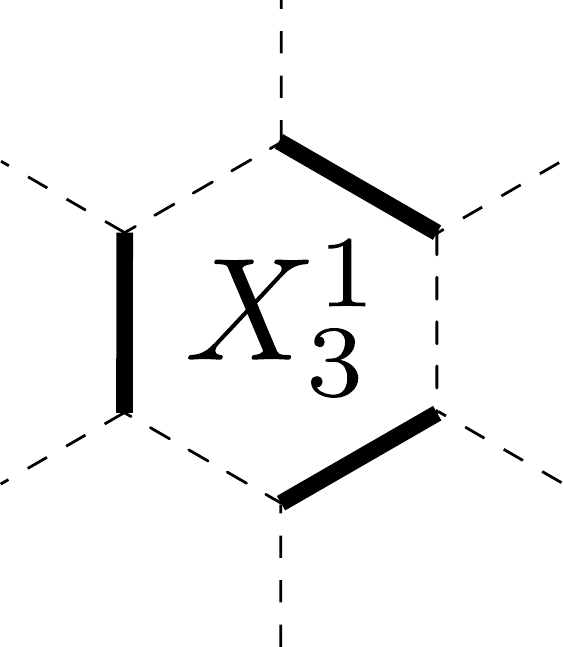}
\hspace{.3in}
\includegraphics[width=.1\textwidth]{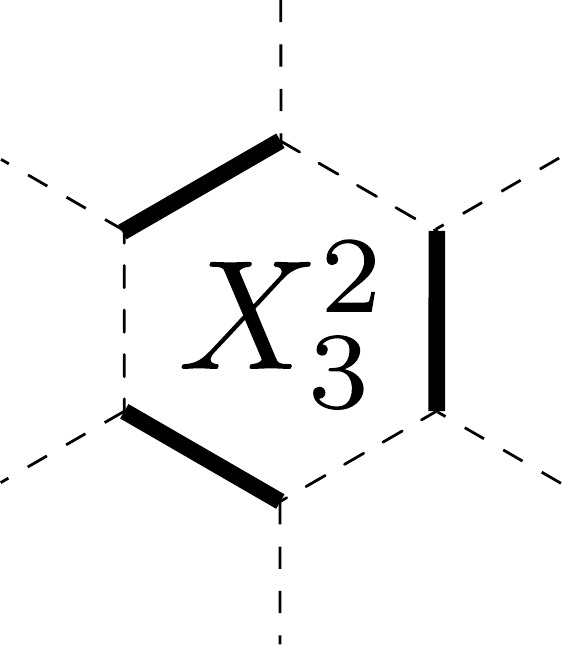}
\hspace{.3in}
\includegraphics[width=.1\textwidth]{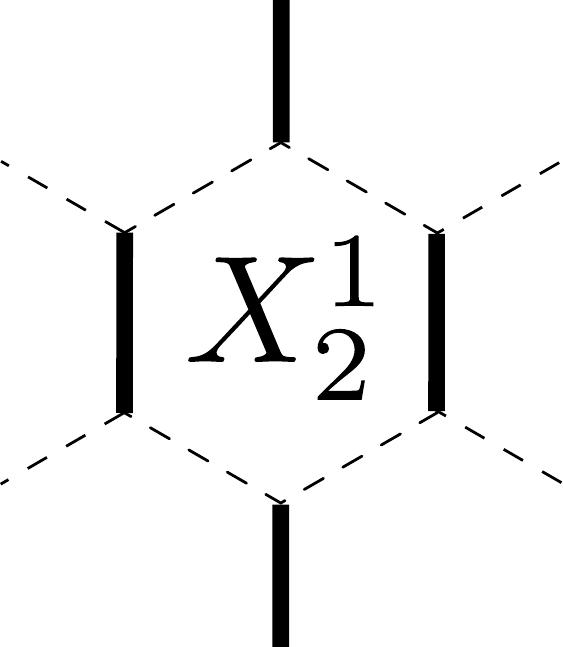}
\hspace{.3in}
\includegraphics[width=.1\textwidth]{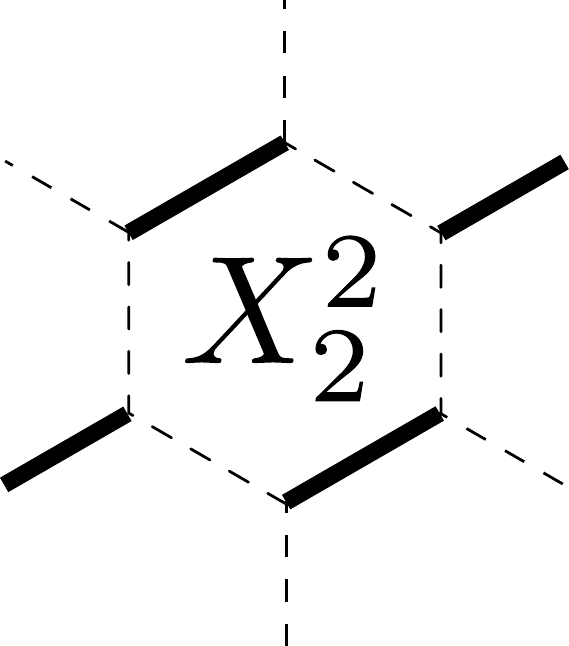}
\hspace{.3in}
\includegraphics[width=.1\textwidth]{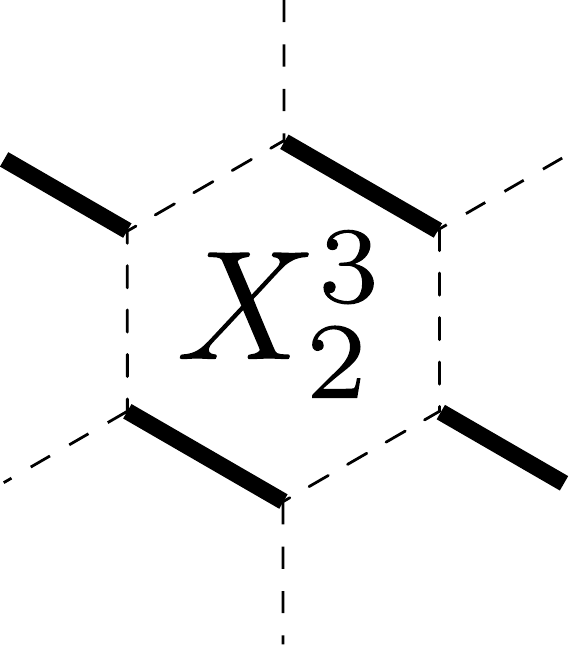}  \\
\vspace{3pt}
\includegraphics[width=.1\textwidth]{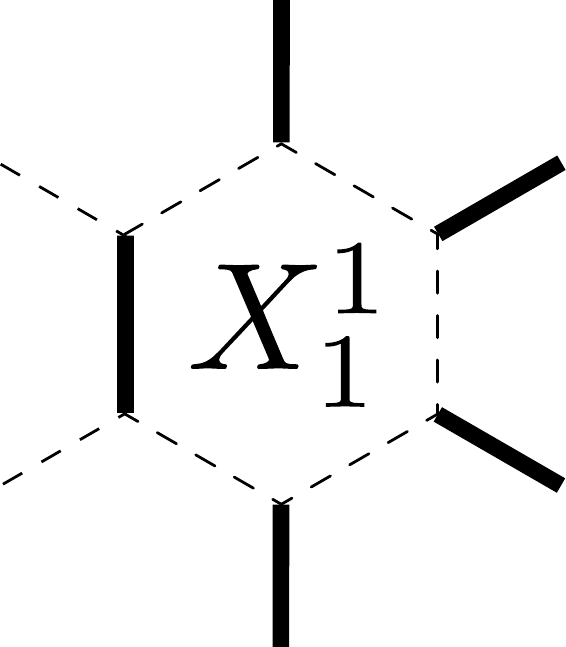}
\hspace{.3in}
\includegraphics[width=.1\textwidth]{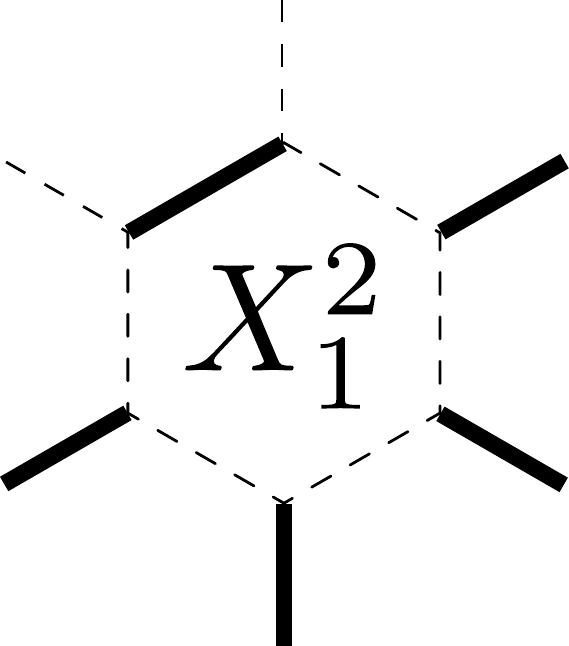}
\hspace{.3in}
\includegraphics[width=.1\textwidth]{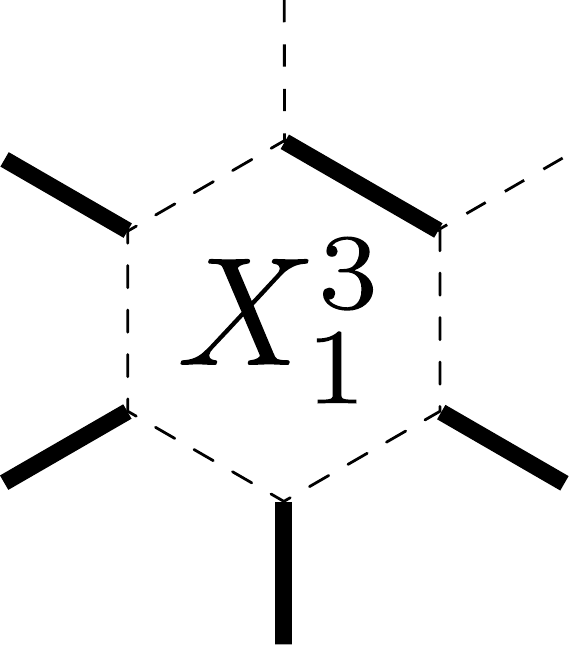}
\hspace{.3in}
\includegraphics[width=.1\textwidth]{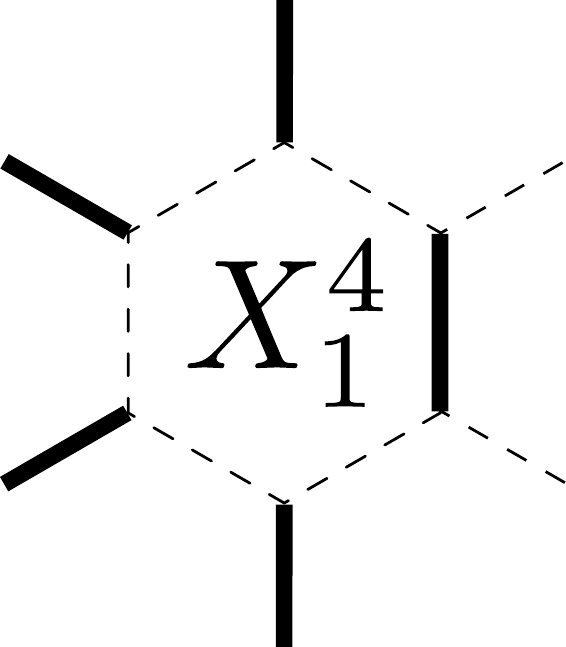}
\hspace{.3in}
\includegraphics[width=.1\textwidth]{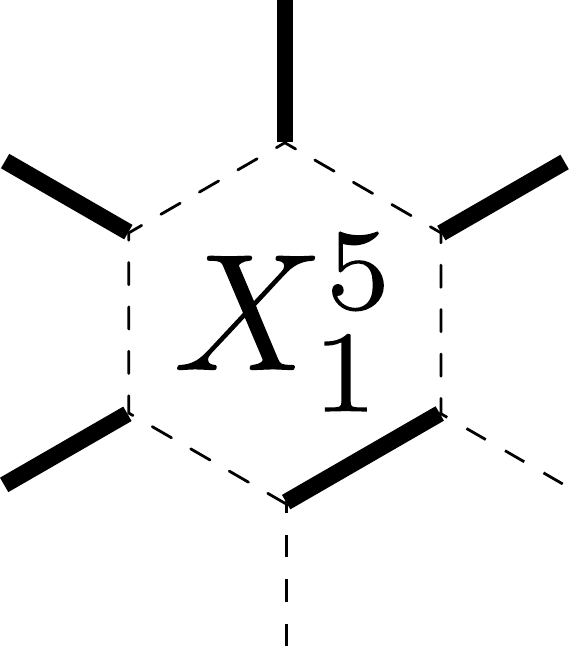}
\hspace{.3in}
\includegraphics[width=.1\textwidth]{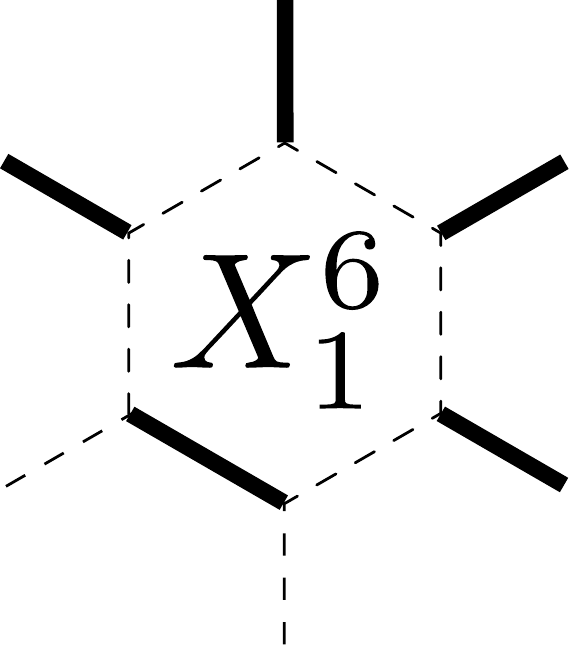}  \\
\vspace{3pt}
\includegraphics[width=.1\textwidth]{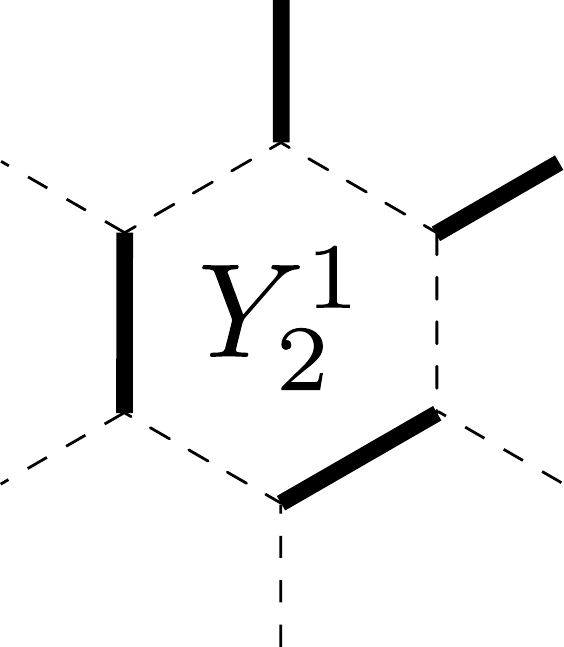}
\hspace{.3in}
\includegraphics[width=.1\textwidth]{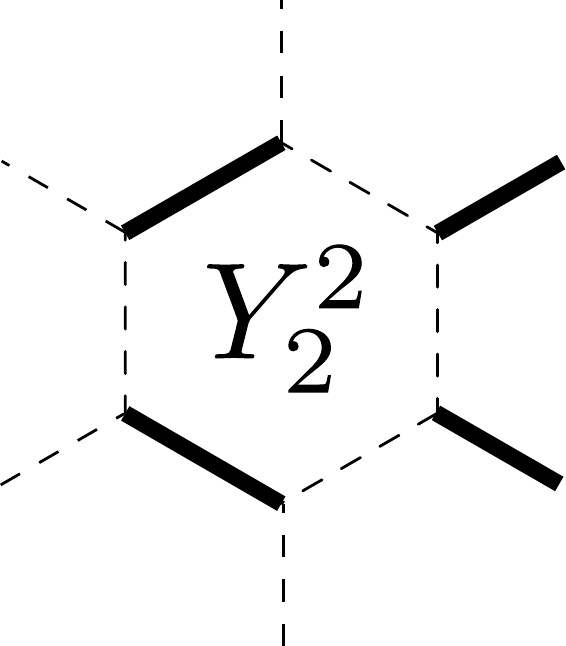}
\hspace{.3in}
\includegraphics[width=.1\textwidth]{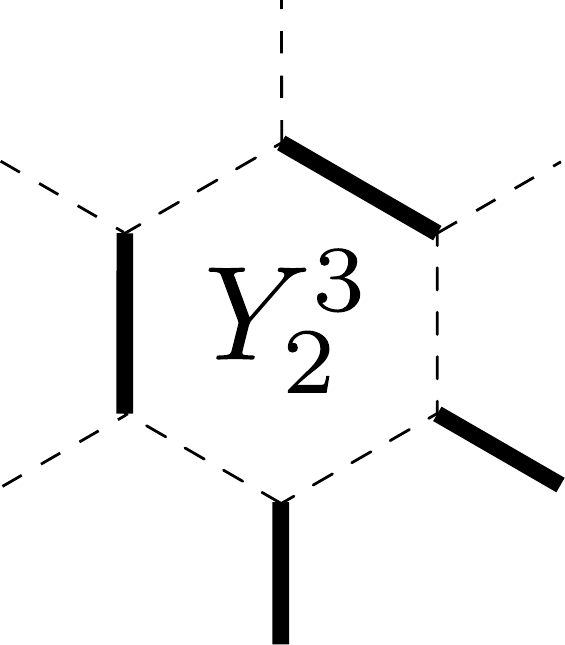}
\hspace{.3in}
\includegraphics[width=.1\textwidth]{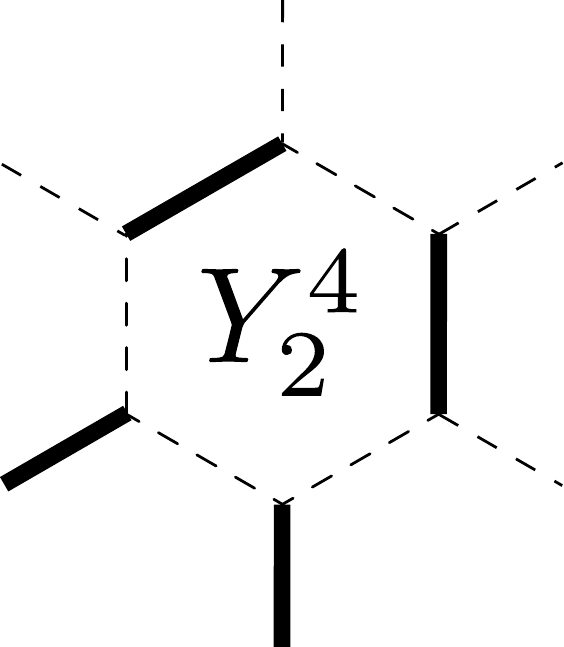}
\hspace{.3in}
\includegraphics[width=.1\textwidth]{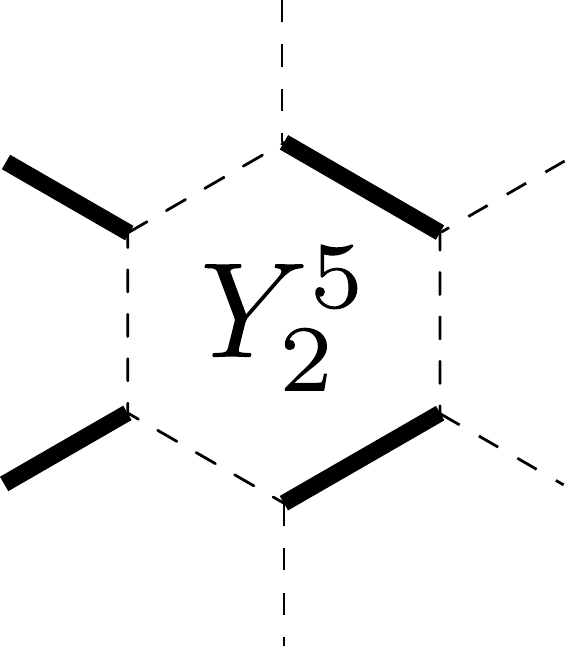}
\hspace{.3in}
\includegraphics[width=.1\textwidth]{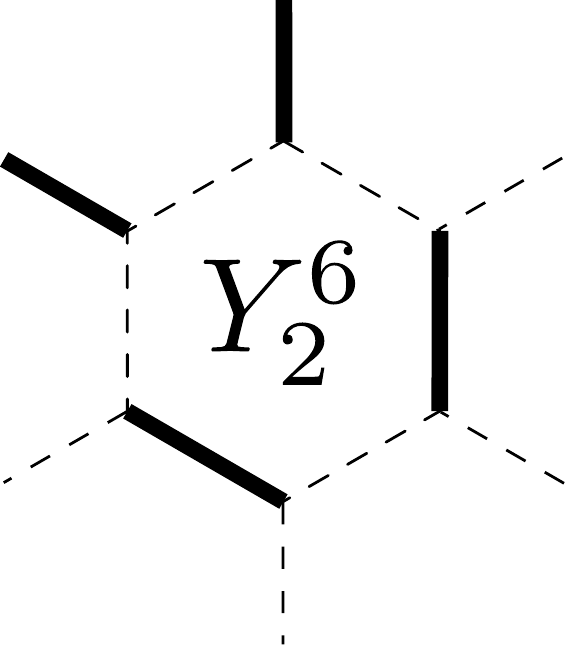} \\
\vspace{-5pt}
\caption{The different ways in which a matching can intersect $\gamma$. We use the same orientation as that of Figure~\ref{fig:hex_C}.}
\label{fig:hex_casos}
\end{figure}

Equation \eqref{eq:sum1} implies that $x_0+x_1+\overline{x}_2+x_3=1$. Equation~\eqref{eq:convexMat} applied to the set $\{a_1,\ldots, a_6\}$ of edges
incident to $\gamma$ implies that $6x_0+4x_1+2\overline{x}_2=6/3$. Hence, $3x_0+2x_1+\overline{x}_2=1$. It follows that
\begin{eqnarray}
2x_0+x_1=x_3. \label{eq:descarte}
\end{eqnarray}
Recall that there are no 4-cycles in $G$ and no 5-cycles in an initial cycle cover interseting $\gamma$ in exactly one edge. Consider $w \in
V(\gamma)$ and $i \in [k]$.

If $i \in X_0$ (i.e., $M_i$ shares no edge with $\gamma$) then $w \in V(\gamma)$ and $\gamma \in \C_i$. By Lemma~\ref{lema:tecnicoZiv} we have,
$z_i(w)\leq 8/6$. If $i \in X_1:=\cup_{q=1}^6 X_1^q$ (i.e., $M_i$ contains exactly one edge of $\gamma$) then, as no 4-cycle shares exactly one edge
with $\gamma$, $w$ must be in a cycle $C \in \C_i$ of length at least 9; therefore, $z_i(w)\leq 11/9$. If $i \in X_3:=\cup_{q=1}^2 X_3^q$ (i.e., $M_i$
contains three edges of $\gamma$) then we have two cases. The first case is that $\gamma$ is intersected by 1 or 3 cycles of $\C_i$. Then, by the end
of operation (U1), $w$ must be in a cycle of $\C_i^{\UU1}$ of length at least 9 and so
$z_i(w)\leq 11/9$. The second case is that $\gamma$ is intersected by 2 cycles of $\C_i$. One of them shares exactly 2 edges with $\gamma$, thence it
must be of length at least 8. The other cycle shares exactly one edge with $\gamma$ and so it must be of length at least 6. Therefore, in this case, 4
of the vertices $w$ of $\gamma$ satisfy $z_i(w) \leq 10/8$ and the remaining 2 satisfy $z_i(w) \leq 8/6$.

We still need to analyze the indices $i \in X_2:=\cup_{q=1}^3 X_2^q$ and $i \in Y_2:=\cup_{q=1}^6 Y_2^q$ (i.e., those for which $M_i$ shares two edges
with $\gamma$). Let $0<\delta \leq 1$ be a constant to be determined. We divide the rest of the proof in two scenarios.

\noindent\textbf{Scenario 1:} If $x_3$ (which equals $\max\{x_0,x_1,x_3\}$ by \eqref{eq:descarte}) is at least $\delta$.

If $i \in X_2 \cup Y_2$, then every vertex $w \in \gamma$ is in a cycle $C \in \C_i$ of length at least 6; therefore $z_i(w)\leq 8/6$ and
{\small\begin{eqnarray}
\sum_{w \in V(\gamma)}z(w) &\leq & 6\cdot(x_0 8/6 + x_1 11/9+ \overline{x}_2 8/6) +   x_3\left( 2\cdot \frac{8}{6} +  4\cdot \frac{10}{8} \right)
\notag \\
                      &\leq &  6\cdot  \left((1-x_3) 4/3   +x_3\left( \frac{4}{3}-\frac{1}{18}\right)\right) \leq   6\cdot
\left(4/3   -\delta /{18}\right). \label{ineq:beta1}
\end{eqnarray}}
\noindent\textbf{Scenario 2:} If $x_3$ (which equals $\max\{x_0,x_1,x_3\}$ by \eqref{eq:descarte}) is at most $\delta$.

We start by stating the following technical lemma.

\begin{lem}
\label{lemma:betaTec}
Define $\beta:=1/9-\delta$. Then at least one of the following cases hold:

\begin{tabular}{ll}
\begin{minipage}{.4\textwidth}
 \begin{itemize}
\item[-]{\bf Case 1:\ } $x_2^1, x_2^2, x_2^3 \geq \beta$.
\item[-]{\bf Case 2:\ } $x_2^1, y_2^2, y_2^5 \geq \beta$.
\item[-]{\bf Case 3:\ } $x_2^2, y_2^3, y_2^6 \geq \beta$.
\item[-]{\bf Case 4:\ } $x_2^3, y_2^1, y_2^4 \geq \beta$.
\end{itemize}
\end{minipage}
&
\begin{minipage}{.45\textwidth}
\begin{itemize}
\item[-]{\bf Case 5:\ } $y_2^1, y_2^4, y_2^2, y_2^5 \geq \beta$.
\item[-]{\bf Case 6:\ } $y_2^2, y_2^5, y_2^3, y_2^6 \geq \beta$.
\item[-]{\bf Case 7:\ } $y_2^1, y_2^4, y_2^3, y_2^6 \geq \beta$.
\end{itemize}
\end{minipage}
\end{tabular}
\end{lem}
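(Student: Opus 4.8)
The plan is to establish the contrapositive: assuming that none of Cases~1--7 holds, I would derive the contradiction $\delta<0$. Throughout I identify the nine numbers $x_2^1,x_2^2,x_2^3,y_2^1,\dots,y_2^6$ with the nine two-edge matchings of the hexagon $\gamma$, so that (matching the figure) $F_1=\{x_2^1,x_2^2,x_2^3\}$ is the set of the three ``opposite'' pairs, $F_2,F_3,F_4$ are the three partitions of $E(\gamma)$ into non-adjacent pairs that use exactly one opposite pair, and $F_5,F_6,F_7$ are the three unions of two antipodal near-pair classes appearing in Cases~5--7.

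First I would record three consequences of \eqref{eq:sum1}, \eqref{eq:convexMat}, \eqref{eq:descarte} and the Scenario~2 hypothesis $x_3\le\delta$. (a) From \eqref{eq:sum1}, $\overline{x}_2=1-x_0-x_1-x_3$, and with $x_3=2x_0+x_1$ this equals $1-3x_0-2x_1$; maximizing $3x_0+2x_1$ over $x_0,x_1\ge 0$ subject to $2x_0+x_1=x_3\le\delta$ gives
\[
\overline{x}_2\ \ge\ 1-2\delta .
\]
(b) For each edge $a_j$ of $\gamma$, any matching containing $a_j$ meets $\gamma$ in $\{a_j\}$, in a two-edge matching through $a_j$, or in the perfect matching of $\gamma$ through $a_j$; hence \eqref{eq:convexMat} at $a_j$ reads $x_1^j+\sigma_j+x_3^{(j)}=\tfrac13$, where $\sigma_j$ is the sum of the three two-edge-matching coefficients through $a_j$ and $x_3^{(j)}\in\{x_3^1,x_3^2\}$; since $x_1^j\le x_1\le x_3\le\delta$ and $x_3^{(j)}\le x_3$, this gives $\sigma_j\ge\tfrac13-2\delta$. (c) For adjacent edges $a_j,a_{j+1}$, no two-edge matching meets both and $x_3^{(j)}+x_3^{(j+1)}=x_3$, so adding the two identities in (b) shows the six two-edge matchings meeting $a_j$ or $a_{j+1}$ sum to $\tfrac23-x_1^j-x_1^{j+1}-x_3$; thus the three two-edge matchings avoiding both $a_j$ and $a_{j+1}$ sum to at least $\overline{x}_2-\tfrac23\ge\tfrac13-2\delta$.

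Then I would invoke the combinatorial fact, checked by inspection of the hexagon, that the minimal transversals of the hypergraph $\{F_1,\dots,F_7\}$ are exactly the six triples ``all two-edge matchings through a fixed edge $a_j$'' and the six triples ``all two-edge matchings avoiding a fixed adjacent pair $\{a_j,a_{j+1}\}$''. If no case holds, the set $S$ of coefficients strictly below $\beta$ is a transversal, hence contains a minimal transversal $S'$ of one of these two types; by (b) or (c), $\sum_{v\in S'}v\ge\tfrac13-2\delta$, whereas $|S'|=3$ and each $v\in S'$ satisfies $v<\beta=\tfrac19-\delta$, so $\sum_{v\in S'}v<3\beta=\tfrac13-3\delta$. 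This forces $\tfrac13-2\delta<\tfrac13-3\delta$, that is $\delta<0$, contradicting $\delta>0$.

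The one genuinely delicate step will be the combinatorial classification of the minimal transversals of $\{F_1,\dots,F_7\}$; this is precisely where the specific seven cases enter, and it requires matching the labels $y_2^1,\dots,y_2^6$ with the figure so that $F_2,F_3,F_4$ are indeed the non-opposite near-pair partitions and $F_5,F_6,F_7$ the antipodal unions --- a wrong pairing makes the statement false. Once that is in hand, the whole lemma collapses to the two short coefficient-sum computations (b) and (c), which bound $\sum_{v\in S'}v$ uniformly from below by $\tfrac13-2\delta$.
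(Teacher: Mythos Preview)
Your argument is correct and is genuinely different from the paper's proof. The paper works with only six linear inequalities, one per incident edge $e_j$, giving $x_2^{a}+y_2^{b}+y_2^{c}\ge \tfrac13-\delta$ (these are your ``avoiding an adjacent pair'' triples), and then separately derives the couplings $|y_2^{q}-y_2^{q+3}|\le\delta$; it finishes by an explicit case split (Case~1 fails, then three subcases (i)--(iii)) in which the couplings propagate smallness from one $y_2^{q}$ to its antipode until one of the six inequalities is violated. Your route instead produces \emph{twelve} lower bounds of the form $\sum_{v\in S'}v\ge\tfrac13-2\delta$ --- the six ``through $a_j$'' triples in addition to the six ``avoiding $\{a_j,a_{j+1}\}$'' ones --- and then replaces the case analysis and the antipodal couplings by a single hypergraph statement: the twelve triples are exactly the minimal transversals of $\{F_1,\dots,F_7\}$. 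This is cleaner and explains structurally why seven cases suffice; the cost is the verification of the transversal classification, which (as you note) requires pinning down the correspondence between $x_2^q,y_2^q$ and the two-edge matchings of $\gamma$. Two minor remarks: (i) your bound in (c) can in fact be sharpened to $\tfrac13-\delta$ using $x_0+x_1\le 2x_0+x_1=x_3\le\delta$ directly, matching the paper's \eqref{eq:tec11}--\eqref{eq:tec16}; (ii) the classification of the minimal transversals is easiest to check in the abstract presentation $F_2=\{x_2^1\}\cup P_2$, $F_3=\{x_2^2\}\cup P_3$, $F_4=\{x_2^3\}\cup P_1$, $F_{5},F_6,F_7=P_i\cup P_j$, where $P_1=\{y_2^1,y_2^4\}$, $P_2=\{y_2^2,y_2^5\}$, $P_3=\{y_2^3,y_2^6\}$: any transversal must contain some $x_2^q$ and must meet two distinct $P_i$'s, and the constraints from $F_2,F_3,F_4$ then force the two $P_i$'s to be the pair complementary to $q$, yielding exactly twelve triples, which one checks bijectively match your twelve geometric triples.
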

\medskip
\begin{proof}
By applying \eqref{eq:convexMat} on edges $e_1$ and $a_2$ respectively (see Figure~\ref{fig:hex_C}) we get
\begin{eqnarray}
x_0+ x_1^1 +x_1^4 +x_1^5 +x_1^6+x_2^1+y_2^1+y_2^6&=&\frac{1}{3}. \label{eq:tec1}\\
x_1^4+x_2^1+y_2^4+y_2^6+x_3^2  &=&\frac{1}{3}. \label{eq:tec2}
\end{eqnarray}
Substracting \eqref{eq:tec1} and \eqref{eq:tec2}, using $\max\{x_0,x_1,x_3\}\leq \delta$ and equation~\eqref{eq:descarte} we obtain
\begin{eqnarray}
|y_2^1-y_2^4| &\leq& \delta    \label{eq:tec3}.
\end{eqnarray}

Analogously, we also have
\begin{eqnarray}
|y_2^2-y_2^5| &\leq& \delta,    \label{eq:tec4} \\
|y_2^3-y_2^6| &\leq& \delta.    \label{eq:tec5}
\end{eqnarray}

Using $\max\{x_0,x_1,x_3\}\leq \delta$, equation \eqref{eq:descarte} and applying \eqref{eq:convexMat} on edge $e_j$, for $j \in
\{1,...,6\}$ we have
\begin{eqnarray}
 x_2^{1}+y_2^{1}+y_2^{6} &\geq &1/3 -\delta, \label{eq:tec11} \\
 x_2^{2}+y_2^{2}+y_2^{1} &\geq &1/3 -\delta, \label{eq:tec12}\\
 x_2^{3}+y_2^{3}+y_2^{2} &\geq &1/3 -\delta, \label{eq:tec13}\\
 x_2^{1}+y_2^{4}+y_2^{3} &\geq &1/3 -\delta, \label{eq:tec14}\\
 x_2^{2}+y_2^{5}+y_2^{4} &\geq &1/3 -\delta, \label{eq:tec15}\\
 x_2^{3}+y_2^{6}+y_2^{5} &\geq &1/3 -\delta, \label{eq:tec16}
 %x_2^{3}+y_2^{3}+y_2^{5} &\geq &\frac{1}{3}-2\delta, \label{eq:tec21}\\
 %x_2^{1}+y_2^{4}+y_2^{6} &\geq &\frac{1}{3}-2\delta, \label{eq:tec22}\\
 %x_2^{2}+y_2^{5}+y_2^{1} &\geq &\frac{1}{3}-2\delta, \label{eq:tec23}\\
 %x_2^{3}+y_2^{6}+y_2^{2} &\geq &\frac{1}{3}-2\delta, \label{eq:tec24}\\
 %x_2^{1}+y_2^{1}+y_2^{3} &\geq &\frac{1}{3}-2\delta, \label{eq:tec25}\\
 %x_2^{2}+y_2^{2}+y_2^{4} &\geq &\frac{1}{3}-2\delta.\label{eq:tec26}
\end{eqnarray}

Now we are ready to prove the lemma. Assume by sake of contradiction that none of the cases in the lemma holds. As case 1 does not hold, we can
assume without loss of generality that one of the
following is true.
\begin{enumerate}
\item[(i)] $x_2^1<\beta$, $x_2^2, x_2^3 \geq \beta$,
\item[(ii)] $x_2^1,x_2^2<\beta$, $ x_2^3 \geq \beta$,
\item[(iii)] $x_2^1,x_2^2,x_2^3<\beta$.
\end{enumerate}

Consider the case that (i) is true. Since Case 3 does not hold and $x_2^2 \geq \beta$ we conclude that $\min\{y_2^3,y_2^6\}<\beta$. Using
Inequality~\eqref{eq:tec5} we get $y_2^3, y_2^6<\beta+\delta$. Analogously, since Case 4 does not hold and $x_2^3 \geq \beta$ we conclude that
$\min\{y_2^1,y_2^4\}<\beta$. Using Inequality~\eqref{eq:tec3} we get $y_2^1, y_2^4<\beta+\delta$. Then we have
$$x_2^1+y_2^1+y_2^6<3\beta+2\delta=1/3-\delta,$$
which contradicts inequality~\eqref{eq:tec11}.

Consider the case that (ii) is true. Similar as above, since $x_2^3\geq \beta$ and Case 4 does not hold we conclude that $y_2^1,y_2^4<\beta+\delta$.
Furthermore, using inequality~\eqref{eq:tec4} and that Case 6 does not hold, we have at least one of the following inequalities
$y_2^2,y_2^5<\beta+\delta$ or $y_2^3,y_2^6 < \beta + \delta$. If the first one is true then,
$$x_2^2+y_2^2+y_2^1<3\beta+2\delta=1/3-\delta,$$
which contradicts Inequality~\eqref{eq:tec12}. If the second one is true, then
$$x_1^2+y_2^2+y_2^6<3\beta+2\delta=1/3-\delta,$$
which contradicts Inequality~\eqref{eq:tec11}.

Finally, consider the case that (iii) is true. As Cases 5, 6 and 7 do not hold, we have that $\min\{y_2^1, y_2^4, y_2^2, y_2^5\}<\beta$,
$\min\{y_2^2, y_2^5, y_2^3, y_2^6\}<\beta$ and $\min\{y_2^1, y_2^4, y_2^3, y_2^6\}<\beta$. Without loss of generality, we can assume that $y_2^1,
y_2^2 < \beta$. Using inequalities ~\eqref{eq:tec3}~and~\eqref{eq:tec4} we conclude that $y_2^1,y_2^4<\beta+\delta$ and $y_2^2,y_2^5<\beta+\delta$.
Therefore,
$$x_2^2+y_2^2+y_2^1<3\beta+2\delta=1/3-\delta,$$
which contradicts inequality \eqref{eq:tec12}.
\end{proof}

Denote an index $i \in X_2\cup Y_2$ as \emph{long} if there are at least 2 vertices of $V(\gamma)$ contained in a single cycle of $\C_i^{\UU1}$ of
length at least 7, otherwise denote it as \emph{short}.  A set $Z \subseteq [k]$ is called long if $Z$ contains only long indices.

Consider a short index $i \in X_2 \cup Y_2$. Since the matching $M_i$ contains two edges of $\gamma$, we must be in the case where $\gamma$ intersects
exactly two cycles of $\C_i^{\UU1}$ and both of them are 6-cycles (we assumed at the beginning of the proof of this part that no cycle in $\C_i$ of
length at most 5 intersects $\gamma$). The next lemma complements what happens in each of the cases introduced in Lemma~\ref{lemma:betaTec}.

\begin{lem}
\label{lemma:betaTec2}$~$\\[-3ex]
\begin{itemize}
\item[(1)] If $X_2^1$, $X_2^2$ and $X_2^3$ are non-empty then at least one of them is long.
\item[(2)] If $X_2^1$, $Y_2^2$ and $Y_2^5$ are non-empty then at least one of them is long.
\item[(3)] If $X_2^2$, $Y_2^1$ and $Y_2^4$ are non-empty then at least one of them is long.
\item[(4)] If $X_2^3$, $Y_2^3$ and $Y_2^6$ are non-empty then at least one of them is long.
\item[(5)] If $Y_2^1$, $Y_2^4$, $Y_2^2$ and $Y_2^5$ are non-empty then at least one of them is long.
\item[(6)] If $Y_2^2$, $Y_2^5$, $Y_2^3$ and $Y_2^6$ are non-empty then at least one of them is long.
\item[(7)] If $Y_2^1$, $Y_2^4$, $Y_2^3$ and $Y_2^6$ are non-empty then at least one of them is long.
\end{itemize}
\end{lem}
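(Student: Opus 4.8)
\emph{The plan.} All seven parts are proved in the same way (parts (2)--(7) follow from (1) by relabelling the edges of $\gamma$), so I would carry out part (1) in detail and indicate the routine changes for the rest. The first job is to make the word ``short'' concrete. By Remark~\ref{remark:ciclos} a $6$-cycle appearing in a cover $\C_i^{\UU1}$ is already a $6$-cycle of the initial cover $\C_i$; hence if $i\in X_2^q\cup Y_2^q$ is short then $M_i$ contains exactly the two edges of $\gamma$ dictated by its type, the four remaining edges of $\gamma$ split into two sub-paths (with $3$ and $3$ interior $\gamma$-vertices in the $X_2$ cases, with $2$ and $4$ in the $Y_2$ cases), and each sub-path is carried by a $6$-cycle of $\C_i$. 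Tracing how such a $6$-cycle leaves $\gamma$ shows it is forced up to at most one vertex: a sub-path using $r$ vertices of $\gamma$ and leaving through the incident edges $e_s,e_t$ extends to a $6$-cycle of $G$ consisting of that sub-path, the edges $e_s$ and $e_t$, and a path of length $5-r$ between the outer endpoints $W_s,W_t$ of $e_s,e_t$. Thus a short index hands us one or two explicit $6$-cycles of $G$, each meeting $\gamma$ in a block of consecutive vertices.

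\emph{Tools.} Before the case analysis I would collect what I am allowed to use against such configurations: $G$ is simple, cubic, $2$-connected and has $n\ge 10$ vertices; $\gamma$ is an induced $6$-cycle of a reduced graph, so neither $\gamma$ nor any other $6$-cycle of $G$ has a chord; by the standing hypotheses of the proof of part (P5) no $4$-cycle of $G$ shares exactly one edge with $\gamma$; since $\gamma$ is itself a cycle of some initial cover (Remark~\ref{remark:ciclos}), Remark~\ref{remark:caminoCiclo} forbids a path of length $\le 2$, edge-disjoint from $\gamma$, between two vertices of $\gamma$ --- in particular the outer endpoints $W_1,\dots,W_6$ are six \emph{distinct} vertices, none on $\gamma$; and, crucially, all the $6$-cycles above survive operation (U1), so \emph{no} $6$-cycle of $G$ can meet three pairwise-disjoint components of any cover $\C_i^{\UU1}$.

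\emph{The core argument.} Assuming towards a contradiction that all the index sets named in the part are non-empty and contain only short indices, I would pick one representative from each, producing a small family of explicit $6$-cycles of $G$ clustered around $\gamma$ (in part (1), $6$-cycles carrying the $\gamma$-blocks $v_2v_3v_4$, $v_3v_4v_5$, $v_4v_5v_6$ together with their partners). Fix the cover $\C_i^{\UU1}$ of one representative: in it $\gamma$ meets exactly the two $6$-cycles $Q,Q'$ already described. Take another representative's $6$-cycle $R$; some of its vertices are $\gamma$-vertices, hence lie in $Q$ or $Q'$, and the remaining \emph{outer} vertices of $R$ must \emph{also} lie in $V(Q)\cup V(Q')$, for otherwise $R$ would meet three disjoint components of $\C_i^{\UU1}$ --- impossible since (U1) is exhausted. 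Forcing those outer vertices into $V(Q)\cup V(Q')$ produces identifications among the $W_j$'s, the lone interior vertices of $Q$ and $Q'$, and the $\gamma$-vertices, and each such identification hits one of the tools: two outer ends $W_s,W_t$ coinciding, or adjacent, for roots $v_s,v_t$ at $\gamma$-distance $\le 1$ gives a path of length $\le 2$ between $\gamma$-vertices; an outer end equal to the interior vertex of $Q$ or $Q'$ gives a chord of that $6$-cycle; and the leftover identifications produce a short cycle through two consecutive edges of $\gamma$, which one finishes off using $2$-connectivity, the bound $n\ge 10$, or a second pass of the (U1) argument. Running through the few sub-cases in each of the seven parts closes the proof.

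\emph{The main obstacle.} The difficulty is organisational rather than conceptual: the $6$-cycles produced by different representatives live in different covers $\C_i^{\UU1}$, so the (U1)-exhaustion step must be run \emph{across} covers, keeping precise track of which component of a given cover each vertex of another cover's $6$-cycle falls into; and one has to verify that the same dichotomy --- ``two of the $6$-cycles already merge into a component of length $\ge 7$, or else a forbidden subgraph appears'' --- handles all seven parts and all their sub-cases uniformly, without a combinatorial explosion of case pictures.
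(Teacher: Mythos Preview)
Your overall strategy --- assume all the listed index sets contain only short indices, extract the explicit $6$-cycles of $G$ this forces around $\gamma$, and then contradict the exhaustion of operation (U1) --- is exactly the paper's approach for parts (1) and (2). The paper phrases it slightly differently: it first pins down the entire neighbourhood of $\gamma$ in a figure, and then exhibits in that picture a single $6$-cycle of $G$ that meets three distinct cycles of $\C_{i_1}$, so that (U1) would have produced a cycle of length at least $16$ containing vertices of $\gamma$. Your version (take a $6$-cycle $R$ coming from a different short type and argue that its outer vertices cannot all fall into $V(Q)\cup V(Q')$ without creating a chorded $6$-cycle or a forbidden $4$-cycle) is the same argument read backwards, and would work fine for those parts.

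The real gap is your opening claim that ``parts (2)--(7) follow from (1) by relabelling the edges of $\gamma$''. They do not. Under the dihedral symmetries of the hexagon the sets $X_2^q$ (pairs of opposite edges) and $Y_2^r$ (pairs of edges at distance two) form two separate orbits, so the seven items split into three genuinely different symmetry classes: $\{(1)\}$, $\{(2),(3),(4)\}$, and $\{(5),(6),(7)\}$. The paper accordingly proves one representative from each class. More importantly, for item (5) the paper does \emph{not} use the (U1)-exhaustion argument at all: once the four short $Y$-types force the neighbourhood structure, the contradiction is that the perfect matching $M_{i_1}$ cannot be extended to the rest of the graph. Your plan, which leans entirely on ``$R$ meets three components or else a forbidden identification appears'', does not obviously produce this; in the $Y$-cases the two $6$-cycles $Q,Q'$ of $\C_{i_1}^{\UU1}$ carry $2$ and $4$ vertices of $\gamma$ respectively, the candidate $6$-cycles $R$ from the other $Y$-types need not straddle $Q$ and $Q'$ in the clean way they do in part (1), and some of the identifications you would be forced into are not immediately ruled out by your list of tools. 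So at minimum you need a separate argument for the $\{(5),(6),(7)\}$ class, and you should expect it to look different from the one you sketched for (1).
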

\begin{proof}
We only prove items 1, 2 and 5, since the proofs for the rest are analogous.
\begin{itemize}
\item[(1)] Assume for contradiction that there are short indices $i_1 \in X_2^1$, $i_2 \in X_2^2$ and $i_3 \in X_3^3$. In particular, every vertex of
$\gamma$ is in a 6-cycle of $\C_{i_p}^{\UU1}$ (and thus, of $\C_{i_p}$) for $p=1,2,3$. From this, we deduce that the neighborhood of $\gamma$ in $G$
is exactly as depicted in Figure~\ref{fig:hex_interX2}.
\begin{figure}[t]
\begin{minipage}[b]{.48\textwidth}
\centering
\includegraphics[width=.45\textwidth]{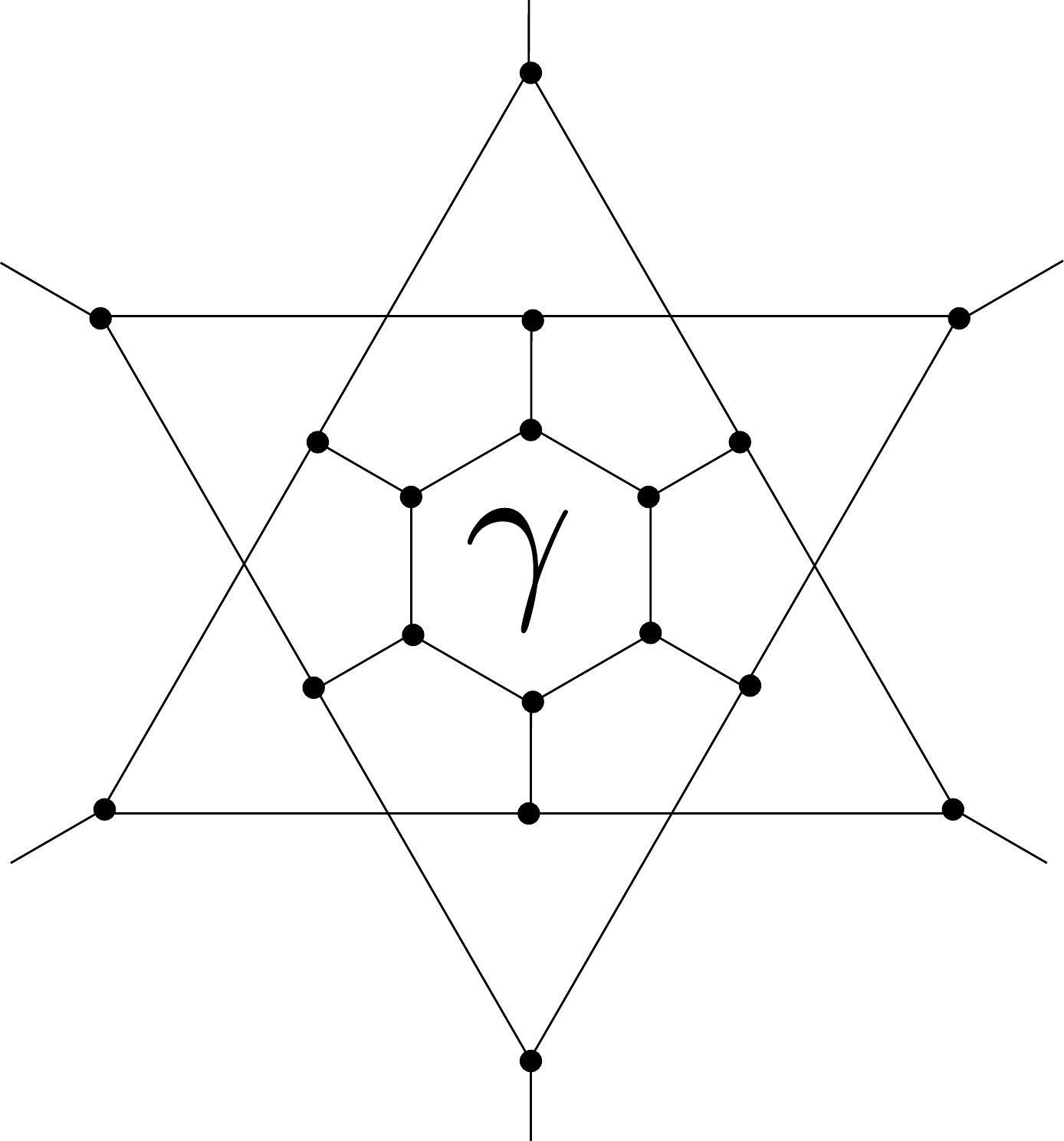}
\caption{6-cycle $\gamma$ for the case in which $X_2^1$, $X_2^2$ and $X_2^3$ are nonempty and not long.}
\label{fig:hex_interX2}
\end{minipage} \hfill
\begin{minipage}[b]{.48\textwidth}
\centering
\includegraphics[width=.9\textwidth]{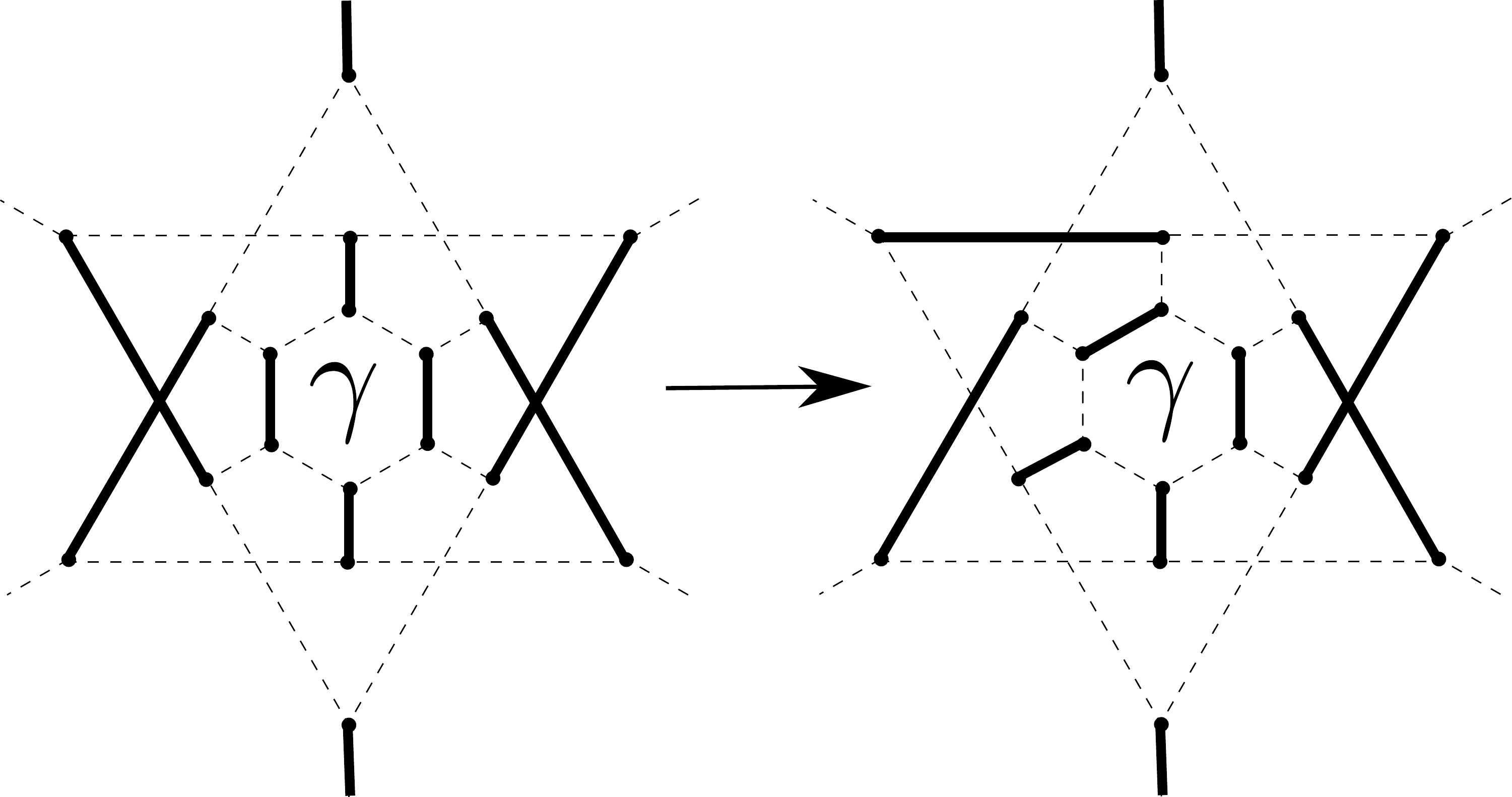}
\caption{Operation (U1) applied to cycles in $\C_{i_1}$, where $i_1$ is a short index of $X_2^1$.}
\label{fig:hex_interX2opi}
\end{minipage}
\end{figure}
 Now focus on the short index $i_1 \in X_2^1$. Since $G$ is as in Figure~\ref{fig:hex_interX2}, there are three cycles of $\C_{i_1}$ sharing each one
edge with a 6-cycle of $G$. But then, as Figure~\ref{fig:hex_interX2opi} shows, operation (U1) would have merge them into a unique
cycle $C$ in $\C_{i_1}^{\UU1}$ of length at least 16, contradicting the fact that $i_1$ is short.
\item[(2)] Assume for contradiction that there are short cycles $i_1 \in X_2^i$ $i_2 \in Y_2^2$ and $i_3 \in Y_2^5$. In particular, every vertex of
$\gamma$ is in a 6-cycle of $\C_{i_p}^{\UU1}$ (and thus, of $\C_{i_p}$) for $p=1,2,3$.  From this, we deduce that the neighborhood of $\gamma$ in $G$
is exactly as depicted in Figure~\ref{fig:hex_interXY2},
\begin{figure}[t]
\begin{minipage}[b]{.31\textwidth}
\centering
\includegraphics[width=.6\textwidth]{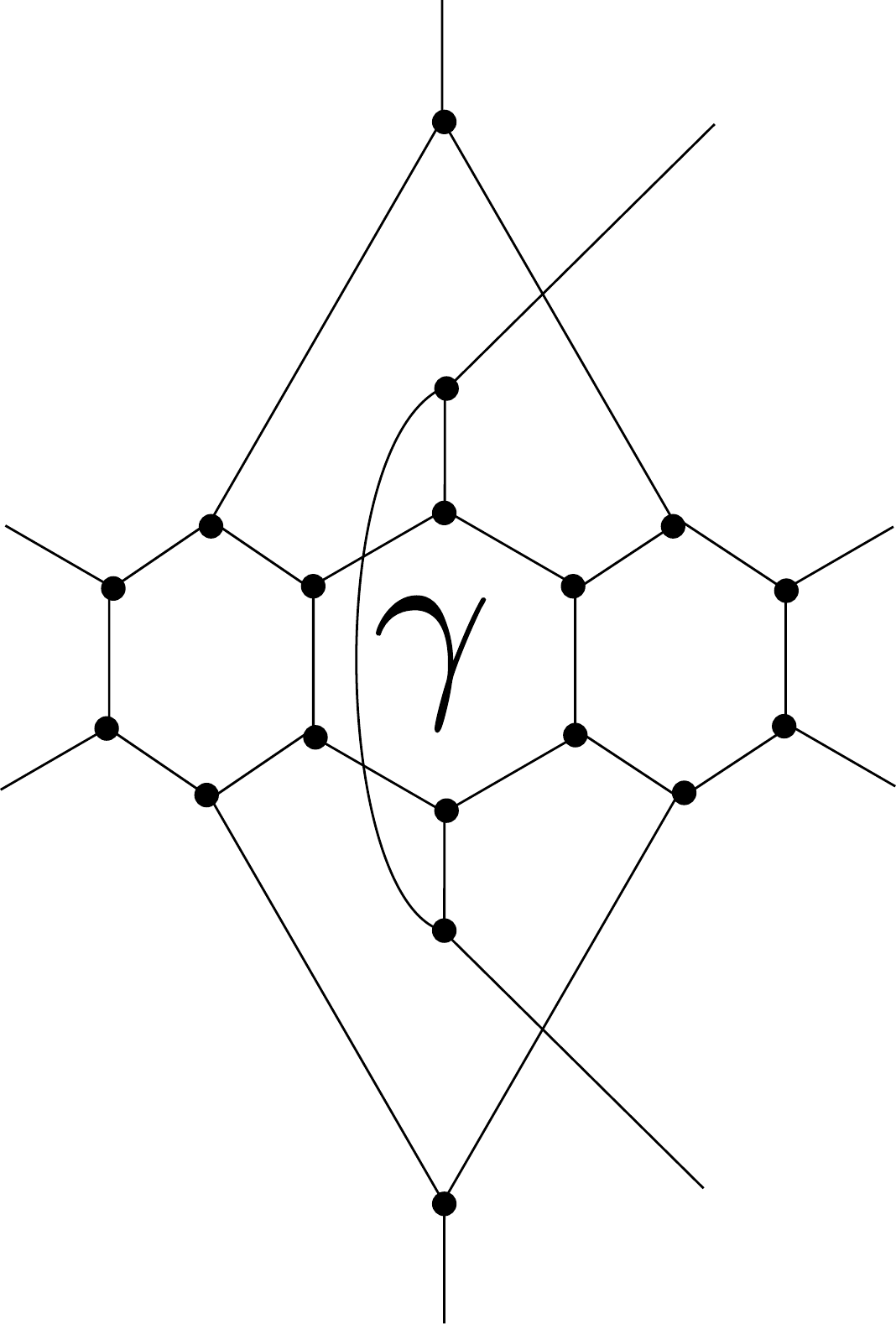}
\caption{6-cycle $\gamma$ for the case $X_2^1$, $Y_2^2$, $Y_2^5$ are nonempty and not long.}
\label{fig:hex_interXY2}
\end{minipage} \hfill
\begin{minipage}[b]{.35\textwidth}
\centering
\includegraphics[width=\textwidth]{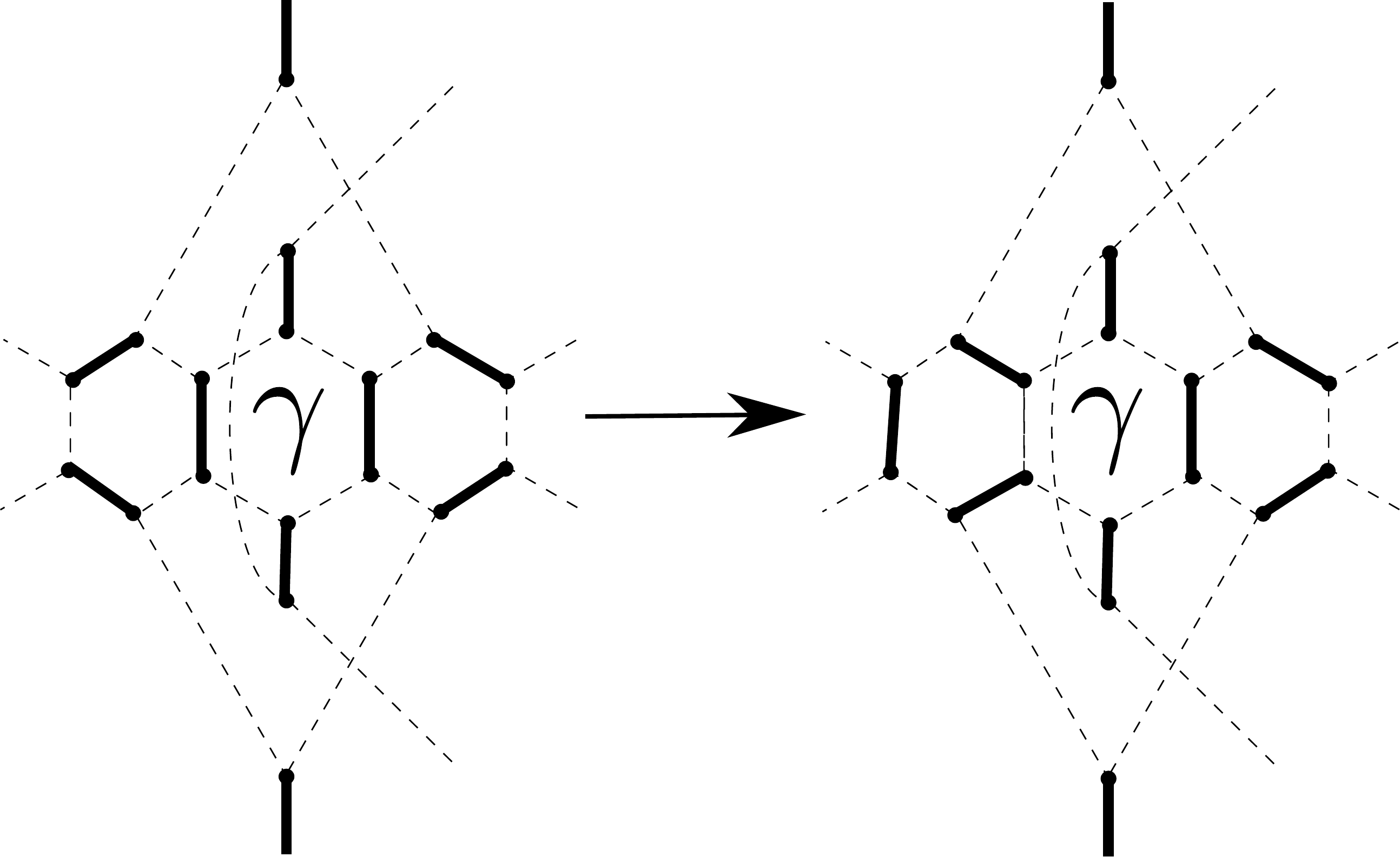}
\caption{Operation (U1) applied to cycles in $\C_{i_1}$, where $i_1$ a short index of $X_2^1$.}
\label{fig:hex_interXY2opi}
\end{minipage} \hfill
\begin{minipage}[b]{.31\textwidth}
\centering
\includegraphics[width=.8\textwidth]{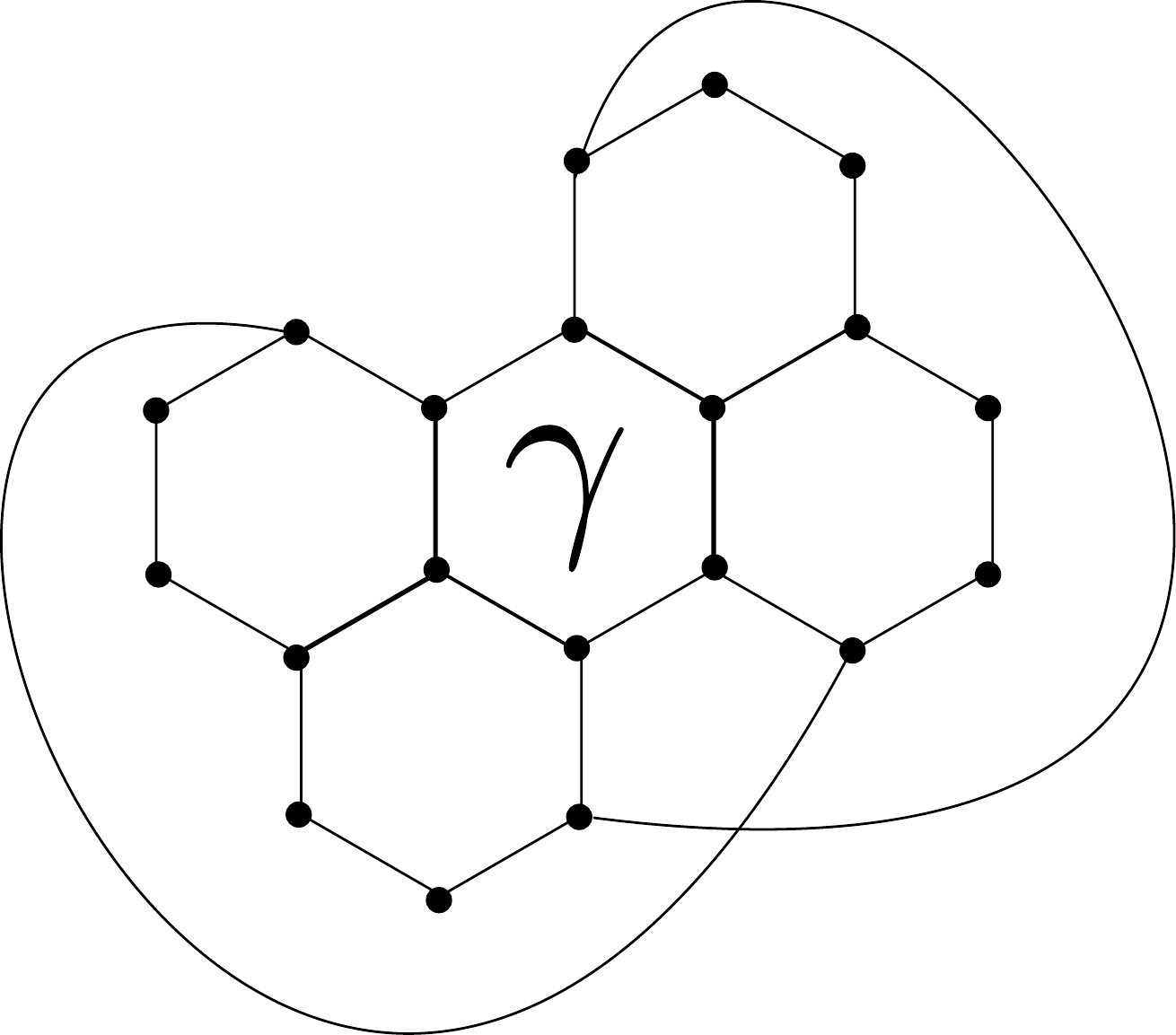}
\caption{6-cycle $\gamma$ for the case  $Y_2^1$, $Y_2^4$, $Y_2^2$, $Y_2^5$ are non-empty and not long.}
\label{fig:hex_interY2}
\end{minipage}
\end{figure}

Focus on the short index $i_1 \in X_2^1$. Since $G$ is as in Figure~\ref{fig:hex_interXY2}, there are three cycles of $\C_{i_1}$ that share one edge
each with a 6-cycle of $G$. But in this case, as Figure~\ref{fig:hex_interXY2opi} shows, operation (U1) would have merge them into a unique cycle $C$
in $\C_{i_1}^{\UU1}$ of length at least 16, contradicting the fact that $i_1$ is short.

\item[(5)] Assume for contradiction that there are short indices $i_1 \in Y_2^1$, $i_2\in Y_2^4$, $i_3\in Y_2^2$ and $i_4 \in Y_2^5$. In particular
every vertex of $\gamma$ is in a 6-cycle of $\C_{i_p}^{\UU1}$ (and thus, of $\C_{i_p}$), for $p=1,2,3,4$. Then, the neighborhood of $\gamma$ in $G$ is
exactly as depicted in Figure~\ref{fig:hex_interXY2}. But this structure shows a contradiction, as matching $M_{i_1}$ can not be completed to the
entire graph. \qedhere
\end{itemize}
\end{proof}

Using Lemmas~\ref{lemma:betaTec} and~\ref{lemma:betaTec2} we conclude that there is a long set of indices $Z \subseteq X_2 \cup Y_2$ for which
$\lambda(Z)\geq \beta$. In particular, using Lemma~\ref{lema:tecnicoZiv}, we conclude that for every $i \in Z$, there are 2 vertices $w$ in $\gamma$
with $z_i(w) \leq 9/7$, and for the remaining four vertices of $\gamma$, $z_i(w)\leq 4/3$.
Altogether, $\sum_{w \in V(\gamma)}z(w)$ is at most
{\small
\begin{align}
&6\cdot\!\biggl(x_0 \frac86  + x_1 \frac{11}{9}+(\overline{x}_2-\beta)\frac86\biggr) +  \beta \left( 2\!\cdot\!\frac{9}{7} +  4\!\cdot\! \frac{8}{6}
\right) +   x_3\left( 2\!\cdot\! \frac{8}{6} +  4\!\cdot\! \frac{10}{8} \right) \notag\\
&\leq 6(1- \beta)\frac43 + \beta \left( 2\cdot \frac{9}{7} +  4\cdot \frac{8}{6} \right) =  6\cdot\left(\frac43 -\frac{1/9-\delta}{63}\right).
\label{ineq:beta2}
\end{align}}

To end the proof, we set $\delta=2/81$, so that $(1/9-\delta)/63=\delta/18=1/729$. From Inequalities~\eqref{ineq:beta1} and \eqref{ineq:beta2} we
conclude that in any scenario,
\begin{eqnarray}
\sum_{w \in V(\gamma)}z(w) &\leq & 6\cdot(4/3 -1/729).
\end{eqnarray}

\subsubsection{Proof of part (P6) of the Main Proposition}
If none of the cases indicated by the previous parts hold then there are no 4, 5 and 6-cycles in $\Gamma(v)$. In other words, all components
containing $v$ in the final family of covers have length at least 7. By Lemma~\ref{lema:tecnicoZiv} we conclude that $z(v) \leq \max\{13/10, 9/7\}
= 13/10$.

\section{General connected cubic graphs}
In this section we give a $4/3 - \epsilon'$ approximation algorithm for the TSP of a connected cubic graph $G$, where $\epsilon'=1/183711$.
Additionally, our algorithm shows that the integrality gap of the subtour LP for general cubic graphs is also at most $4/3 - \epsilon'$.

Observe that if $G$ is not 2-connected, the number $n$ of vertices is no longer the optimum value of the subtour LP. In order to get the desired
approximation we need to consider separatedly the bridges of $G$, since every feasible tour uses them at twice.

Let $F$ be the set of bridges of $G$, and let $b=|F|$. Since $G$ is connected, the graph $G\setminus F$ is formed by exactly $(b+1)$
subcubic, 2-edge-connected components. Let $\C=\{G_1,\ldots,G_{b+1}\}$ be the collection of components of $G\setminus F$. Let 
$\C_0 \subseteq \C$
be the collection of singleton-components: they are the ones corresponding to cut-vertices of $G$ (if $G$ is 2-connected, then $|\C_0|=0$). Let also $n_i$ be the number of vertices in $G_i$ and $n_0=|\C_0|$ be the total number of singleton components.

Let $\SUB$ be the optimal subtour LP value of $G$ and $\SUB_i$ be the optimal subtour LP value of component $G_i$. Clearly, if $e$ is a bridge, the corresponding subtour LP variable has to be set to 2 in an optimal solution. Also, for every $G_i$, $\SUB_i \geq n_i$ and if $G_i$ is a singleton, then $\SUB(i)=0$. Therefore,
\begin{equation}\label{eqn:lowerbound}
\OPT \geq \SUB \geq 2b + \sum_{i=1}^{b+1} \SUB(i) \geq 2b + n-n_0,
\end{equation}
where $\OPT$ is the optimal tour value. 

The idea of our algorithm is to find a short tour in each $G_i$ and then glue the solutions into a single tour by doubling the bridges. 
Since each nonsingleton component is bridgeless and has only vertices of degree 2 and 3, we can apply M\"omke and
Svensson's algorithm~\cite{MS11} to get tour of length at most $(4/3)n_i$ on each of them. Unfortunately, that is not enough to get an overall
$(4/3-\epsilon')$-approximation for $G$. Instead, on each nonsingleton component we apply algorithms $A$ and $B$ below and
choose the solution using the fewer edges. Afterwards, we output the union of all the returned solution together with the doubled bridges.

\begin{itemize}
 \item[$A$:] Return the tour given by M\"omke and Svensson's algorithm on the component.

 \item[$B$:] Replace each vertex of degree 2 by a chorded 4-cycle, so that the resulting graph is cubic. Apply the $(4/3 - \epsilon)$-algorithm of Theorem \ref{thm:main} to the
expanded cubic 2-connected component to get a tour. Output the tour obtained by contracting the chorded 4-cycles.
\end{itemize}

\begin{thm}\label{thm:cubic1connected}
The previous algorithm returns a tour of length at most $(4/3-\epsilon')\SUB$, where $\epsilon'=\epsilon/(3+3\epsilon)=1/183711$.
\end{thm}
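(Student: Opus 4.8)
The plan is to bound the output tour $T=(\text{doubled bridges})\cup\bigcup_i T_i$ componentwise and then feed the result into the lower bound \eqref{eqn:lowerbound}, i.e.\ $\SUB\ge 2b+n-n_0$. Write $N=n-n_0=\sum_i n_i$ and let $d_i$ be the number of degree-$2$ vertices of the nonsingleton component $G_i$; since singleton components carry no edges, $|T|=2b+\sum_i|T_i|$, so the whole task is to control $\sum_i|T_i|$.

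For each nonsingleton $G_i$ I record two bounds. Algorithm $A$ gives $|T_i|\le \tfrac43 n_i$ by M\"omke--Svensson. For algorithm $B$, the expanded graph $G_i'$ is cubic and $2$-connected (a $2$-edge-connected subcubic graph has no cut vertex, and replacing its degree-$2$ vertices by $2$-connected gadgets preserves this) and has $n_i'=n_i+3d_i$ vertices, so Theorem~\ref{thm:main} yields a tour of $G_i'$ with at most $(\tfrac43-\epsilon)n_i'-2$ edges. The key small lemma is then: contracting a single chorded $4$-cycle gadget deletes at least three edges of that tour. Indeed, the two chord-endpoints lie in the interior of the gadget and must each be visited, so they contribute at least $4$ to the sum of gadget-interior tour-degrees; and if that (even) sum were exactly $4$ the gadget would split off as a doubled edge, contradicting connectivity of the tour, so it is $\ge 6$. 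Hence $|T_i|\le(\tfrac43-\epsilon)(n_i+3d_i)-2-3d_i=(\tfrac43-\epsilon)n_i+(1-3\epsilon)d_i-2$. The algorithm keeps the shorter of the two tours, so I may use whichever bound is convenient; the $B$-bound is vacuous only for the unique component with $n_i+3d_i=6$, a triangle with one incident bridge, which is Hamiltonian and contributes exactly $n_i$ edges, a loss absorbed by the slack below.

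The bridge count enters through the identity $\sum_i d_i=2b-3n_0$: a degree-$2$ vertex of $G\setminus F$ is precisely a vertex of $G$ meeting exactly one bridge, each singleton absorbs three bridge-endpoints, and no bridge has both endpoints in one component. Using the $B$-bound together with $\sum_i 1=b+1-n_0$ and $\sum_i n_i=N$,
$$|T|=2b+\sum_i|T_i|\le 2b+\Bigl(\tfrac43-\epsilon\Bigr)N+(1-3\epsilon)(2b-3n_0)-2(b+1-n_0),$$
and, discarding the nonpositive $n_0$-terms and the constant, $|T|\le(2-6\epsilon)b+(\tfrac43-\epsilon)N$. Comparing with $(\tfrac43-\epsilon')(2b+N)\le(\tfrac43-\epsilon')\SUB$, the coefficient of $N$ only requires $\epsilon'\le\epsilon$, and the coefficient of $b$ only requires $2\epsilon'\le\tfrac23+6\epsilon$; both hold for $\epsilon'=\epsilon/(3+3\epsilon)$, which equals $1/183711$ when $\epsilon=1/61236$.

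The main obstacle is the $B$-bound: establishing the ``frees up three edges'' lemma (a short parity-plus-connectivity argument about how a tour crosses the $2$-edge-cut around a $4$-vertex gadget), and then the bookkeeping that lets the term $(1-3\epsilon)d_i$ --- which is what makes algorithm $B$ degrade on components rich in degree-$2$ vertices --- be paid for by the fact that bridges are ``cheap'': the tour spends exactly $2b$ on them, precisely what \eqref{eqn:lowerbound} already charges. A routine secondary point is verifying that Theorem~\ref{thm:main} applies to every expanded component, the only exception being the triangle-plus-bridge component handled directly above.
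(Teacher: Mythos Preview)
Your argument is correct and actually takes a tighter route than the paper's. The paper keeps both bounds loose: it uses $B(i)\le(\tfrac43-\epsilon)n_i + D(i)$ (dropping the $-2$ from Theorem~\ref{thm:main} and the $-3\epsilon D(i)$ term) and only the weaker $\sum_i D(i)\le 2b$, obtaining $L(B)\le 4b+(\tfrac43-\epsilon)N$; it then genuinely needs the M\"omke--Svensson bound $L(A)\le 2b+(\tfrac43)N$ and a convex combination of $L(A)$ and $L(B)$ to reach $(\tfrac43-\epsilon')(2b+N)$. You instead keep the $-2$ per component and the $-3\epsilon d_i$ term, together with the exact identity $\sum_i d_i=2b-3n_0$, and get $|T|\le(2-6\epsilon)b+(\tfrac43-\epsilon)N$ directly, which already sits below the target line. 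So your analysis shows that algorithm~$A$ is in fact unnecessary; what buys you this is precisely the additive $-2$ in Theorem~\ref{thm:main}, summed over the $b+1-n_0$ components.

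One cosmetic issue: your small-case discussion is garbled. A ``triangle with one incident bridge'' cannot occur in a cubic graph (the other two triangle vertices would have degree $2$ in $G$), and no non-singleton component has $n_i+3d_i=6$. The relevant boundary cases are different: if $b=0$ then $G$ itself is cubic $2$-connected and the claim is immediate (Hamiltonicity handles $n\le 6$); if $b\ge1$ then every non-singleton $G_i$ has $d_i\ge1$, and one checks that the smallest possible expanded size is $n_i+3d_i=8$, for which Theorem~\ref{thm:main}'s bound still holds since such graphs are Hamiltonian and $8\le(\tfrac43-\epsilon)\cdot 8-2$. This repairs the sentence without affecting the rest of your computation.
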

\begin{proof} Let $A(i)$ and $B(i)$ be the length of the tour returned by algorithms $A$ and $B$ on component $G_i$ respectively, and let $L(A)$
(respectively $L(B)$) be the total length of the tour resulting by putting together all tours $A(i)$ (respectively, $B(i)$) and twice the collection of bridges.

Using that  $A(i)\leq (4/3)n_i$, for all nonsingleton $G_i$,
\begin{align*}
L(A) &= 2b + \sum_{i=1}^{b+1} A(i) \leq 2b +  (4/3)(n-n_0).
\end{align*}

To analyze the second algorithm we need a little more of work. 
Let $D(i)$ be the number of  vertices of degree 2 in component $G_i$ before expanding it. The expanded components has $n_i + 3D(i)$ vertices. Clearly, the tour of length $B(i)$ is obtained from the tour of length $B^*(i)$ (in the expanded component) by contracting the chorded 4-cycles.
Using that $B^*(i)$ contains at least 3 edges in each chorded 4-cycle, and Theorem~\ref{thm:main}, 
\begin{align*}
B(i)&\leq B^*(i)-3D(i) \leq (\frac43 - \epsilon)(n_i + 3D(i))-3D(i) \leq (\frac43 - \epsilon)n_i + D(i).
\end{align*}
Therefore
\begin{align*}
L(B) \leq 2b + (4/3 - \epsilon)(n-n_0) + \sum_{i=1}^{b+1}D(i) \leq 4b + (4/3-\epsilon)(n-n_0).
\end{align*}
Then, the tour we return has length at most $\min(L(A),L(B))$, which can be checked to be at most 
\begin{align*}
\left(1 + \frac{1}{3(1+\epsilon)}\right)(n-n_0+2b) \leq \left(1 + \frac{1}{3(1+\epsilon)}\right)\SUB,
\end{align*}
where the last inequality follows from \eqref{eqn:lowerbound}.
\end{proof}

\small
\bibliographystyle{plain}

\end{document}